\newcommand{\eat}[1]{\ignorespaces}
\newcolumntype{P}[1]{>{\centering\arraybackslash}p{#1}}
\newcolumntype{M}[1]{>{\centering\arraybackslash}m{#1}}
\newcolumntype{R}[1]{>{\arraybackslash}m{#1}}
\global\let\tikz@ensure@dollar@catcode=\relax
\definecolor{thelightblue}{RGB}{0,191,255}
\definecolor{theblue}{RGB}{0,0,180}
\definecolor{mygrey}{gray}{0.6}
\renewcommand*\env@matrix[1][*\c@MaxMatrixCols c]{
\hskip -\arraycolsep
\let\@ifnextchar\new@ifnextchar
\array{#1}}
\definecolor{mydarkblue}{RGB}{0, 20, 159} 
\definecolor{mydarkblue}{rgb}{0,0.08,0.45} 
\definecolor{mydarkblue}{rgb}{0,0.08,0.45} 
\DeclareSymbolFont{cmbrightop}{OT1}{cmbr}{m}{n}
\DeclareMathSymbol{\sfPsi}{\mathalpha}{cmbrightop}{9}
\let\hat\widehat
\definecolor{gray}{RGB}{150,150,150}
\definecolor{theblue}{RGB}{0, 20, 159} 
\definecolor{myyellow}{RGB}{255,255,204}
\definecolor{myred}{RGB}{255,204,204}
\definecolor{myblue}{RGB}{0,200,255}
\definecolor{mygreen}{RGB}{80,220,80}
\newcommand{\eg}{\emph{e.g.}}
\newcommand{\ie}{\emph{i.e.}}
\newtheorem{Definition}{Definition}
\theoremstyle{definition}
\newcolumntype{H}{>{\setbox0=\hbox\bgroup}c<{\egroup}@{}}
\newcommand{\norm}[1]{\|#1\|}
\newcommand{\figsz}{0.24}
\newtheorem{theorem}{Theorem}
\newcommand{\be}{\begin{equation}}
\newcommand{\ee}{\end{equation}}
\newcommand{\bea}{\begin{eqnarray}}
\newcommand{\eea}{\end{eqnarray}}
\def\up#1{^{(#1)}}
\let\hat\widehat
\let\tilde\widetilde
\def\E{{\mathbb E}}
\def\Pr{{\mathbb P}}
\def\cF{{\mathcal F}}
\def\cH{{\mathcal H}}
\def\cZ{{\mathcal Z}}
\def\cM{{\mathcal M}}
\def\var{\mathop{\mathrm{Var}}}
\def\argmin{\mathop{\mathrm{arg\,min}}}
\newcommand{\parab}[1]{\paraspace\noindent{\textbf{#1}}}
\newcommand{\paraspace}{\vspace{0.05in}}
\algrenewcommand{\alglinenumber}[1]{\fontsize{6.5}{7}\selectfont#1}
\algrenewcommand{\alglinenumber}[1]{\scriptsize#1:}
\xpatchcmd{\algorithmic}{\setcounter}{\algorithmicfont\setcounter}{}{}
\providecommand{\algorithmicfont}{}
\newcommand{\algrule}[1][.2pt]{\par\vskip.5\baselineskip\hrule height #1\par\vskip.5\baselineskip}
\title{Temporal Network Sampling}
\author{
 Nesreen K. Ahmed \\
  Intel Labs\\
  Santa Clara, CA 95054 \\
  \texttt{nesreen.k.ahmed2@intel.com} \\
   \And
 Nick Duffield \\
  Texas A\&M University\\
  College Station, TX 77843 \\
  \texttt{duffieldng@tamu.edu} \\
  \And
 Ryan A. Rossi \\
  Adobe Research\\
  San Jose, CA 95110 \\
  \texttt{rrossi@adobe.com} \\
}
\begin{document}

\maketitle

\begin{abstract}
Temporal networks representing a stream of timestamped edges are seemingly ubiquitous in the real-world.
However, the massive size and continuous nature of these networks make them fundamentally challenging to analyze and leverage for descriptive and predictive modeling tasks. In this work, we propose a general framework for temporal network sampling with unbiased estimation. We develop online, single-pass sampling algorithms and unbiased estimators for temporal network sampling. The proposed algorithms enable fast, accurate, and memory-efficient statistical estimation of temporal network patterns and properties. In addition, we propose a temporally decaying sampling algorithm with unbiased estimators for studying networks that evolve in continuous time, where the strength of links is a function of time, and the motif patterns are temporally-weighted. In contrast to the prior notion of a $\bigtriangleup t$-temporal motif, the proposed formulation and algorithms for counting temporally weighted motifs are useful for forecasting tasks in networks such as predicting future links, or a future time-series variable of nodes and links. Finally, extensive experiments on a variety of temporal networks from different domains demonstrate the effectiveness of the proposed algorithms. A detailed ablation study is provided to understand the impact of the various components of the proposed framework.
\end{abstract}

\section{Introduction}
\label{sec:intro}

Networks provide a natural framework to model and analyze complex systems of interacting entities in various domains (\eg, social, neural, communication, and technological domains)~\cite{newman2018networks,newman2006structure}. 
Most complex networked systems of scientific interest are continuously evolving in time, while entities interact continuously, and different entities may enter or exit the system at different times. 
The accurate modeling and analysis of these complex systems largely depend on the network representation~\cite{holme2015modern}. Therefore, it is crucial to incorporate both the heterogeneous \emph{structural} and \emph{temporal} information into network representations~\cite{nguyen2018continuous,rossi2012time,sharan2008temporal}. By incorporating the temporal information alongside the structural information, we obtain time-varying networks, also called \emph{temporal networks}~\cite{holme2012temporal}. 

In temporal networks, the nodes represent the entities in the system, and the links represent the interactions among these entities across time. Unlike static networks, nodes and links in temporal networks become active at certain times, leading to changes in the network structure over time~\cite{li2017fundamental}. Temporal networks have been recently used to model and analyze dynamic and streaming network data, \eg, to analyze and model information propagation~\cite{eckmann2004entropy,rocha2017dynamics}, epidemics~\cite{peixoto2018change}, infections~\cite{masuda2013predicting}, user influence~\cite{chen2013information,goyal2010learning}, among other applications~\cite{nguyen2018continuous,sharan2008temporal}. However, there are fundamental challenges to the analysis of temporal networks in real-world applications. One major challenge is the massive size and streaming characteristics of temporal network data that are generated by interconnected systems, since all interactions must be stored at any given time (\eg, email communications)~\cite{ahmed2014network}. As a result, several algorithms that were studied and designed for static networks that can fit in memory are becoming computationally intensive~\cite{ahmed2014graph}, due to their struggle to deal with the size and streaming properties of temporal networks. 

One common practice is to aggregate interactions in discrete time windows (time bins) (\eg, aggregate all interactions that appear in $1$-day or $1$-month), these are often called static \emph{graph snapshots}~\cite{soundarajan2016generating}. Given a graph snapshot, traditional techniques can be used to study and analyze the network (\eg, community detection, model learning, node ranking). Unfortunately, there are multiple challenges with employing these static aggregations. First, the choice of the size and placement of these time windows may alter the properties of the network and/or introduce a bias in the description of network dynamics~\cite{valdano2018epidemic,holme2019impact,sulo2010meaningful,caceres2011temporal}. For example, a small window size will likely miss important network sub-structures that span multiple windows (\eg, multi-node interactions such as motifs)~\cite{nguyen2018continuous}. On the other hand, a large window size will likely lose the temporal patterns in the data~\cite{fenn2012dynamical}. Second, modeling and analyzing bursty network traffic will likely be impacted by the placement of time windows. Finally, it is costly to consistently and reliably maintain these static aggregates for real-time applications~\cite{ahmed2014graph,ahmed2014network}. For example, it is often difficult to consistently gather these snapshots of graphs in one place, at one time, in an appropriate format for analysis. Thus, aggregates of network interactions in discrete time bins may not be an appropriate representation of temporal networks that evolve on a \emph{continuous-time} scale~\cite{flores2018eigenvector,aggarwal2006biased}, and can often lead to errors and bias the results~\cite{zino2016continuous,valdano2018epidemic,yang2018influential,nguyen2018continuous,zino2017analytical,soundarajan2016generating}.  

Statistical sampling is also common in studying networks, where the goal is to select a \emph{representative} sample (\ie, subnetwork) that serves as a proxy for the full network~\cite{lohr2019sampling}. Sampling algorithms are fundamental in studying and understanding networks~\cite{newman2018networks,kolaczyk2014statistical,ahmed2014network}. A sampled network is called \emph{representative}, if the characteristics of interest in the full network can be accurately estimated from the sample. Statistical sampling can provide a versatile framework to model and analyze network data. For example, when handling big data that cannot fit in memory, collecting data using limited storage/power electronic devices (\eg, mobile devices, RFID), or when the measurements required to observe the entire network are costly (\eg, protein interaction networks~\cite{stumpf2005subnets}).

While many network sampling techniques are studied in the context of small static networks that can fit entirely in memory~\cite{kolaczyk2014statistical} (\eg, uniform node sampling~\cite{stumpf2005subnets}, random walk sampling\cite{leskovec2006sampling}), recently there has been a growing interest in sampling techniques for streaming network data in which temporal networks evolve continuously in time~\cite{cormode2014second,jha2013space,ahmed2014graph,stefani2017triest,ahmed2017sampling,ahmedijcai18sampling,simpson2015catching,jha2015counting,pavan2013counting,lim2015mascot,ray2019efficient,choudhury2013streamworks} (see~\cite{ahmed2014network,mcgregor2014graph} for a survey). Most existing methods for sampling streaming network data have focused on the primary objective of selecting a sample to estimate static network properties (\eg, point statistics such as global triangle count or clustering coefficient). This poses an interesting and important question of how representative these samples of the characteristics of temporal networks that evolve on a \emph{continuous-time} scale~\cite{aggarwal2018extracting}, such as the link strength~\cite{xiang2010modeling}, link persistence~\cite{clauset2012persistence}, burstiness~\cite{barabasi2005origin}, temporal motifs~\cite{kovanen2011temporal}, among others~\cite{holme2012temporal,AggarwalSDM2020,larock2020understanding,fond2018designing,la2017ensemble}. Although this question is important, it has thus far not been addressed in the context of streaming and online methods.   

In this paper, we introduce an online importance sampling framework that extracts continuous-time dynamic network samples, in which the strength of a link (\ie, edge between two nodes) can evolve continuously as a function of time. Our proposed framework samples interactions to include in the sample based on their importance weight relative to the variable of interest (\ie, link strength), this enables sampling algorithms to adapt to the topological changes of temporal networks. Also, our proposed framework allows online and incremental updates, and can run efficiently in a single-pass over the data stream, where each interaction is observed and processed once upon arrival. We present an unbiased estimator of the link strength, and extend our formulation to unbiased estimators of general subgraphs in temporal networks. We also introduce the notion of \emph{link-decay} network sampling, in which the strength of a sampled link is allowed to decay exponentially after the most recent update (\ie, recent interaction). We show unbiased estimators of link strength and general subgraphs under the link-decay model. 

\smallskip
\noindent\textbf{Summary of Contributions}:
This work makes the following key contributions:
\begin{itemize}
    \item We propose a general temporal network sampling framework for unbiased estimation of temporal network statistics. We develop online, single-pass, memory-efficient sampling algorithms and unbiased estimators.
    
    \item We propose a temporally decaying sampling algorithm with unbiased estimators for studying networks that evolve in continuous time, where the strength of links is a function of time, and the motif patterns and temporal statistics are temporally weighted accordingly.
    This temporal decay model is more useful for real-world applications such as prediction and forecasting in temporal networks.
    
    \item The proposed algorithms enable fast, accurate, and memory-efficient statistical estimation of temporal network patterns and statistics.
    
    \item Experiments on a wide variety of temporal networks demonstrate the effectiveness of the framework.
\end{itemize}

\begin{table}[t!]
\centering
\caption{Summary of notation. 
}
\renewcommand{\arraystretch}{1.15} 
\scalebox{1.0}{
\centering 
\small
\setlength{\tabcolsep}{20pt} 
\label{table:notation}
\hspace*{-2.5mm}
\begin{tabularx}{0.6\linewidth}{@{}r X@{}} 
\toprule
$G$ & Temporal network \\ 
$E$ & Set of interaction events \\
$K$ & Set of Unique Edges (links) \\
$E_t$ & Set of interactions $\{e_s: s\le t\}$ \\
$K_t$ & Set of unique interactions in $E_t$ arriving by time $t$ \\
$V_t$ & Set of vertices that appeared in $K_t$ \\
$\tilde G$ & Graph induced by unique edges \\
$N, M$ & number of nodes $N = |V|$ and edges $M = |K|$ in $\tilde G$\\
$C_{e,t}$ & Multiplicity (weight) of edge $e$ at time $t$ \\
$\hat{C}_{e,t}$ & Estimated multiplicity of edge $e$ at time $t$ \\
$\mathbf{C}_t$ & time-dependent adjacency matrix of link strength at time $t$ \\
$\hat{K}$ & Reservoir of sampled edges \\
$m$ & Number of sampled edges (Sample Size), $m = |\hat{K}|$ \\
$\delta$ & Link decay rate \\
$\phi$ & Initial weight\\
${C}_\mathcal{M}$ & Weighted count of motif pattern $\mathcal{M}$ \\
$\hat{C}_\mathcal{M}$ & Estimated weighted count of motif pattern $\mathcal{M}$ \\
$V(e)$ & Unbiased estimator of variance of edge $e$\\
$w(e)$ & Sampling weight of edge $e$ \\
$r(e)$ & Rank of edge $e$ in the sample\\
\bottomrule
\end{tabularx}
}
\end{table}

\section{Online Sampling Framework}\label{sec:framework}

Here, we introduce our proposed online importance sampling framework that extracts continuous-time dynamic network samples from temporal networks. See Table~\ref{table:notation} for a summary of notations.

\begin{figure*}
\centering
\includegraphics[width=0.7\linewidth]{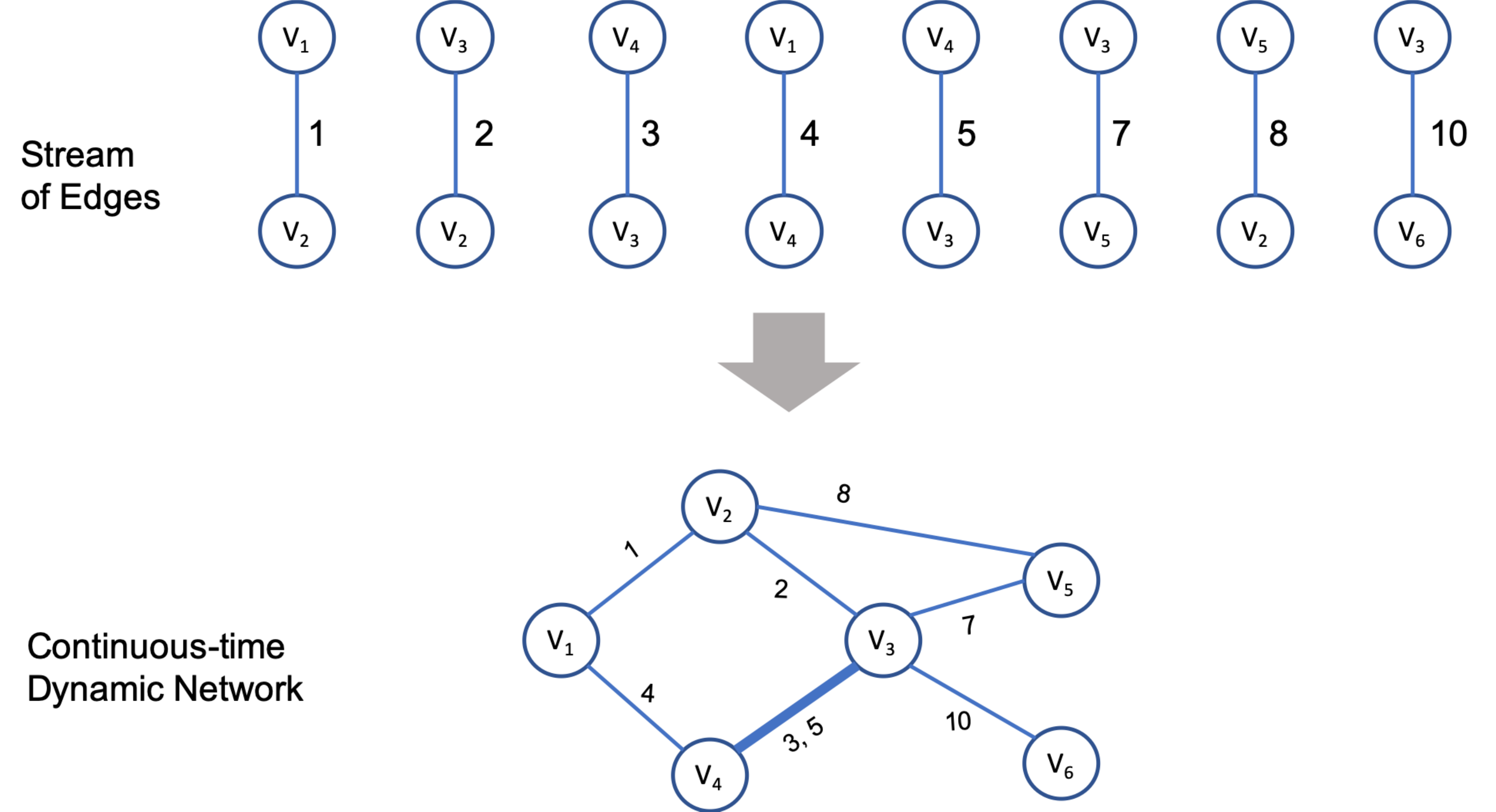}
\caption{
An illustrative example of streaming temporal networks.
}
\label{fig:ct_net}
\end{figure*}

\parab{Overview.} We propose an online sampling framework for temporal streaming networks which seeks to construct continuous-time, fixed-size, dynamic sampled network that can capture the evolution of the full network as it evolves in time as a stream of edges. Our framework assumes an input temporal network represented a stream of interactions links at certain times, and each interaction can be observed and/or processed only once. Sampling algorithms are allowed to store only $m$ sampled edges, and can process the stream in a single-pass. If any two vertices interact at time $t = \tau$, their edge strength increases by $1$. Figure~\ref{fig:ct_net} shows an illustrative example of how a continuous-time dynamic network can be formed from a stream of edges, where the edge strength is a function of the interactions among vertices over time. 

\subsection{Notation \& Problem Definition}\label{sec:prob_def}

\parab{Edges, Interactions, and Streaming Temporal Networks.} Our framework seeks to construct a continuous-time \emph{sampled} network that can capture the characteristics and serve as a proxy of an input temporal network as it evolves continuously in time. In this paper, we draw an important distinction between \emph{interactions} and \emph{edges}. An interaction (contact) between two entities is an event that occurred at a certain point in time (\eg, an email, text message, physical contact). On the other hand, an edge between two entities represents the link or the relationship between them, and the weight of this edge represents the strength of the relationship (\eg, strength of friendship in social network~\cite{xiang2010modeling}). We use $G$ to denote an input temporal network, where a set of vertices $V$ (\eg, users or entities) are interacting at certain times. Let $(i,j,t) \in E$ denote the interaction event that takes place at time $t$, where $i,j \in V$, $E$ is the set of interactions, $E_t$ is the set of interactions up to time $t$, and $K$ is the set of \emph{unique} edges ($e = (i,j) \in K$) in the temporal network $G$. We assume these interactions are instantaneous (\ie, the duration of the interaction is negligible), \eg, email, tweet, text message, etc. Let $C_e$ denote the multiplicity (weight) of an edge $e = (i,j)$, with $C_{e,t}$ being the multiplicity of the edge at time $t$, \ie, the number of times the edge appears in interactions up to time $t$. Finally, we define a streaming temporal network $G$ as a \emph{stream of interactions} $e_1,\dots, e_t,\dots,e_T$, with $e_t = (i,j,t)$ is the interaction between $i,j \in V$ at time $t$.

We note that the term unique edges refers to the set of relationships that exist among the vertices. On the other hand, the term interaction refers to an event that occurred at some point in time between two vertices. As such, interactions can happen more than once between two vertices, while unique edges represent the existing relationship between two vertices. For example, in Figure~\ref{fig:ct_net}, the edge $e = (v_3,v_4)$ has two interactions that happen at times $t= 3, 5$ respectively. Hence, the strength of $e = (v_3,v_4)$ is higher compared to other edges. 

\parab{Continuous-time Dynamic Network Samples.} Consider a set of $N = |V|$ interacting vertices, with their interactions represented as a streaming temporal network $G$, \ie, $e_1,\dots, e_t,\dots,e_T$. Let $\mathbf{C}_t$ be the time-dependent adjacency matrix, whose entries $C_{ij,t} \geq 0$ represent the relationship strength between vertices $i,j \in V$ at time $t$. The relationship strength is a function of the edge multiplicity and time. Our framework seeks to construct, maintain, and adapt a continuous-time dynamic sampled network, represented by the matrix $\hat{\mathbf{C}}_t$ that serves as unbiased estimator of $\mathbf{C}_t$ at any time point $t$, where the expected number of non-zero entries in $\hat{\mathbf{C}}_t$ is at most $m$, and $m$ is the sample size (\ie, maximum number of sampled edges). Our framework makes the following assumptions: 
\begin{itemize}
    \item We assume an input temporal network represented a stream of interactions at certain times, and each interaction can be processed and observed only once.
    \item Any algorithm can only store $m$ sampled edges, and is allowed a single-pass over the stream.
    \item If two vertices interact at time $t = \tau$, their edge strength increases by $1$. 
\end{itemize}

\subsection{Link-Decay Network Sampling}\label{sec:decay}

Here, we introduce a novel online sampling framework that seeks to construct and maintain a \emph{sampled} temporal network in which the strength of a link (\ie, relationship between two friends) can evolve continuously in time. Since the sampled network serves as a proxy of the full temporal network, the sampled network is expected to capture both the structural and temporal characteristics of the full temporal network.

\parab{Temporal Link-Decay.} Assume an input stream of interactions, where interactions are instantaneous (\eg, email, text message, and so on). For any pair of vertices $i,j \in V$, with a set of interaction times $\tau^{(1)}, \tau^{(2)}, \dots, \tau^{(T)}$, where $0 < \tau^{(1)} < \dots <\tau^{(T)}$, and their first interaction time is $\tau^{(1)} > 0$. Our goal is to estimate the strength of the link $e = (i,j)$ as a function of time, in which the link strength may increase or decrease based on the frequency and timings of the interactions. 
Consider two models of constructing an adaptive sampled network represented as a time-dependent adjacency matrix $\mathbf{C}_t$, whose entries represent the link strength $C_{ij,t}$. 

The first model is the \emph{no-decay} model, in which the link strength does not decrease over time, \ie, $C_{ij,\tau^{(2)}} = C_{ij,\tau^{(1)}} + 1$. Thus, $C_{ij,t}$ is the multiplicity or a function of the frequency of an edge, and we provide an unbiased estimator for this in Theorem~\ref{thm:nonu:C}. However, the no decay model assumes the interactions are fixed once happened, taking only the frequency of interactions as the primary factor in modeling link strength, which could be particularly useful for certain applications, such as proximity interactions (\eg, link strength for people attending a conference).  

The second model is the link-decay model, in which the strength of the link decays exponentially after the most recent interaction, to capture the temporal evolution of the relationship between $i$ and $j$ at any time $t$. Let the initial condition of the strength of link $(i,j)$ be $C_{ij,t_0} = 0$. Then, $C_{ij,t} = \sum_{s = 0}^{T} \theta(t - \tau^{(s)})\;{\rm e}^{-(t - \tau^{(s)})/\delta}$, where $\theta(t)$ is the unit step function, and the decay factor $\delta > 0$. We formulate the link strength as a stream of events (\eg, signals or pulses), that can be adapted incrementally in an online fashion, so the strength of link $e = (i,j)$ at time $t$ follows the equation,
\vspace{-1mm}
\begin{equation}
    C_{e,t} = C_{e,t-1} * {\rm e}^{-1/\delta}
\end{equation}
\noindent
And if a new interaction occurred at time $t$, the link strength follows, 
\vspace{-2mm}
\begin{equation}
    C_{e,t} = C_{e,t-1} * {\rm e}^{-1/\delta} + 1
\end{equation}

Our approach discounts the contributions of interactions to the time-dependent link strength as a function of the interaction age, while adapting the sampling weight of the link to its non-discounted multiplicity. This allows us to preferentially retain the relatively small proportion of highly active links, while the capability to temporally weight motif and subgraphs resides in the estimator. This is distinct from previous approaches for temporal sampling in which the retention sampling probability for single items were, \eg,  exponentially discounted according to age, without regard to item frequencies as a criterion for retention; see \cite{10.1109/ICDE.2009.65}.
    
We formulate an unbiased estimator for the link-decayed strength as a function of the link multiplicity in Section ~\ref{sec:topology-adaptive-est} (see Theorem~\ref{thm:decay}). All the proposed estimators can be computed and updated efficiently in a single-pass streaming fashion using Algorithm~\ref{alg-TNS}. In addition to exponential decay, the unbiased estimators generalize and can be easily extended to other decay functions, such as polynomial decay. Note that in this paper, we use the term link-decay to refer to exponential link-decay. 

Link decay has major advantages in network modeling that we discuss next. First, it allows us to utilize both the frequency and timings of interactions in modeling link strength. Second, it is more realistic, allowing us to avoid any potential bias that may result from partitioning interactions into time windows. Link decay is also flexible, by tuning the decay factor $\delta$, we can determine the degree at which the strength of the link ages (\ie, the half-life of a link $t_{1/2} = \delta \ln 2$). We also note that the link decay model and the unbiased estimator in Theorem~\ref{thm:decay} can generalize to allow more flexibility, by tuning the decay parameter on the network-level, the node-level, or the link-level, to allow different temporal scales at different levels of granularity.   

\parab{Temporally Weighted Motifs.} We showcase our formulation of estimated link strength by estimating the counts of motif frequencies in continuous time. We introduce the notion of \emph{temporally weighted motifs} in Definition~\ref{def:temp_motif}. Temporally weighted motifs are more meaningful and useful for practical applications especially related to prediction and forecasting where links and motifs that occur more recently as well as more frequently are more important than those occurring in the distant past. 

\begin{Definition}{\sc (Temporally Weighted Motif)}\label{def:temp_motif}
A temporally weighted network motif $\mathcal{M}$ is a small induced subgraph pattern with $n$ vertices, and $m$ edges, such that $C_\mathcal{M}$ is the time-dependent frequency of $\mathcal{M}$ and is subject to temporal decay, and $C_{\mathcal{M},t} = \sum_{h \in H_t} \prod_{e \in h} C_{e,t}$, where $H_t$ is the set of observed subgraphs isomorphic to $\mathcal{M}$ at time $t$, and $C_{e,t}$ is the link strength. 
\end{Definition}

In general, motifs represent small subgraph patterns and the motif counts were shown to reveal fundamental characteristics and design principles of complex networked systems~\cite{milo2002network,ahmed2015efficient,benson2016higher,ahmed2017graphlet}, as well as improve the accuracy of machine learning models~\cite{ahmed2018learning,HONE}. While prior work focused on aggregating interactions in time windows and analyze the aggregated graph snapshots~\cite{rossi2012time,sharan2008temporal,Taheri-www2019}, others have focused on aggregating motifs in $\bigtriangleup t$ time bins, and defined motif duration~\cite{liu2019sampling}. These approaches rely on judicious partitioning of interactions in time bins, and would certainly suffer from the limitations discussed earlier in Section~\ref{sec:intro}. Time partitioning may obfuscate or dilute temporal and structural information, leading to biased results. Here, we define instead a temporal weight or strength for any observed motif, which is a function of the strength of the links participating in the motif itself. Similar to the link strength, the motif weight is subject to time decay. This formulation can also generalize to models of higher-order link decay (\ie, decaying hyperedges in hypergraphs), we defer this to future work. The definition in~\ref{def:temp_motif} can be computed incrementally in an online fashion, and subject to approximation via sampling and unbiased estimators. We establish our sampling methodology in Algorithm~\ref{alg-TNS} (see line~\ref{func:subgraph}) and unbiased estimators of subgraphs in Section~\ref{sec:topology-adaptive-est} (see Theorem~\ref{thm:nonu:C}).

\section{Proposed Algorithm}\label{sec:alg-desc}

{
\algrenewcommand{\alglinenumber}[1]{\fontsize{8}{9}\selectfont#1}
\algnotext{endpar}
\begin{figure}[t!]
\begin{center}
\begin{algorithm}[H]
\caption{Online Temporal Network Sampling (Online-TNS)}
  \label{alg-TNS}
\begin{spacing}{1.0}
\fontsize{8}{9}\selectfont
\algrenewcommand{\alglinenumber}[1]{\fontsize{6.5}{7}\selectfont#1}
\begin{flushleft}
        \textbf{INPUT:} Sample size $m$, Motif pattern $\mathcal{M}$, initial weight $\phi$ \\
        \textbf{OUTPUT:} Estimated network $\mathbf{C}_t$, Estimated motif count $\hat{C}_\mathcal{M}$ 
\end{flushleft}
\begin{algorithmic}[1]
 \Procedure{\textsc{$\text{Online-TNS}$}}{$m$} 
 \State $\hat{K}=\emptyset$;
  $z^*\!= 0$; $\hat{C}_\mathcal{M} = 0$ \textcolor{theblue}{\Comment{Initialize edge sample \& threshold}}
  \While{(new interaction $e_t = (i,j,t)$)} \label{line:new_inter} 
        \State $e = (i,j)$
        \State \Call{\textsc{$\text{Subgraph-Estimation}$}}{$e$} \textcolor{theblue}{\Comment{Update Estimated Motif}}
        \If{($e \in \hat{K}$)} \textcolor{theblue}{\Comment{Edge exists in $\hat{K}$}} 
        \State \Call{\textsc{$\text{update-edge-strength}$}}{$e$} \textcolor{theblue}{\Comment{Update
          Edge Strength}} \label{line:update_strength}
       \State $\hat{C}(e)= \hat{C}(e)+1$ \textcolor{theblue}{\Comment{Increment edge multiplicity}} \label{line:update_edge_st}
        \State $w(e) = w(e) + 1$ \textcolor{theblue}{\Comment{Adapt importance weight}} \label{line:weight_adapt}
        \State $r(e)= w(e)/u(e)$ \textcolor{theblue}{\Comment{Adapt edge rank}} \label{line:rank_update}
        \State $\tau(e) = t$ \textcolor{theblue}{\Comment{Last Interaction Time}} \label{line:update_edge_en}
     \Else 
     \State \textcolor{theblue}{//Initialize parameters for new edge}
     \State $p(e)= 1;\ \hat{C}(e)= 1$; $V(e)= 0$; 
      \State $u(e)= \text{Uniform}\;(0,1]$ \textcolor{theblue}{\Comment{Initialize Uniform r.v.}} 
      \State $w(e)= \phi$
      \textcolor{theblue}{\Comment{Initialize edge weight}}
      \State $r(e)= w(e)/u(e)$ \textcolor{theblue}{\Comment{Compute edge rank}} \label{line:rank_new}
      \State $\tau(e) = t$
      \State $\hat{K} = \hat{K} \cup \{e\}$  \textcolor{theblue}{\Comment{Provisionally include $e$ in sample}} \label{line:addto_sample}
      \If{($|\hat{K}| > m$)} 
      \State $e^* = \argmin_{e'\in \hat{K}} r(e')$ \textcolor{theblue}{\Comment{Find edge with min rank}} \label{line:findmin-edge}
        \State $z^*=\max\{z^*, r(e^*)\}$ \textcolor{theblue}{\Comment{Update threshold}} \label{line:thresh}
        \State remove $e^*$ from $\hat{K}$ \label{line:removemin-edge}
        \State delete $\{w(e^*),u(e^*),p(e^*),\hat{C}(e^*),V(e^*)\}$
        \EndIf
        \EndIf
        \EndWhile
        \EndProcedure
\algrule
\Procedure{\textsc{$\text{update-edge-strength}$}}{$\tilde e$} \label{func:edge_strength}
    \State \textcolor{theblue}{// Function to estimate edge strength (No-decay)} 
    \If{($z^*>0$)}
    \State $q = \min\{1, w(\tilde e)/(z^* p(\tilde e))\}$
    \State $\hat{C}(\tilde e) = \hat{C}(\tilde e)/q$ \textcolor{theblue}{\Comment{Estimate edge strength}}
    \State $V(\tilde e)= V(\tilde e)/q + (1-q)*\hat{C}(\tilde e)^2$
    \State $p(\tilde e)= p(\tilde e)*q$
    \EndIf
\EndProcedure
\algrule
\Procedure{\textsc{$\text{update-edge-decay}$}}{$\tilde e$} \label{func:edge_strength_decay}
\State \textcolor{theblue}{// Function to estimate edge strength (Link-decay)}
    \If{($z^*>0$)}
    \State $q = \min\{1, w(\tilde e)/(z^* p(\tilde e))\}$
    \State $\hat{C}(\tilde e) = {\rm e}^{-\delta(t - \tau(\tilde e))} * \hat{C}(\tilde e)/q$ \textcolor{theblue}{\Comment{Estimate link strength}}
    \State $V(\tilde e)= V(\tilde e)/q + (1-q)*\hat{C}(\tilde e)^2$
    \State $p(\tilde e)= p(\tilde e)*q$
    \EndIf
\EndProcedure
\algrule
\Procedure{\textsc{$\text{Subgraph-Estimation}$}}{$\tilde e$} \label{func:subgraph}
\State \textcolor{theblue}{//Set of Subgraphs isomorphic to $\mathcal{M}$ and completed by $\tilde e$}
\State $H = \{h\subset \hat{K}\cup\{\tilde e\}: h \owns {\tilde e}, h \cong \mathcal{M}\}$ \label{line:tris}
\For{$h\in H$} 
    \For{$j\in h\setminus{\tilde e}$}  
      \State \Call{\textsc{$\text{update-edge-strength}$}}{$j$} \textcolor{theblue}{\Comment{Update
        other edges}}
    \EndFor
    \State \textcolor{theblue}{//Increment estimated count of motif $\mathcal{M}$}
    \State $\hat{C}_\mathcal{M} = \hat{C}_\mathcal{M} + \prod_{j \in h\setminus\{\tilde e\}}
        \hat{C}(j)$ \label{line7} 
\EndFor  
\EndProcedure
\end{algorithmic}
\end{spacing}
\end{algorithm}
\end{center}
\end{figure}
}

In this paper, we propose an online sampling framework for temporal streaming networks which seeks to construct \emph{continuous-time}, fixed-size, dynamic sampled network that can capture the evolution of the full network as it evolves in time. Our proposed framework establishes a number of properties that we discuss next. We formally state our algorithm, called \textsc{Online-TNS}, in Algorithm~\ref{alg-TNS}. 

\parab{Setup and Key Intuition.} The general intuition of the proposed algorithm in Algorithm~\ref{alg-TNS}, is to maintain a dynamic rank-based reservoir sample $\hat{K}$ of a fixed-size $m$~\cite{vitter1985random,duffield2007priority,ahmed2017sampling}, from a temporal network represented as stream of interactions, where edges can appear repeatedly. And, $m = |\hat{K}|$ is the maximum possible number of sampled edges. When a new interaction $e_t = (i,j,t)$ arrives (line~\ref{line:new_inter}), if the edge $e = (i,j)$ has been sampled before (\ie, $e =(i,j) \in \hat{K}$), then we only need to update the edge sampling parameters (in lines~\ref{line:update_edge_st}--\ref{line:update_edge_en}) and the edge strength (line~\ref{line:update_strength}). However, if the edge is new (\ie, $e =(i,j) \notin \hat{K}$), then the new edge is added provisionally to the sample (line~\ref{line:addto_sample}), and one of the $m+1$ edges in $\hat{K}$ gets discarded (lines~\ref{line:findmin-edge}~and~\ref{line:removemin-edge}).   

\parab{Importance sampling weights and rank variables.} Algorithm~\ref{alg-TNS} preferentially selects edges to include in the sample based on their importance weight relative to the variable of interest
(\eg, relationship strength, topological features), then adapts their weights to allow edges to gain importance during stream processing. To achieve this, each arriving edge $e$ is assigned an initial weight $w(e)$ on arrival and an iid uniform $U(0,1]$ random variable $u(e)$. Then, Algorithm~\ref{alg-TNS} computes and continuously updates a rank variable for each sampled edge $r(e) = w(e)/u(e)$ (see line~\ref{line:rank_new} and line~\ref{line:rank_update}). This rank variable quantifies the importance/priority of the edge to remain in the sample. To keep a fixed sample size, the $m+1$ edge with minimum rank is always discarded (lines~\ref{line:findmin-edge}~and~\ref{line:removemin-edge}). 
The algorithm also maintains a sample threshold $z^*$ which is the maximum discarded rank (line~\ref{line:thresh}). Thus, the inclusion probability of an edge $e$ in the sample is: $\mathbb{P}(e \in \hat K) = \mathbb{P}(r(e) > z^*) = \mathbb{P}(u(e) < w(e)/z^*) = \min\{1,w(e)/z^*\}$. 
Our mathematical formulation in Section~\ref{sec:topology-adaptive-est} allows the edge sampling weight to increase when more interactions are observed (line~\ref{line:weight_adapt}). Thus, edges can gain more importance or rank that reflects the relationship strength as it evolves continuously in time. This setup will support network models that focus on capturing the relationship strength in temporal networks~\cite{miritello2011dynamical,holme2012temporal}.    

\parab{Unbiased estimation of link strength.} We use a procedure called \textsc{update-edge-strength} (line~\ref{func:edge_strength} of Algorithm~\ref{alg-TNS}) to dynamically maintain an unbiased estimate (see Theorem~\ref{thm:nonu:C}) of the edge strength as it evolves continuously in time. The procedure in line~\ref{func:edge_strength} of Algorithm~\ref{alg-TNS} also maintains an unbiased estimate of the variance of the edge strength following Theorem~\ref{thm:var-c}. Note the strength of an edge $e$ is a function of the edge multiplicity $C_{e}$ (the number of interactions $e_t$ where $e_t = e$). If a link-decaying model is required, the procedure called \textsc{update-edge-decay} can be used instead of \textsc{update-edge-strength} to estimate the link-decayed strength (see line~\ref{func:edge_strength_decay} of Algorithm~\ref{alg-TNS}). We prove that our estimated link-decaying weight is unbiased in Theorem~\ref{thm:decay}.      

\parab{Unbiased estimation of subgraph counts.} Given a motif pattern $\mathcal{M}$ of interest (\eg, triangles, or small cliques), the procedure called \textsc{Subgraph-Estimation} in line~\ref{func:subgraph} of Algorithm~\ref{alg-TNS} is used to update an unbiased estimate of the count of all occurrences of the motif $\mathcal{M}$ at any time $t$. Theorem~\ref{thm:nonu:C} is used to establish the unbiased estimator of the count of general subgraphs. The unbiased estimator of subgraph counts also applies in the case of link decay, and gives rise to temporally decayed (weighted) motifs. 

\parab{Computational Efficiency and complexity.} All the algorithms and estimators can run in a \emph{single-pass} on the stream of interactions, where each interaction can be observed and processed once (see Alg~\ref{alg-TNS}). The main reservoir sample is implemented as a heap data structure (min-heap) with a hash table to allow efficient updates. The estimator of edge strength can be updated in constant time $O(1)$. Also, retrieving the edge with minimum rank can be done in constant time $O(1)$. Any updates to the sampling weights and rank variables can be executed in a worst-case time of $O(\log(m))$ (\ie, since it will trigger a bubble-up or bubble-down heap operations). For any incoming edge $e = (i,j)$, subgraph estimators can be efficiently computed if a hash table or bloom filter is used for storing and looping over the sampled neighborhood of the sampled vertex with minimum degree and querying the hash table of the other sampled vertex. For example, if we seek to estimate triangle counts, then line~\ref{line:tris} in Algorithm~\ref{alg-TNS} can be implemented in $O(min\{\text{deg}(i),\text{deg}(j)\})$.

\section{Adaptive Unbiased Estimation}
\label{sec:topology-adaptive-est}

In this section, we theoretically show and discuss our formulation of unbiased estimators for temporal networks, that we use in Algorithm~\ref{alg-TNS}.

\parab{Edge Multiplicities.}
We consider a temporal network $G=(V,E)$ comprising interactions $E$ between vertex pairs of $V$. Each interaction can be viewed as a representative of an edge set $K$ comprising the unique elements of $E$. We will write $\tilde G=(V,K)$ as the graph induced by $K$. Thus the stream of interactions can also be regarded as a stream $\{e_t: t\in[|E|]\}$ of non-unique edges from $K$.
Let $K_t$ denote the unique edges in $\{e_s: s\le t\}$ and have arrived by time $t$. Also, let $\tilde G_t=(V_t,K_t)$ be the induced graph, where $V_t$ is the subset of vertices that appeared in $K_t$. The multiplicity $C_{e,t}$ of an edge $e\in K_t$ is the number of times it occurs in $E_t=\{e_s: s\le t\}$, i.e., $C_{e,t}=|\{s\le t: e_s = e\}|$. 
The multiplicity $C_{J,t}$ of $J\subset K_t$ 
is the number of distinct ordered interaction subsets $\tilde J = \{e_{i_1},\ldots,e_{i_{|J|}}\}$ with $i_j\le t$, such that $\tilde J$ is a permutation of $J$. Hence $C_{J,t}=\prod_{e\in J}C_{e,t}$. Given a class $\cH$ of subgraphs of  $\tilde G$, we wish to estimate  for each $t$ the total multiplicity $H_t=\sum_{J\in \cH}C_{J,t}$ of subgraphs from $\cH$ that are present in the first $t$ arrivals. 

\parab{Sampling Edges and Estimating Edge Multiplicities.}
We record edge arrivals by the indicators $c_{e,t}=1$ if $e_t=e$ and zero otherwise, and hence $C_{e,t}=\sum_{t\ge 1}c_{e,t}$. 
$\hat K_t$ will denote the sample set of unique edges after arrival $t$ has been processed. We maintain an estimator $\hat C_{e,t}$ of $C_{e,t}$ for each $e\in \hat K_t$. Implicitly $\hat C_{e,t}=0$ if $e \notin \hat K_t$.

The algorithm proceeds as follows. If the arriving edge $e_t\notin \hat
K_{t-1}$ then $e_t$ is provisionally included in the sample, forming $\hat K'_t=\hat K_t\cup\{e_t\}$, and we set $\hat C_{e_t,t}=c_{e_t,t}= 1$. The new edge is assigned a random variable $u_{e_t}$ distributed IID in $(0,1]$. A weight $w_{i,t}$ is specified for each edge $i\in \hat K'_t$ as described below, from which the edge \textsl{time-dependent priority} at time $t$ is $r_{i,t}=w_{i,t}/u_i$. If $|\hat K'_t| > m$, the edge $d_t=\argmin_{i\in \hat K'_t} r_{i,t}$ of minimum priority is discarded, and the estimates $\hat C_{i,t}$ of the surviving edges $i \in \hat K_t = \hat K'_t\setminus\{d_t\}$ undergo inverse probability normalization through division by the conditional probability $q_{i,t}$ of retention in $\hat K_t$; see (Equation~\ref{eq:22}). If the arriving edge is already in the reservoir $e_t\in \hat K_{t-1}$ then we increment its multiplicity $\hat C_{e_t,t}=\hat C_{e_t,t-1} + 1$ and no sampling is needed, i.e., $\hat K_t = \hat K_{t-1}$.

\parab{Unbiased Estimation of Edge Multiplicities.} 
Let $\Omega$ denote the (random) set of times at which the sampling step takes place, i.e., such that the arriving edge $e_t$ is not currently in the reservoir $e_t \notin \hat K_{t-1}$ and $|\hat K_{t-1}| = m$. For $t\in\Omega'=\Omega\setminus\{\min\Omega\}$, let $\omega(t)=\max \{ [0,t)\cap\Omega \}$ denote the next most recent time at which the sampling step took place. For $t\in\Omega'$, the sample counts present in the reservoir accrue unit increments from arrivals $e_{\omega(t)+1},\ldots,e_{t-1}$ until the sampling step takes place at time $t$. For $t\in\Omega$,  an edge $i\in \hat K_{\omega(t)}$ is selected into $\hat K_t$ if and only if $r_{i,t}$ exceeds the smallest priority of all other elements of 
$\hat K'_t$, i.e., 
\be
r_{i,t}>z_{i,t}:=\min_{j\in \hat K'_t\setminus \{i\}}r_{j,t}
\ee 
Hence by recurrence, $i\in \hat K_t$ only if $u_i<\min_s\{w_{i,s}/z_{i,s}\}$ where $s$ takes values over 
\be
\{\alpha_i(t),\ldots,\omega(\omega(t)),\omega(t),t\}
\ee 
where $\alpha_i(t)$ is the most recent time at which edge $i$ was sampled into the reservoir.

This motivates the definition below where $p_{e,t}$ is the edge selection probability conditional on the thresholds $z_t$, and $q_{e,t}$ is the conditional probability for sampling for each increment of time. 
Let $t_e$ denote the time of first arrival of edge $e$. For $t\in\Omega$, define $p_{e,t}$ through the iteration

\be\label{eq:21}
p_{e,t} =
\left\{
\begin{array}{ll}
\min\{1,w_{e,t}/z_t\} & \mbox{ if } t=\min\Omega \\
\min\{p_{e,\omega(t)},w_{e,t}/z_t\} & \mbox{ otherwise }\\
\end{array}
\right.
\ee

where $z_t =\min_{e\in \hat K'_t} r_{e,t}$ for $t\in\Omega_t$ in the unrestricted minimum priority over edges in $\hat K'_t$. Note that $z_{i,t}=z_t$ if $i\in \hat K_t$. 
Then $\hat C_{e,t}$ is defined by the iteration $\hat C_{e,t}=0$ for $t<t_e$ and
\be\label{eq:22}
\hat C_{e,t}=\left(\hat C_{e,t-1} + c_{e,t}\right)\frac{I(u_i< w_{i,t}/z_{i,t})}{q_{e,t}}
\ee
where 
\be\label{eq:23}
q_{e,t}=
\left\{
\begin{array}{ll}
1&\mbox{ if } t\notin\Omega\\
p_{e,t}&\mbox{ if } t\in\Omega \mbox{ and } e=e_t\\
p_{e,t}/p_{e,\omega(t)}& \mbox{ otherwise}
\end{array}
\right.
\ee

For $J\subset V$, let $t_J=\min_{j\in J}t_j$, i.e., the earliest time at which any instance of an edge in $J$ has arrived. Let $J_t=\{j\in J: t_j\le t\}$, i.e., the edges in $J$ whose first instance has arrived by $t$. Note in our model these are deterministic. The proof of the following Theorem and others in this paper are detailed in Section~\ref{sec:proofs}. 

\begin{theorem}[\textbf{Unbiased Estimation}]
\label{thm:nonu:C} 
\hfill
\setlength{\itemsep}{0.5pt}
\setlength{\parskip}{2pt}
\begin{itemize}
\item[(i)] $\E[\hat C_{e,t}]=C_{e,t}$ for all $t\ge 0$.
\item[(ii)] For each $J\subset V$ and $t\ge t_J$ then $\prod_{e\in
    J_t}(\hat C_{e,t}-C_{e,t}): t\ge t_J\}$ has expectation $0$.
\item[(iii)] $\E[\prod_{e\in J}\hat C_{e,t}]=\prod_{e\in J} C_{e,t}$
  for $t\ge \max_{j\in J} t_j$.
\end{itemize}
\end{theorem}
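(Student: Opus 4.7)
My plan is induction on the time index $t$, separately handling non-sampling steps ($t \notin \Omega$) and sampling steps ($t \in \Omega$), and exploiting conditional-independence properties intrinsic to priority sampling with adaptive weights. For part (i) at a non-sampling step, the defining iteration yields $\hat C_{e,t} - C_{e,t} = \hat C_{e,t-1} - C_{e,t-1}$, so unbiasedness carries over directly. At a sampling step $t \in \Omega$, I would condition on the $\sigma$-algebra $\mathcal G_{e,t}$ generated by the stream together with all uniforms $\{u_j : j \ne e\}$. Under this conditioning $u_e$ remains an independent uniform on $(0,1]$, while the pseudo-threshold $z^*_{e,t} := \min_{j \in \hat K'_t \setminus \{e\}} r_{j,t}$ and all of its predecessors become $\mathcal G_{e,t}$-measurable. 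The critical observation is that on the retention event $\{e \in \hat K_t\}$ the algorithm's threshold $z_t$ coincides with $z^*_{e,t}$, since $e$ being retained means $e$ is not the minimizer. A direct calculation then identifies the conditional retention probability $\Pr(e \in \hat K_t \mid \mathcal G_{e,t},\, e \in \hat K_{\omega(t)})$ with $q_{e,t}$, so the Horvitz--Thompson indicator $I(e \in \hat K_t)/q_{e,t}$ has conditional mean $1$. Taking conditional expectation in the iteration and invoking the inductive hypothesis yields $\E[\hat C_{e,t}] = C_{e,t}$ via the tower property.

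For part (iii), the same strategy lifts to $\mathcal G_{J,t} = \sigma(\{u_j : j \notin J\} \cup \text{stream})$. The structural fact to establish is that on the event that \emph{every} $e \in J$ is jointly retained up to time $t$, each algorithm threshold $z_s$, for $s \in \Omega$ and $s \le t$, coincides with the $J$-pseudo-threshold $z^*_{J,s} := \min_{j \in \hat K'_s \setminus J} r_{j,s}$, because no $J$-edge is ever the discarded minimizer on this event. Conditional on $\mathcal G_{J,t}$ the uniforms $\{u_e : e \in J\}$ are mutually independent, so the joint-retention event decomposes into a product of independent one-dimensional threshold events in the $u_e$, and its conditional probability factorizes as $\prod_{e \in J} p_{e,t}$ evaluated along the pseudo-trajectory. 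Combining with the inductive product identity for $\hat C$ and applying the tower property delivers $\E[\prod_{e \in J}\hat C_{e,t}] = \prod_{e \in J}C_{e,t}$.

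Part (ii) then follows purely algebraically from (iii): expanding
\[
\prod_{e \in J_t}(\hat C_{e,t} - C_{e,t}) \;=\; \sum_{S \subseteq J_t}(-1)^{|J_t \setminus S|}\prod_{e \in S}\hat C_{e,t}\prod_{e \in J_t \setminus S}C_{e,t},
\]
taking expectation, and applying (iii) to each subset $S \subseteq J_t$ (valid since $t \ge \max_{j \in S} t_j$ is automatic once $t \ge \max_{j \in J_t} t_j$) collapses the sum to $\prod_{e \in J_t}C_{e,t}\cdot\sum_{S \subseteq J_t}(-1)^{|J_t \setminus S|}$, which vanishes for nonempty $J_t$ by the binomial identity $(1-1)^{|J_t|}=0$.

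The principal technical obstacle lies in part (iii): rigorously justifying that the algorithm thresholds $z_s$ agree with the pseudo-thresholds $z^*_{J,s}$ on the joint-retention event, and that the conditional independence used in the factorization step is genuine. The reservoir trajectory at prior steps depends implicitly on the $u_e$'s for $e \in J$ through their influence on discard choices, so a careful forward induction on the enumeration of $\Omega$ is needed: assume by inductive hypothesis that every $J$-edge has survived at all sampling times strictly less than $s$; then on the joint-retention event none of the $J$-ranks is the minimum at step $s$, so removing $J$ from $\hat K'_s$ does not change the identity of the discarded element, preserving both the identification $z_s = z^*_{J,s}$ and the inductive hypothesis at step $s$.
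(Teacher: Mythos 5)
Your proposal is correct and follows essentially the same route as the paper: both rest on the priority-sampling conditioning argument in which one excludes the edge(s) of interest from the threshold minimization (your pseudo-threshold $z^*_{J,s}$ is the paper's $z_{J,t}=\min_{j\in \hat K'_t\setminus J_t}r_{j,t}$), identifies the conditional retention probability with $q_{e,t}$, and establishes conditional independence of the retained uniforms $u_e$ (each uniform on $(0,p_{e,t}]$) by forward induction over the sampling times. The only difference is cosmetic ordering --- you prove (iii) directly and obtain (ii) by inclusion--exclusion, whereas the paper proves (ii) directly and obtains (iii) by expanding the product inductively on $|J|$; the two parts are interderivable by the binomial identity either way.
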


\parab{Estimating Subgraph Multiplicities.}
Theorem~\ref{thm:nonu:C} tells us for a subgraph $J\subset K_t$,
that $\prod_{e\in J}\hat C_{e,t}$ is an unbiased estimator of the
multiplicity $\prod_{e\in J} C_{e,t}$ of subgraphs formed by distinct set of interactions isomorphic to $J$. Now let $h\in H_t$, the set of subgraphs of $\tilde G_t$ that are isomorphic to $\cM$ at time $t$. We partition the set of interactions in $E_t$ that represent $h$ according to the time of last arrival. Thus it is evident that
$C_{\cM,t}=\sum_{s\le t}C\up0_{\cM,s}$ where 
$C\up0_{\cM,s}=\sum_{h\in H\up 0_s}C_{h\setminus\{e_s\},s}$ where
$H\up0_s=\{h\in K_s: h\owns e_s: h\cong M\}$, 
meaning, for each interaction $e_s$, we consider subgraphs $h$ of $K_s$ congruent to $\cM$ and containing $e_s$, and compute the multiplicity of the $h$ with $e_s$ removed, i.e., not counting any isomorphic sets of interactions in which the  $e=e_s$ arrived previously, thus avoiding over-counting.
It follows by linearity that $\hat C_{\cM,t}=\sum_{s\le t}\hat C\up 0_{\cM,s-1}$ is an unbiased estimator of $C_{\cM,t}$ where
$\hat C\up 0_{\cM,s-1}=
\sum_{h\in H\up0_s}\hat C_{h\setminus\{e_s\},s-1}$.
Thus for each arrival $e_t$ we estimate $C_{J,t}$ just prior to sampling of $e_t$ by 
$\hat C_{J,t}= \prod_{j\in J\setminus \{e_t\}} \hat
C_{j,t-1}$. For each $J\subset H_t$ we increment a running total of $\hat M_t$ by this amount; see line ~\ref{line7} in Algorithm~\ref{alg-TNS}.

\parab{Edge Multiplicity Variance Estimation.} We now discuss the unbiased estimator of the variance $\var(\hat C_{e,t})$.

\begin{theorem}[\textbf{Unbiased Variance Estimator}]
\label{thm:var-c}
\hfill
\setlength{\itemsep}{0.5pt}
\setlength{\parskip}{2pt}

\noindent
Suppose $\hat V_{e,t-1}$ is an 
unbiased estimator of $\var(\hat C_{e,t-1})$ that can be computed from information on the first $t-1$ arrivals. Then
\be\label{eq:var-c}
\hat V_{e,t} = \hat C_{e,t}^2(1-q_{e,t}) + 
I(u_e < w_{e,t}/z_t)\hat V_{e,t-1}/q_{e,t}
\ee 
is a unbiased estimator of $\var(\hat C_{e,t})$ that can be computed from information on the first $t$ arrivals.
\end{theorem}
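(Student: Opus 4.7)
The plan is to proceed by induction on $t$ via the law of total variance, reducing the recursion to the conditional Bernoulli structure already established in the proof of Theorem~\ref{thm:nonu:C}. Let $\mathcal{G}_{t-1}$ denote the $\sigma$-algebra generated by the history up to time $t-1$ together with the threshold $z_t$ (equivalently, the uniforms $\{u_j:j\neq e\}$ used at time $t$). All of $\hat C_{e,t-1}$, $c_{e,t}$, $\hat V_{e,t-1}$, $w_{e,t}$ and $q_{e,t}$ are then $\mathcal{G}_{t-1}$-measurable, whereas the indicator $I_e:=I(u_e<w_{e,t}/z_t)$ remains Bernoulli with conditional mean $q_{e,t}$ given $\mathcal{G}_{t-1}$, by exactly the reasoning that underlies~(\ref{eq:22})--(\ref{eq:23}).

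First, I use $\hat C_{e,t}=(\hat C_{e,t-1}+c_{e,t})I_e/q_{e,t}$ together with $I_e^2=I_e$ to compute the conditional expectation of each summand of $\hat V_{e,t}$:
\begin{align*}
\E[\hat C_{e,t}^2(1-q_{e,t})\mid\mathcal{G}_{t-1}] &= (\hat C_{e,t-1}+c_{e,t})^2\,(1-q_{e,t})/q_{e,t}, \\
\E[I_e\,\hat V_{e,t-1}/q_{e,t}\mid\mathcal{G}_{t-1}] &= \hat V_{e,t-1}.
\end{align*}
Since $\E[I_e/q_{e,t}\mid\mathcal{G}_{t-1}]=1$ and $\var(I_e/q_{e,t}\mid\mathcal{G}_{t-1})=(1-q_{e,t})/q_{e,t}$, the first right-hand side is precisely $\var(\hat C_{e,t}\mid\mathcal{G}_{t-1})$, so that
\begin{equation*}
\E[\hat V_{e,t}\mid\mathcal{G}_{t-1}] \;=\; \var(\hat C_{e,t}\mid\mathcal{G}_{t-1}) \;+\; \hat V_{e,t-1}.
\end{equation*}

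I then close the argument with the law of total variance. Because $c_{e,t}$ is deterministic and $\E[\hat C_{e,t}\mid\mathcal{G}_{t-1}]=\hat C_{e,t-1}+c_{e,t}$, one has $\var(\E[\hat C_{e,t}\mid\mathcal{G}_{t-1}])=\var(\hat C_{e,t-1})$, which by the inductive hypothesis equals $\E[\hat V_{e,t-1}]$. Taking outer expectations,
\begin{equation*}
\E[\hat V_{e,t}] \;=\; \E[\var(\hat C_{e,t}\mid\mathcal{G}_{t-1})] \;+\; \var(\hat C_{e,t-1}) \;=\; \var(\hat C_{e,t}),
\end{equation*}
completing the induction. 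The base case is immediate: for $t<t_e$ the edge has not yet arrived, so $\hat C_{e,t}\equiv 0$ and $\hat V_{e,t}\equiv 0$ trivially satisfy the invariant.

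\textbf{Main obstacle.} The delicate step is the construction of the conditioning $\sigma$-algebra $\mathcal{G}_{t-1}$: it must make $q_{e,t}$ measurable (which requires absorbing $z_t$, and hence the other uniforms $\{u_j:j\neq e\}$ that determine the minimum priority over $\hat K'_t\setminus\{e\}$) while leaving $u_e$ free, so that $I_e$ is a genuine Bernoulli$(q_{e,t})$ under $\mathcal{G}_{t-1}$. This is the same ``condition on the other priorities'' device implicit in the Horvitz--Thompson analysis behind Theorem~\ref{thm:nonu:C}; once it is in place, the two terms of $\hat V_{e,t}$ align exactly with the two terms of the total-variance decomposition, and everything else is routine algebra.
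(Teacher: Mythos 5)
Your proposal is correct and takes essentially the same route as the paper's own proof: both apply the law of total variance conditioned on the history (with the threshold $z_t$ absorbed so that $q_{e,t}$ is measurable and $I(u_e<w_{e,t}/z_t)$ is conditionally Bernoulli$(q_{e,t})$), identify the first term of $\hat V_{e,t}$ as a Horvitz--Thompson estimate of the conditional variance and the second as an unbiased carry-forward of $\var(\hat C_{e,t-1})$. Your explicit treatment of the conditioning $\sigma$-algebra and the base case is a slightly more careful write-up of the same argument, not a different one.
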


The computational condition expresses the property that $\hat V_{e,t}$ can be computed immediately when $e\in \hat K_t$. The relation (Equation~\ref{eq:var-c}) defines an iteration for estimating the variance $\var(\hat C_t)$ for any $t$ following a time $s\in\Omega$ at which edge $e$ was sampled into $\hat K_s$, such that $e$ remained in the reservoir at least until $t$.
The unbiased variance estimate $\hat V_{e,s}$ takes the value $1/p_{e,s} - 1$ at time $s$ of selection into the reservoir. In practice $\hat V_{e,t}$ only needs to be updated at $t\in\Omega$, i.e., when some edge is sampled into the reservoir, since $q_{e,t}=1$ when $t\notin\Omega$.

\parab{Estimation and Variance for Link-Decay Model.}

The link-delay model adapts (Sec.\ref{sec:decay}) through
\be
\hat C^\delta_{k,t}=\left( \hat C^\delta_{k,t-1}e^{-1/\delta} + c_{k,t}\right)\frac{I(u_k<w_{k,t}/z_{t})}{q_{k,t}}
\ee
which exponentially discounts the contribution from the previous time slot.

\begin{theorem}[\textbf{Unbiased Estimation with Link Decay}]
\label{thm:decay}
\hfill
\setlength{\itemsep}{0.5pt}
\setlength{\parskip}{2pt}
\begin{itemize}
    \item [(i)] $\hat C^\delta_{k,t}$ is an unbiased estimator of $C^\delta_{k,t}$
    \item[(ii)] Replacing $\hat C_{k,t}$ with $\hat C^\delta_{k,t}$ in the iteration yields an unbiased estimator $V^\delta_{k,t}$ of $\var(\hat C^\delta_{k,t})$ 
\end{itemize}
\end{theorem}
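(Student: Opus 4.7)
My plan is to reduce both parts to the corresponding statements in Theorem~\ref{thm:nonu:C} and Theorem~\ref{thm:var-c}, since the link-decay iteration differs from the no-decay iteration only by inserting the deterministic multiplicative factor $e^{-1/\delta}$ in front of the previous estimate. The crucial observation is that the true link-decayed multiplicity satisfies the one-step deterministic recurrence
\begin{equation}
C^\delta_{k,t} = e^{-1/\delta}C^\delta_{k,t-1} + c_{k,t},
\end{equation}
with $C^\delta_{k,t}=0$ for $t<t_k$. Thus the estimator's iteration is obtained from the true iteration by applying exactly the same inverse-probability normalization used in the no-decay case. Since the factor $e^{-1/\delta}$ is a constant (independent of the sampling randomness), it can be pulled through conditional expectations without interacting with the sampling indicators.

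For part (i), I would proceed by induction on $t$, assuming $\E[\hat C^\delta_{k,s}] = C^\delta_{k,s}$ for all $s<t$. Condition on the sigma-algebra $\mathcal{G}_{t-1}$ generated by the entire arrival stream through time $t$ and all uniform variables $\{u_j : j\ne k\}$; under this conditioning the quantities $\hat C^\delta_{k,t-1}$, $c_{k,t}$, $w_{k,t}$, $z_t$, and $p_{k,\omega(t)}$ are all measurable, while $q_{k,t}$ is precisely the $\mathcal{G}_{t-1}$-conditional probability that $u_k < w_{k,t}/z_t$ jointly with survival at all previous sampling steps (cf.~(\ref{eq:21})--(\ref{eq:23})). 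Exactly as in Theorem~\ref{thm:nonu:C}, this gives
\begin{equation}
\E\!\left[\frac{I(u_k < w_{k,t}/z_t)}{q_{k,t}}\,\Big|\,\mathcal{G}_{t-1}\right] = 1,
\end{equation}
and hence $\E[\hat C^\delta_{k,t}\mid\mathcal{G}_{t-1}] = e^{-1/\delta}\hat C^\delta_{k,t-1}+c_{k,t}$. Taking outer expectation and applying the inductive hypothesis yields $\E[\hat C^\delta_{k,t}] = e^{-1/\delta}C^\delta_{k,t-1}+c_{k,t} = C^\delta_{k,t}$.

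For part (ii), I would imitate the derivation of (\ref{eq:var-c}) using the conditional variance decomposition $\var(\hat C^\delta_{k,t}) = \E[\var(\hat C^\delta_{k,t}\mid \mathcal{G}_{t-1})] + \var(\E[\hat C^\delta_{k,t}\mid\mathcal{G}_{t-1}])$. The inner variance is computed from the Bernoulli indicator $I(u_k<w_{k,t}/z_t)$ with conditional success probability $q_{k,t}$, producing the $(1-q_{k,t})(\hat C^\delta_{k,t})^2$ term plus a $1/q_{k,t}$-normalized carry of the previous variance estimator. The outer variance is absorbed by iterating the assumed unbiasedness of $\hat V^\delta_{k,t-1}$ for $\var(\hat C^\delta_{k,t-1})$. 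Because the factor $e^{-1/\delta}$ is deterministic, it only affects the value of $\hat C^\delta_{k,t}$ that appears in the squared term, and the algebraic manipulations are otherwise identical to those in Theorem~\ref{thm:var-c}.

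The main obstacle is bookkeeping the conditioning correctly: the variable $u_k$ persists across all times at which edge $k$ is tested for retention, so one cannot simply treat $q_{k,t}$ as an unconditional Bernoulli probability. This is resolved exactly as in the no-decay proof by defining $q_{k,t}$ via the ratio $p_{k,t}/p_{k,\omega(t)}$ in (\ref{eq:23}), which is engineered to telescope into the correct conditional survival probability so that the martingale structure of $\hat C^\delta_{k,t}/\prod q$ around the decay recurrence goes through verbatim.
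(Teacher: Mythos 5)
Your proposal is correct and takes essentially the same route as the paper, whose entire proof is the one-line reduction you expand: part (i) follows by linearity of expectation because the deterministic factor $e^{-1/\delta}$ passes through the conditional expectations already established in the proof of Theorem~\ref{thm:nonu:C}, and part (ii) follows by substituting $\hat C^\delta_{k,t}$ into the variance recursion of Theorem~\ref{thm:var-c}. One small technical slip: $\hat C^\delta_{k,t-1}$ is \emph{not} measurable with respect to a sigma-algebra that excludes $u_k$ (survival of edge $k$ at earlier sampling steps depends on $u_k$), but you repair this yourself in your final paragraph by conditioning on the survival history and using the telescoping ratio $q_{k,t}=p_{k,t}/p_{k,\omega(t)}$, exactly as in the paper's proof of Theorem~\ref{thm:nonu:C}.
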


\renewcommand{\subsubsection}[1]{\medskip\noindent\textbf{#1.}}

\section{Experiments} 
\label{sec:exp}

\begin{table}[t!]
\centering
\caption{
Temporal network data~\cite{nr}.
Note $|K|$ is the number of static edges (not including multiplicities); 
$|E|$= number of temporal edges;
and $C_{\max}$= maximum edge weight. 
}
\label{table:network-stats}
\renewcommand{\arraystretch}{1.10} 
\small
\setlength{\tabcolsep}{4.7pt} 
\begin{tabularx}{0.6\linewidth}{@{}
r r rHr r r 
HHHHH H
@{}
}
\toprule
\textsc{Temporal Network} & $|V|$ & $|K|$ & & $|E|$ & \textbf{days} & $C_{\max}$ &
\\
\midrule
\textsf{sx-stackoverflow} & 2.6M & 28.1M & 48M & 47.9M & 2774.3 & 1.04k \\
\textsf{ia-facebook-wall-wosn} & 46k & 183k & 856k & 877k & 1591.0 & 1.3k \\
\textsf{wiki-talk} & 1.1M & 2.8M & 6.1M & 7.8M & 2320.4 & 1.6k \\
\textsf{bitcoin} & 24.5M & 86.1M & 122.9M & 129.2M & 1811.7 & 72.6k \\
\textsf{CollegeMsg} & 1.9k & 14k & 60k & 60k & 193.7 & 184 \\
\textsf{ia-retweet-pol} & 18k & 48k & 61k & 61k & 48.8 & 79 \\
\textsf{ia-prosper-loans} & 89k & 3.3M & 3.4M & 3.4M & 2142.0 & 15 \\
\textsf{comm-linux-reply} & 26k & 155k & 984k & 1.0M & 2921.6 & 1.9k \\
\textsf{email-dnc} & 1.9k & 4.4k & 37k & 39k & 982.3 & 634 \\
\textsf{ia-enron-email} & 87k & 297k & 1.1M & 1.1M & 16217.5 & 1.4k \\
\textsf{ia-contacts-dublin} & 11k & 45k & 416k & 416k & 80.4 & 345 \\
\textsf{fb-forum} & 899 & 7.0k & 34k & 34k & 164.5 & 171 \\
\textsf{ia-contacts-hyper09} & 113 & 2.2k & 21k & 21k & 2.5 & 1.3k \\
\textsf{SFHH-conf-sensor} & 403 & 9.6k & 70k & 70k & 1.3 & 1.2k \\
\textsf{sx-superuser} & 192k & 715k & 1.1M & 1.4M & 2773.3 & 139 \\
\textsf{sx-askubuntu} & 157k & 456k & 727k & 964k & 2613.8 & 215 \\
\textsf{sx-mathoverflow} & 25k & 188k & 390k & 507k & 2350.3 & 325 \\
\bottomrule
\end{tabularx}
\end{table}

We perform extensive experiments on a wide variety of temporal networks from different domains. The temporal network data used in our experiments is shown in Table~\ref{table:network-stats}. We discuss baseline comparisons in Section~\ref{sec:baseline}, and perform a detailed ablation study that shows the contributions of the different components and design choices of Algorithm~\ref{alg-TNS} in Section~\ref{sec:ablation}.
The experiments systematically investigate the effectiveness of the framework for estimating \emph{temporal link strength}  (Section~\ref{sec:exp-temporal-link-strength}), \emph{temporally weighted motifs}  
using the decay model  (Section~\ref{sec:exp-temporal-motifs}), 
and \emph{temporal network statistics} (Section~\ref{sec:exp-temporal-network-statistics}). 
We use sample fractions $p = \{0.10,0.20,0.30,0.40,0.50\}$ and all experiments are the average of five different runs, similar to the setup in prior work~\cite{lim2018memory}. 

\begin{table}[h!]
\centering
\caption{
Baseline Comparison:  Relative spectral norm (\ie, $\norm{\mathbf{C} - \hat{\mathbf{C}}}_2/\norm{\mathbf{C}}_2$) for sampling fraction $p=0.1$, comparison between Online-TNS (Alg~\ref{alg-TNS}), Triest sampling~\cite{stefani2017triest}, Reservoir sampling~\cite{vitter1985random}, and MultiWMascot sampling~\cite{lim2018memory}.
}
\label{table:baseline-spectral-error}
\small 
\setlength{\tabcolsep}{15.0pt} 
\begin{tabularx}{0.8\linewidth}{@{}
r cccc
HHHHH H
@{}
}
\toprule

\textsc{Temporal Network} & \textbf{Online-TNS} & \textbf{Triest} & \textbf{Reservoir} & \textbf{MultiWMascot}  \\
\midrule
\textsf{CollegeMsg} & 0.0558 & 0.2304 & 0.2212 & 0.1990  & \\
\textsf{ia-retweet-pol}  & 0.1800 & 0.4103 & 0.4091 & 0.3973 & \\ 
\textsf{ia-contacts-dublin} & 0.0215 & 0.1926 & 0.1937 & 0.1855 & \\
\textsf{wiki-talk} & 0.1020 & 0.2554 & 0.2347 & 0.2343 & \\
\textsf{fb-forum} & 0.0390 & 0.1900 & 0.1912 & 0.2052 & \\
\textsf{sx-mathoverflow} & 0.0668 & 0.1767 & 0.1764 & 0.1584 & \\
\textsf{sx-stackoverflow} & 0.0992 & 0.2114 & 0.2036 & 0.2045 & \\

\bottomrule
\end{tabularx}
\end{table}

\renewcommand{\figsz}{0.22}
\begin{figure*}[h!]
\centering
\subfigure
{\includegraphics[width=\figsz\linewidth]{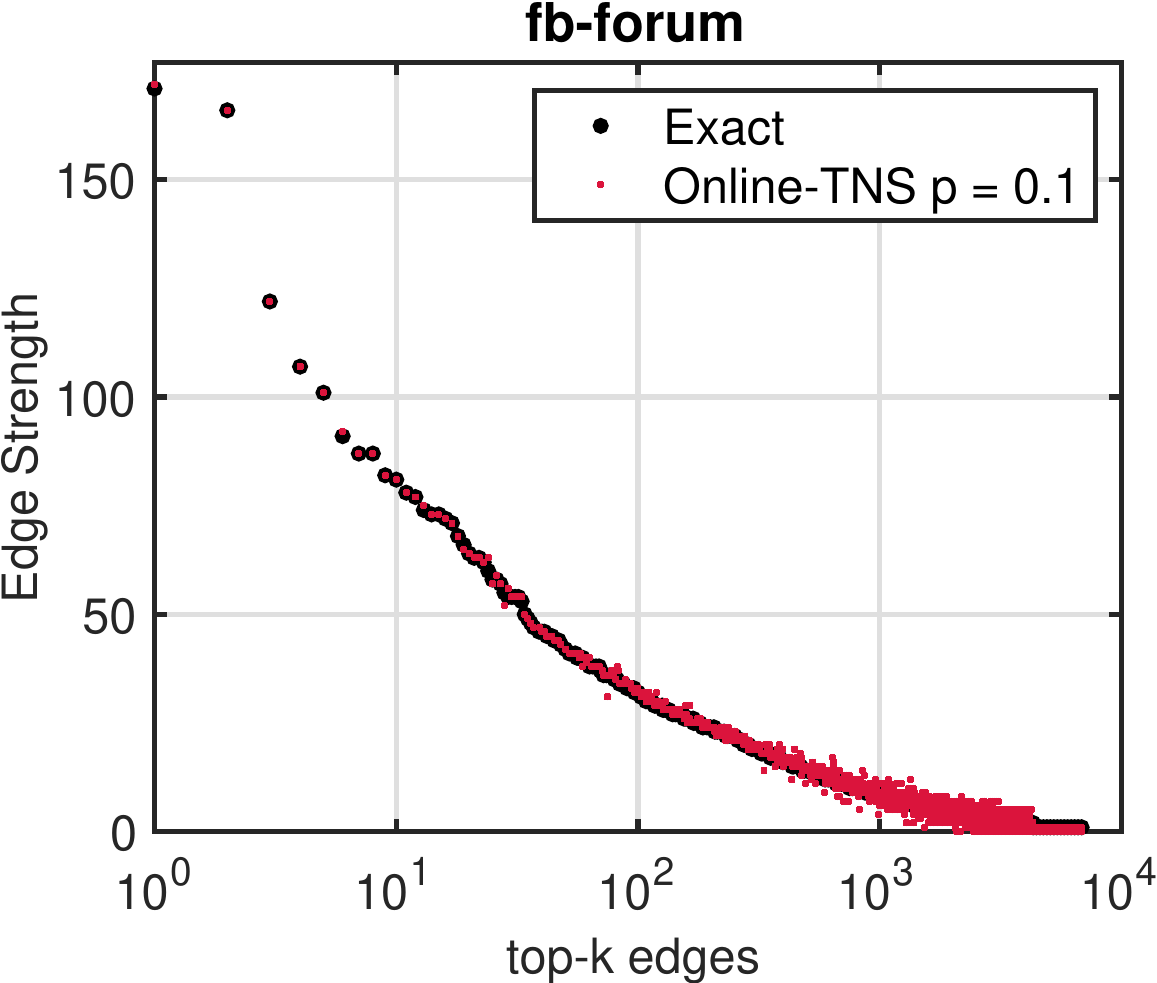}}
\hspace{3mm}
\subfigure
{\includegraphics[width=\figsz\linewidth]{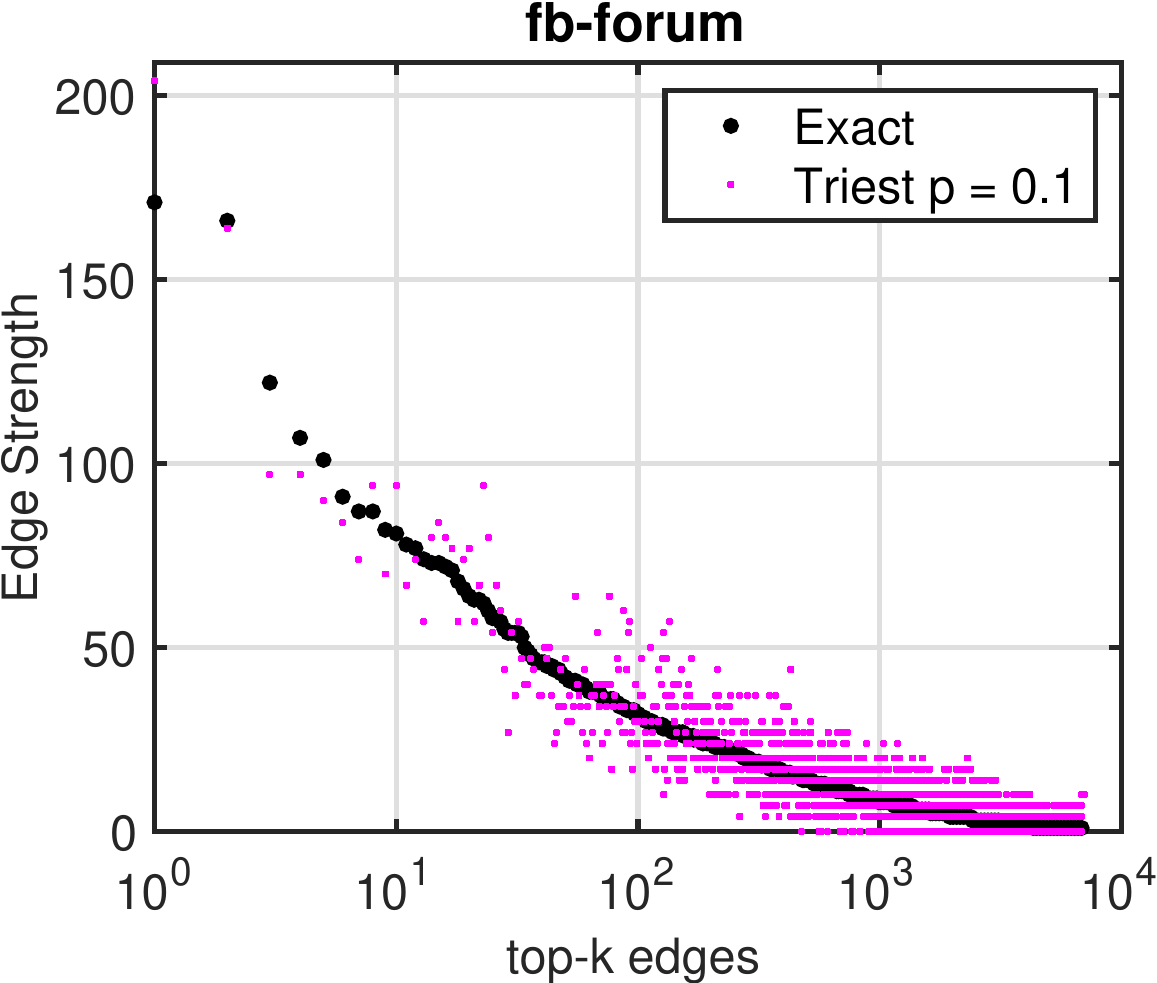}}
\hspace{3mm}
\subfigure
{\includegraphics[width=\figsz\linewidth]{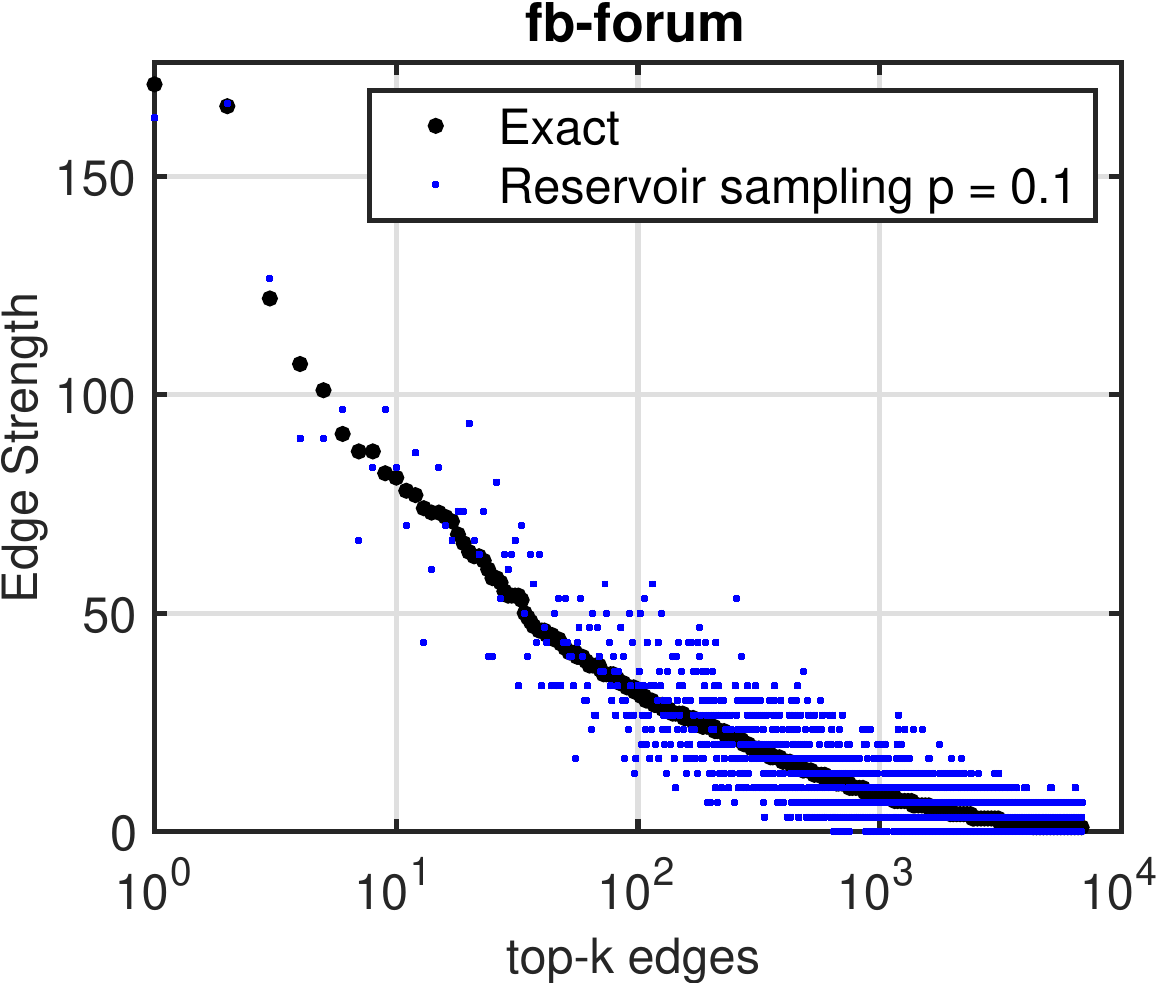}}
\hspace{3mm}
\subfigure
{\includegraphics[width=\figsz\linewidth]{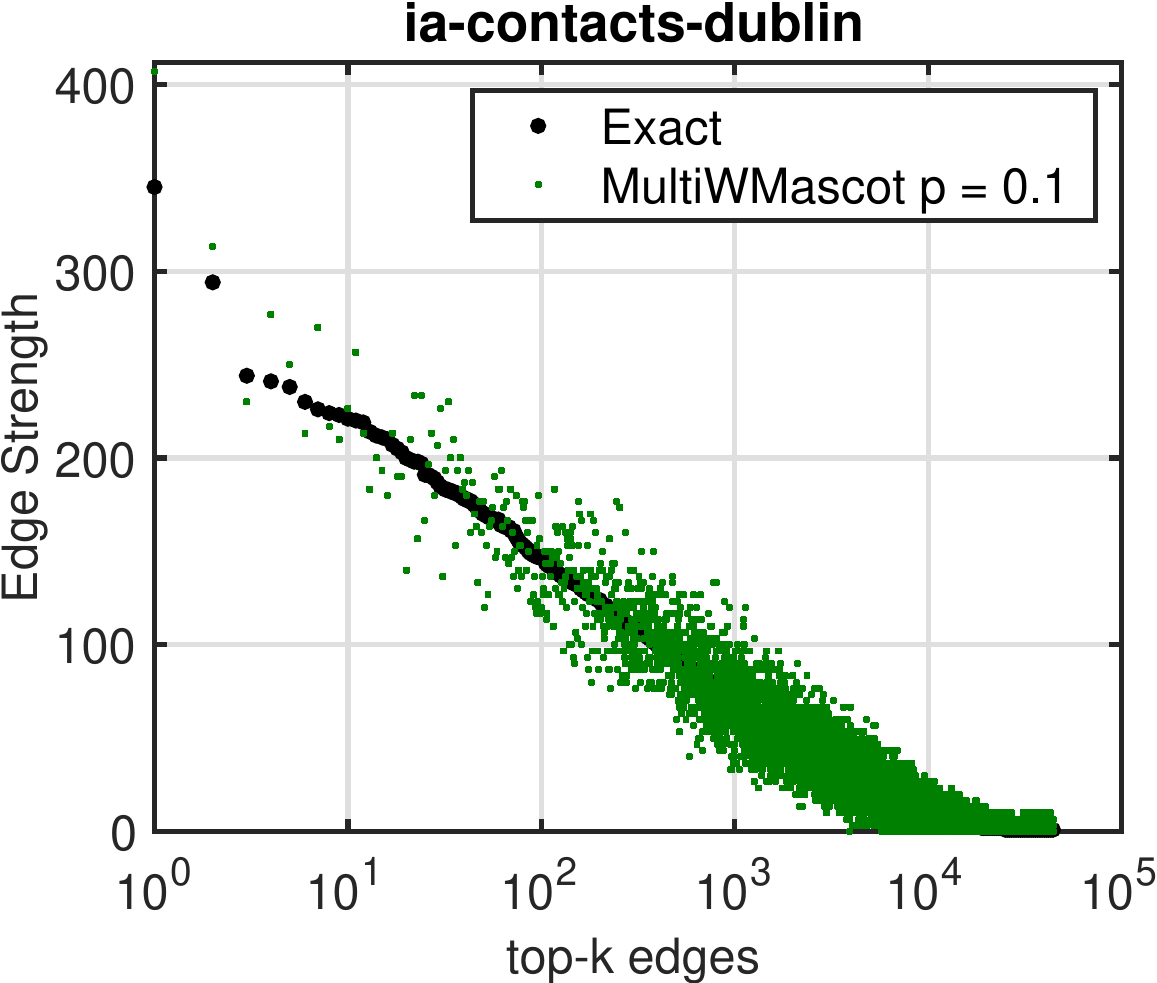}}

\subfigure
{\includegraphics[width=\figsz\linewidth]{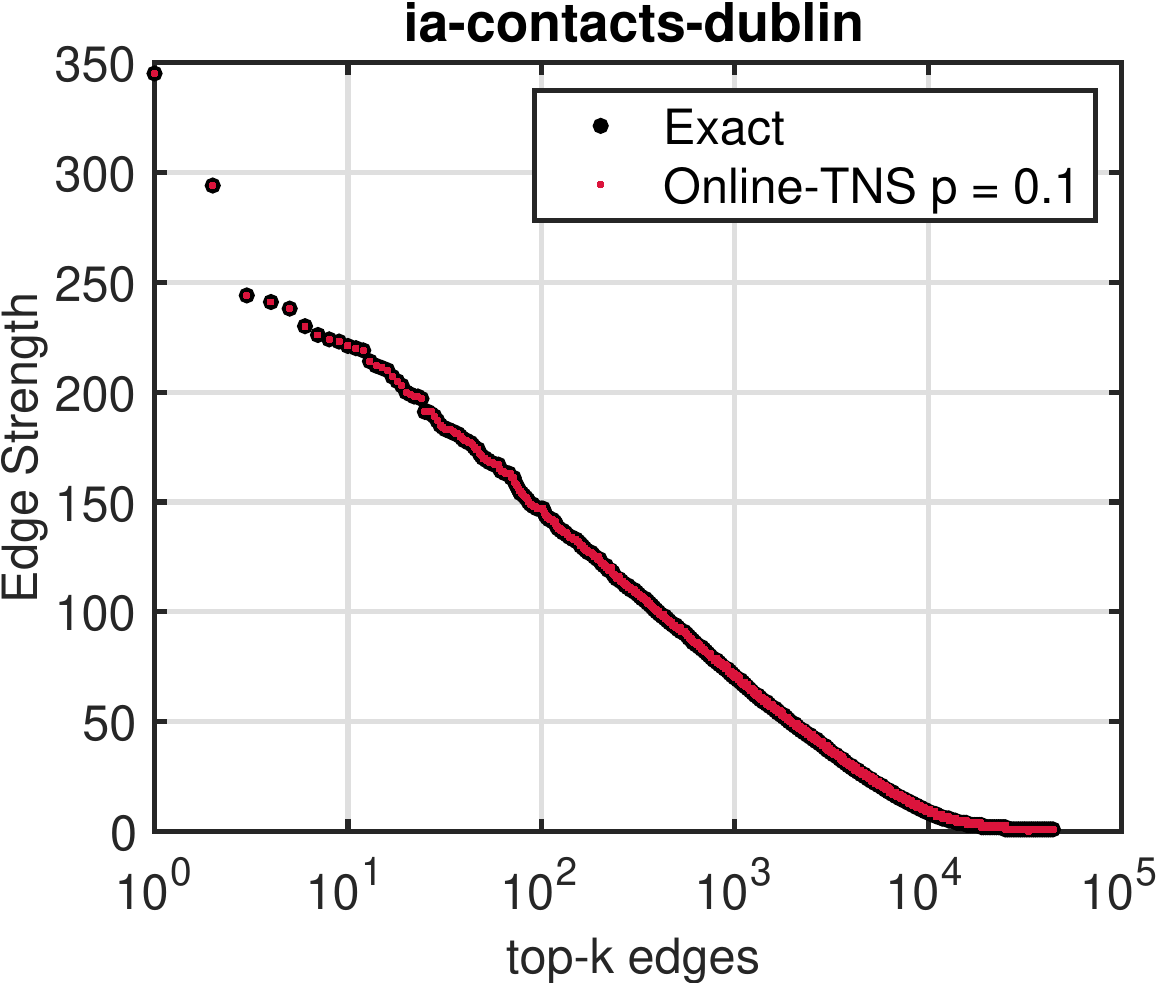}}
\hspace{3mm}
\subfigure
{\includegraphics[width=\figsz\linewidth]{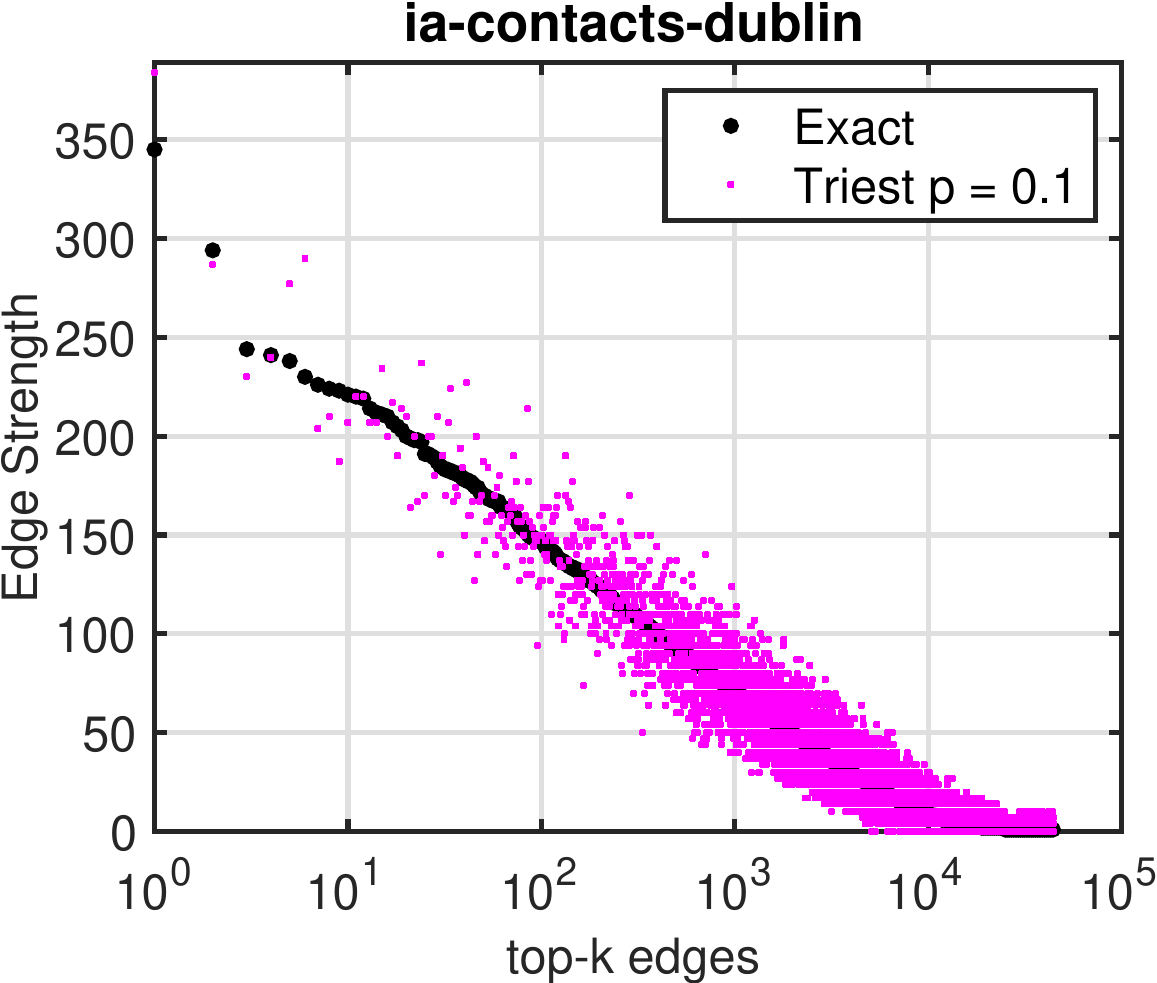}}
\hspace{3mm}
\subfigure
{\includegraphics[width=\figsz\linewidth]{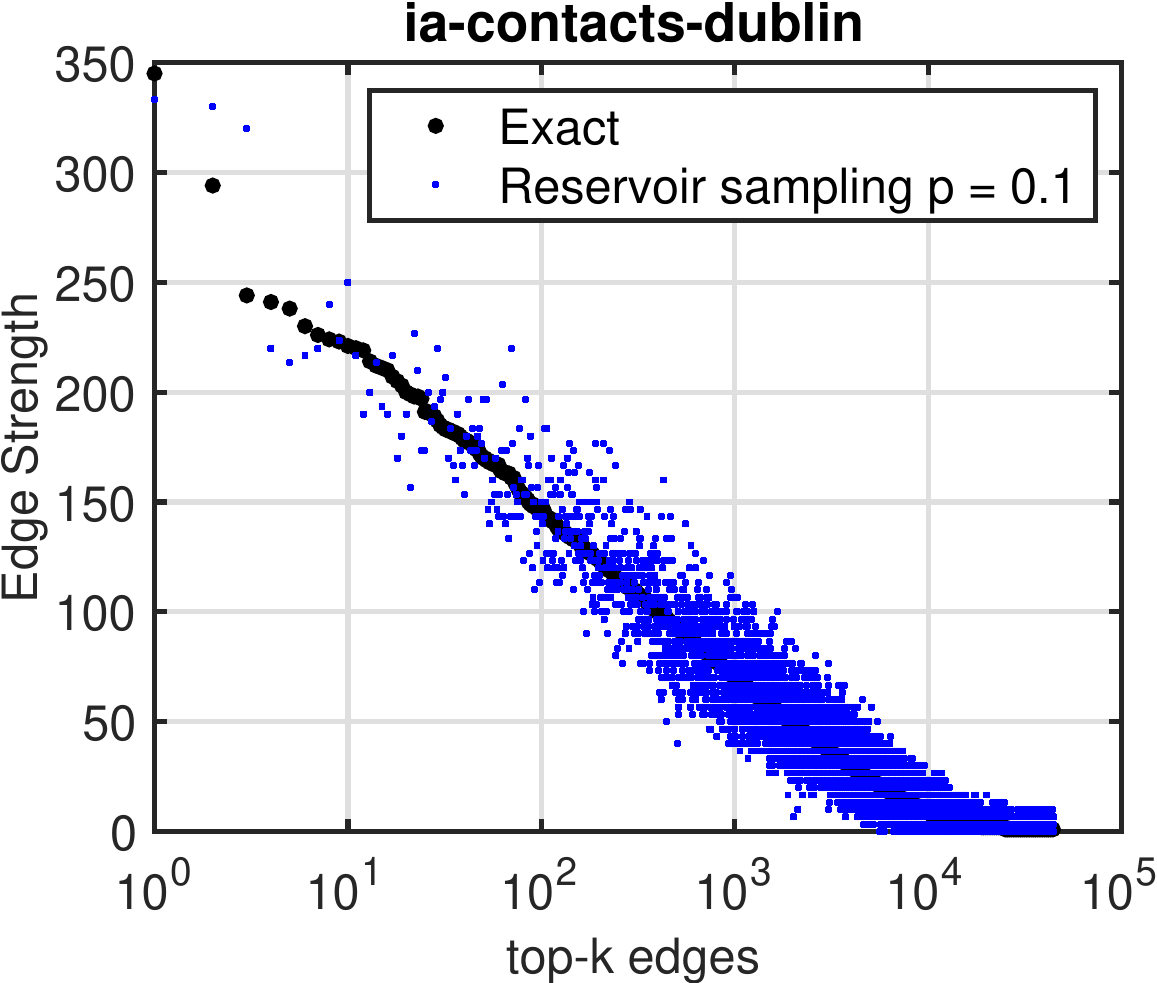}}
\hspace{3mm}
\subfigure
{\includegraphics[width=\figsz\linewidth]{new-figs/fig-edge-weight-mascot-nodecay-10-ia-contacts-dublin-eps-converted-to.pdf}}
\vspace{-2mm}
\caption{Baseline comparison with Triest, reservoir sampling, and MultiWMascot sampling. Temporal link strength (No-decay) estimated distribution vs exact distribution for top-k links .
Results are shown for sampling fraction $p=0.1$.
}
\label{fig:edge-weight-baseline}
\end{figure*}

\subsection{Comparison to Published Baselines}
\label{sec:baseline}

In Table~\ref{table:baseline-spectral-error}, we compare  Algorithm~\ref{alg-TNS} to the state-of-the-art methods for multi-graph streams (in which edges can appear more than once in the stream), Triest sampling~\cite{stefani2017triest}, reservoir sampling~\cite{vitter1985random}, and  MultiWMascot~\cite{lim2018memory} for the estimation of link strength (with no-decay). Note that both Triest and reservoir sampling methods sample edges separately, and multiple occurrences of an edge $(u,v)$ may appear in the final sample. 
On the other hand, our proposed Algorithm~\ref{alg-TNS} and MultiWMascot incrementally update the overall estimate of link strength of an edge $(u,v)$, and stores an edge only once with its estimator. This leads to more space-efficient samples. However, while MultiWMascot maintains the edge estimators, it is still uses a single uniform probability $p$ for sampling any of the edges.  

We observe that both Triest and reservoir sampling were unable to produce a reasonable estimate with $59\%-83\%$ accuracy, while MultiWMascot performed slightly better with an accuracy of $60\%-84\%$. Algorithm~\ref{alg-TNS} produced more accurate estimates with $82\%-97\%$ accuracy, with an average of $20\%$ gain in accuracy compared to the baselines. Figure~\ref{fig:edge-weight-baseline} shows the distribution of the top-k edges ($k = 10$ million) and compares the exact link strength with the estimated link strength for the four sampling algorithms, Online-TNS (Alg.~\ref{alg-TNS}), Triest, reservoir sampling, and MultiWMascot. Notably, Online-TNS not only accurately estimates the strength of the link but also captures the correct order of the links compared to the baselines.   

\subsection{Ablation Study}
\label{sec:ablation}

\newcommand\MyBox[2]{
  \fbox{\lower0.7cm
    \vbox to 2.0cm{\vfil
      \hbox to 
      2.9cm{\hfil\parbox{2.9cm}{#1\\#2}\hfil}
      \vfil}%
  }%
}
\begin{table}
\vspace{-5mm}
\renewcommand\arraystretch{1.5}
\setlength\tabcolsep{0pt}
\begin{tabular}{c >{\bfseries}r @{\hspace{0.4em}}c @{\hspace{0.4em}}c}
  \multirow{12}{*}{\parbox{0.5cm}{\bfseries\rotatebox{90}{Sampling Weights}}} & 
    & \multicolumn{2}{c}{\bfseries Link Strength Estimator} \\
  & & \bfseries No decay & \bfseries Temporal decay \\
  & \rotatebox{0}{Uniform} & \MyBox{$w(e) = \phi$}{\\$\hat{C}(e)$ using Line~\ref{func:edge_strength}} & \MyBox{$w(e) = \phi$}{\\$\hat{C}(e)$ using Line~\ref{func:edge_strength_decay}}  \\[2.7em]
  & \rotatebox{0}{Adaptive} & \MyBox{\textbf{if} $e \in \hat{K}$ \\ $w(e) = w(e) + 1$ \\ \textbf{else} $w(e) = \phi$}{\\$\hat{C}(e)$ using Line~\ref{func:edge_strength}} & \MyBox{\textbf{if} $e \in \hat{K}$ \\ $w(e) = w(e) + 1$ \\ \textbf{else} $w(e) = \phi$}{\\$\hat{C}(e) $ using Line~\ref{func:edge_strength_decay}}   \\
\end{tabular}
\centering
\vspace{1mm}
\caption{Online-TNS Framework Main Components (see Algorithm~\ref{alg-TNS}). 
}
\label{table:options}
\end{table}

Our proposed framework is flexible and generic with various components and design choices. To help understand the contributions of the major components, we performed a thorough set of ablation study experiments. Our framework (in Algorithm~\ref{alg-TNS}) consists of two major components: Adaptive sampling and estimation (uniform vs adaptive sampling weights), and Link-decay models (no-decay vs exponential decay). We summarize these components and design choices in Table~\ref{table:options}. Our first ablation study experiment investigates the impact of sampling weights on the estimation accuracy. We explore two variants of Algorithm~\ref{alg-TNS}, (a) \emph{Adaptive}: using Algorithm~\ref{alg-TNS} with adaptive/importance sampling weights, where the sampling weights/ranks adapt to allow edges to gain importance during stream processing. (b) \emph{Uniform}: using Algorithm~\ref{alg-TNS} with fixed uniform sampling weights, where the sampling weights/ranks are uniform and assigned at the first time of sampling, and fixed during the rest of the streaming process. For variant (a), we use the exact procedure in Algorithm~\ref{alg-TNS}. For variant (b), we omit Lines~\ref{line:weight_adapt} and~\ref{line:rank_update} from Algorithm~\ref{alg-TNS}. Note that for both variants, the established estimators in Section~\ref{sec:topology-adaptive-est} are unbiased. We compare performance of the two variants for the estimation of link strength and temporally weighted motif counts.

\renewcommand{\figsz}{0.22}
\begin{figure*}[h!]
\centering
\subfigure
{\includegraphics[width=\figsz\linewidth]{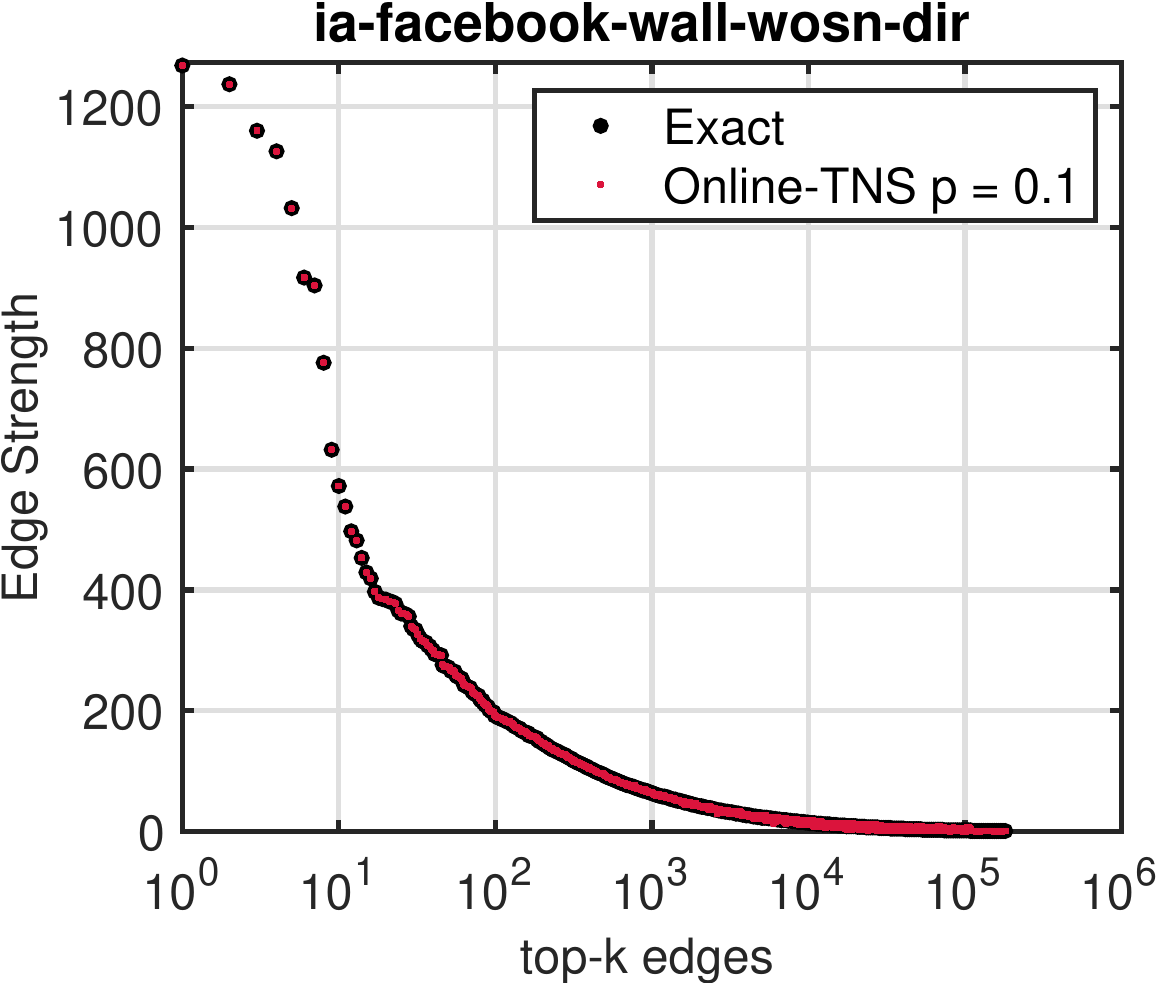}}
\hspace{3mm}
\subfigure
{\includegraphics[width=\figsz\linewidth]{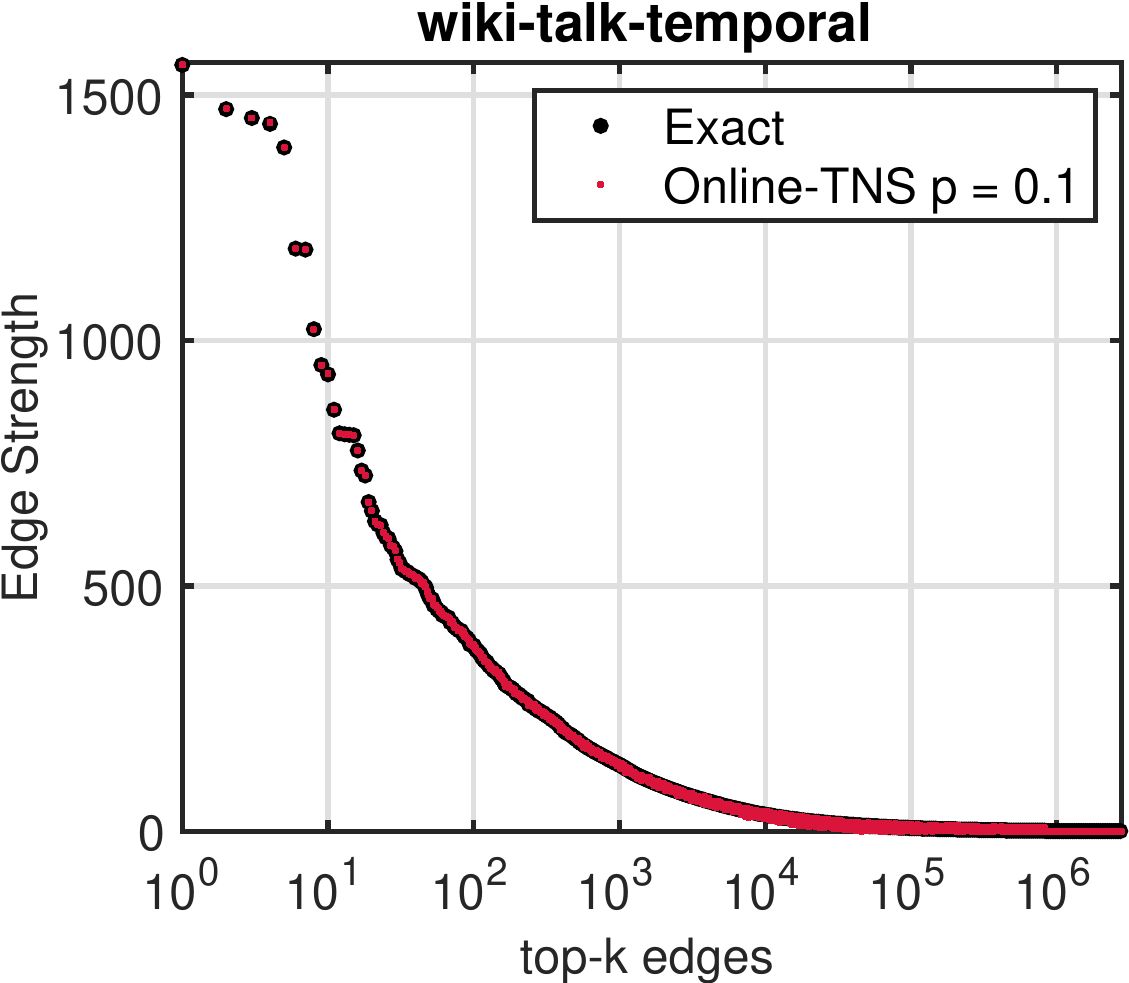}}
\hspace{3mm}
\subfigure
{\includegraphics[width=\figsz\linewidth]{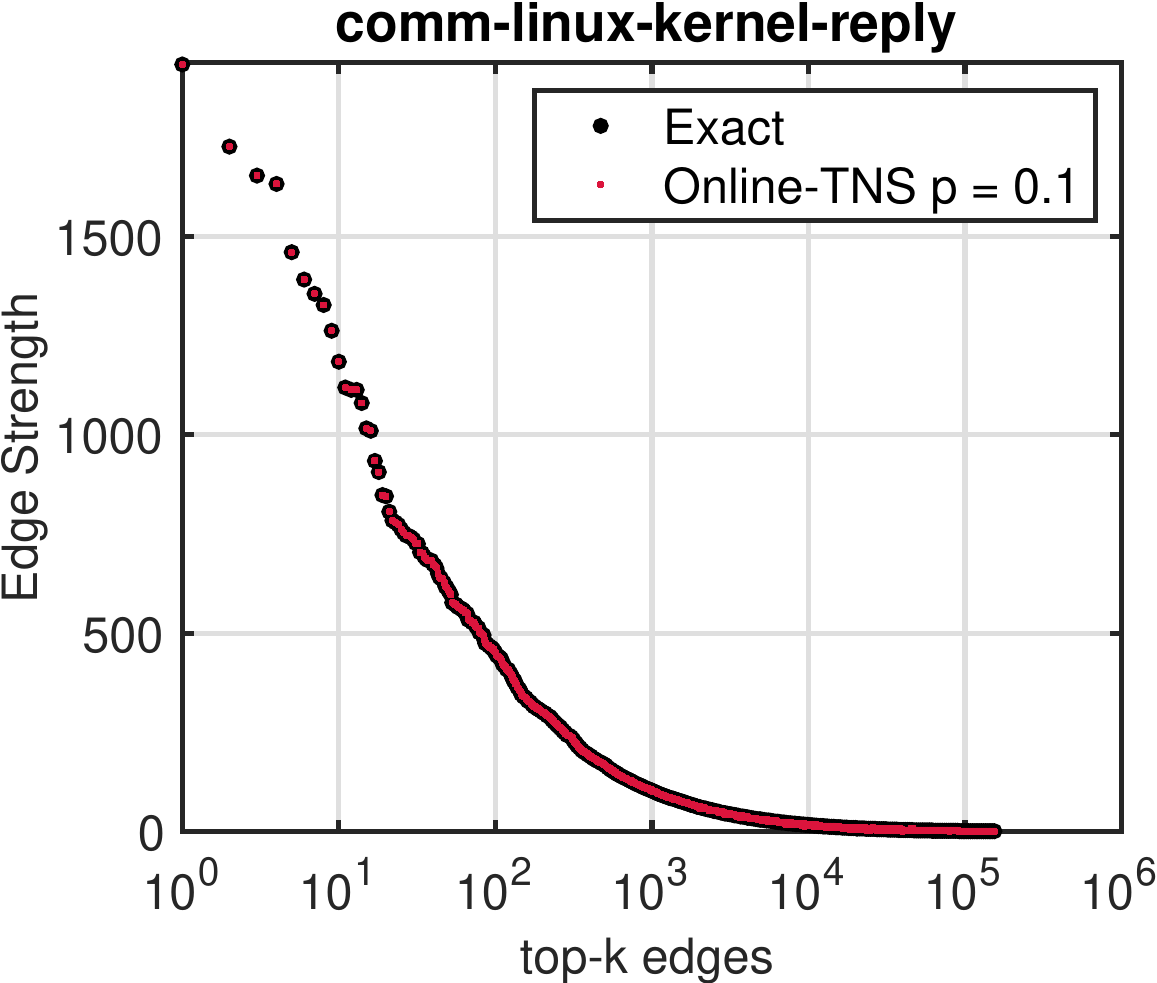}}
\hspace{3mm}
\subfigure
{\includegraphics[width=\figsz\linewidth]{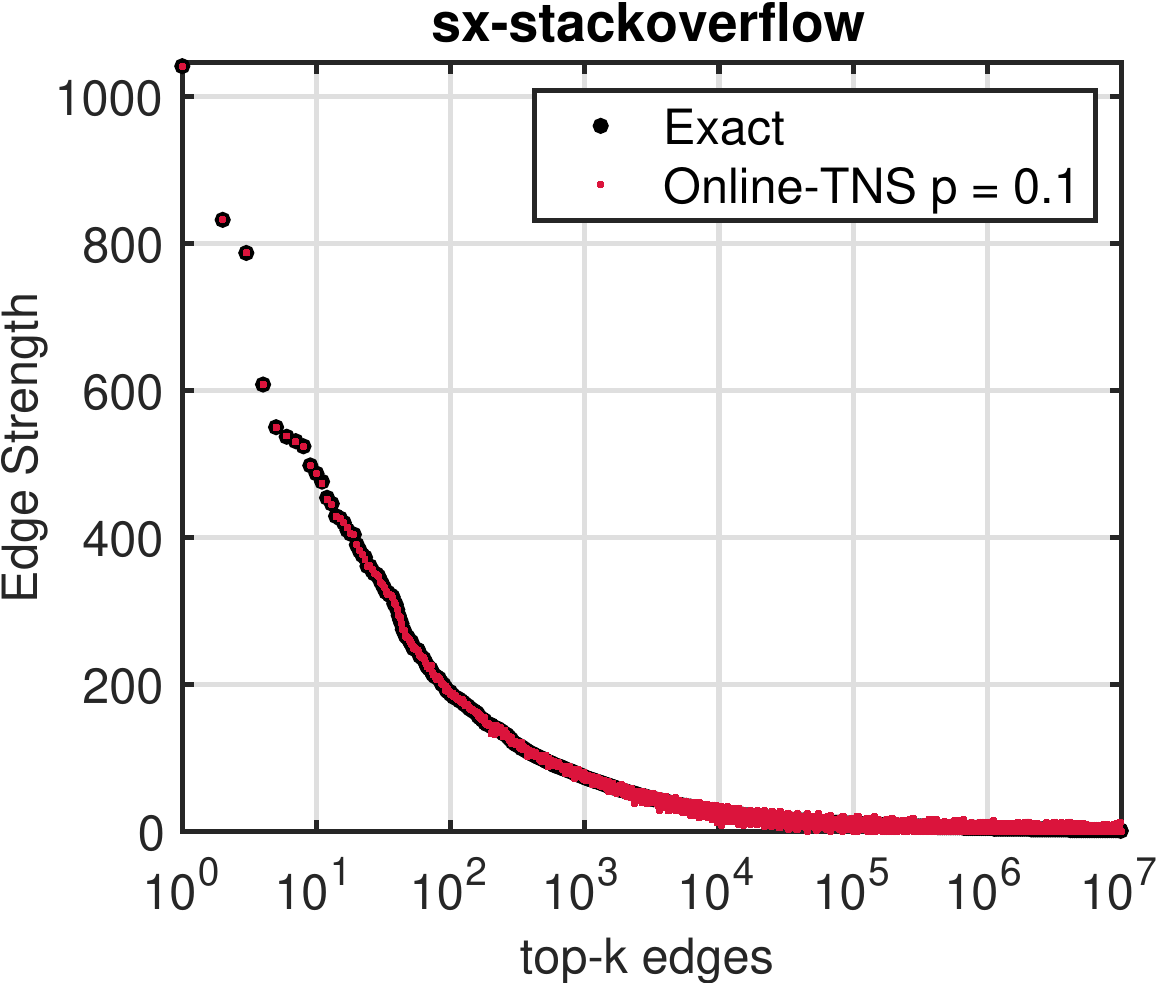}}
\subfigure
{\includegraphics[width=\figsz\linewidth]{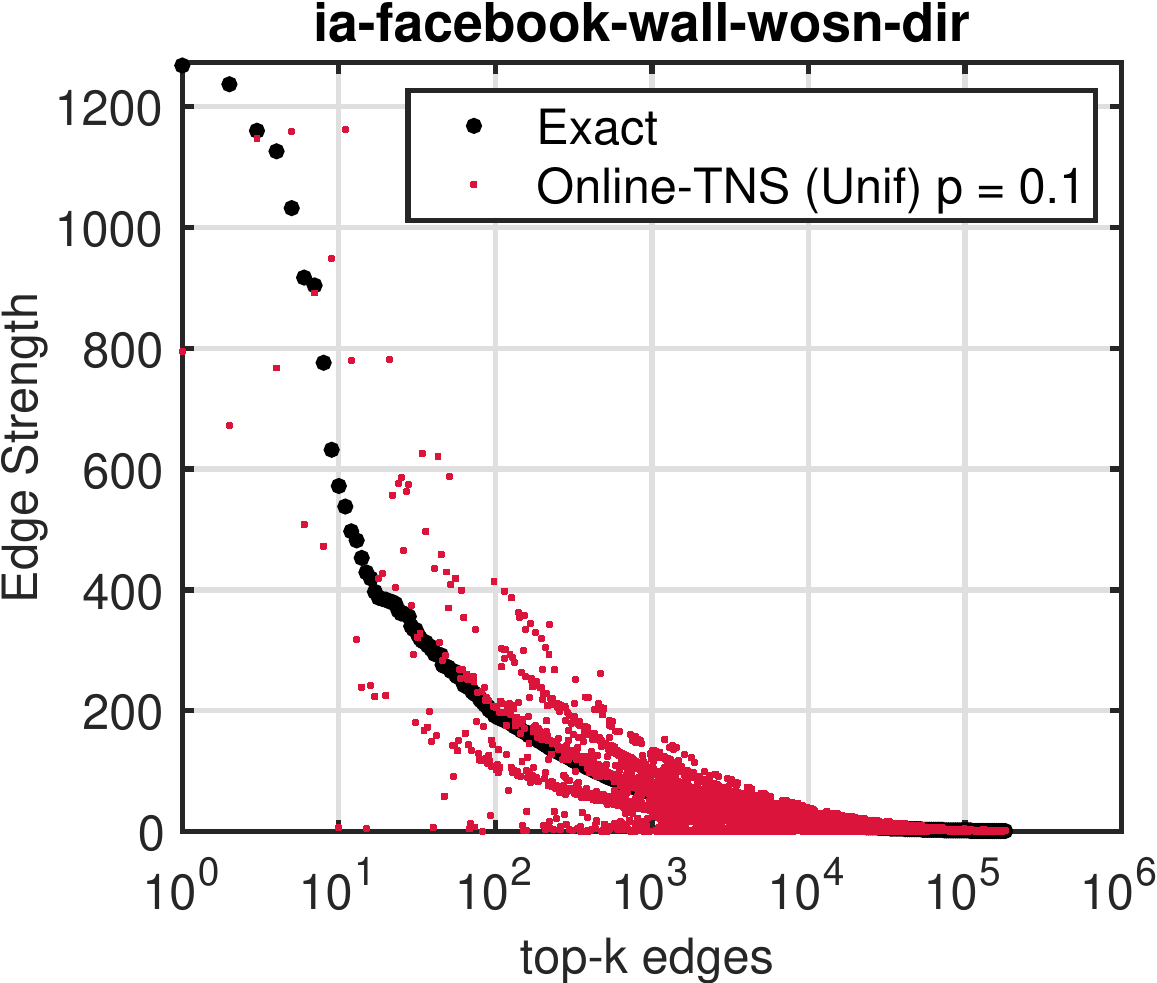}}
\hspace{3mm}
\subfigure
{\includegraphics[width=\figsz\linewidth]{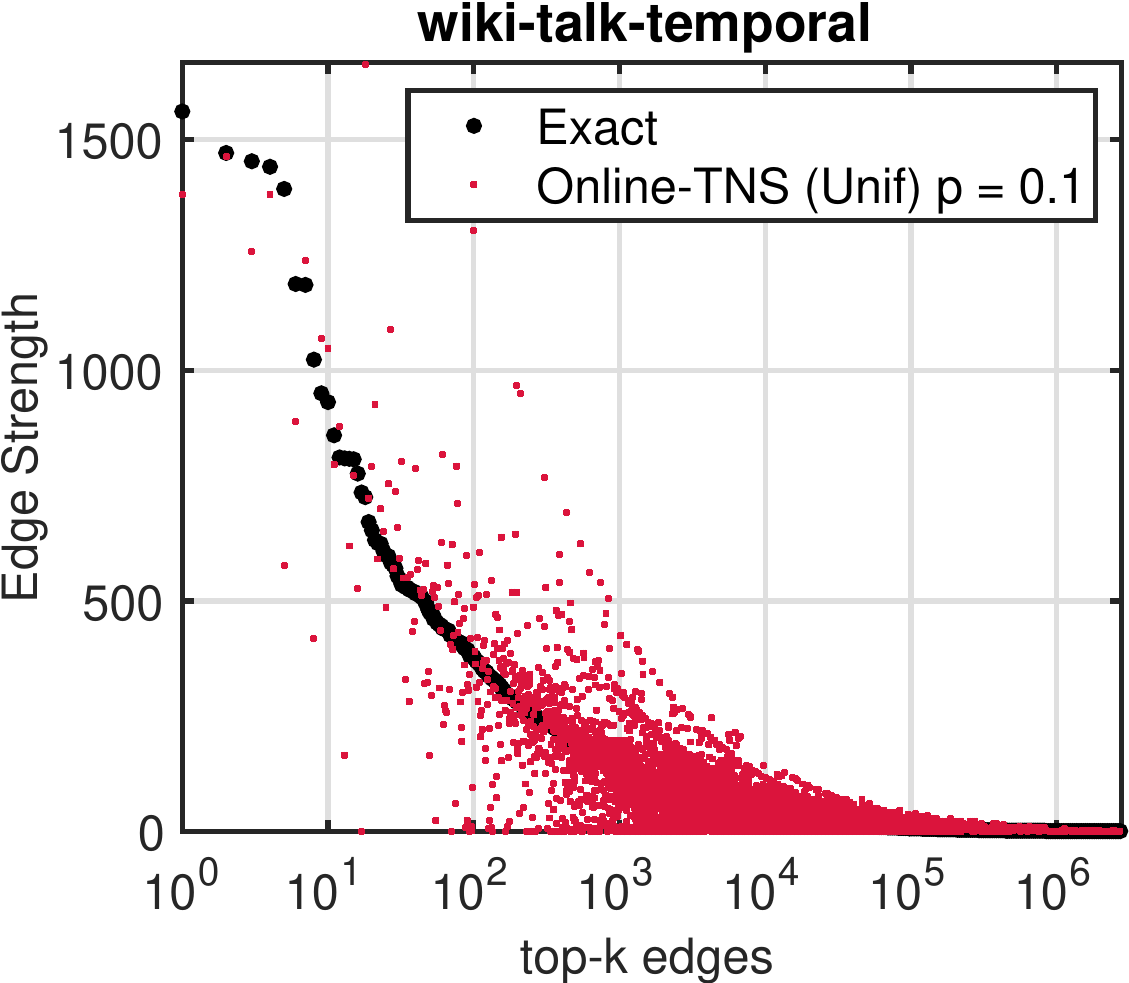}}
\hspace{3mm}
\subfigure
{\includegraphics[width=\figsz\linewidth]{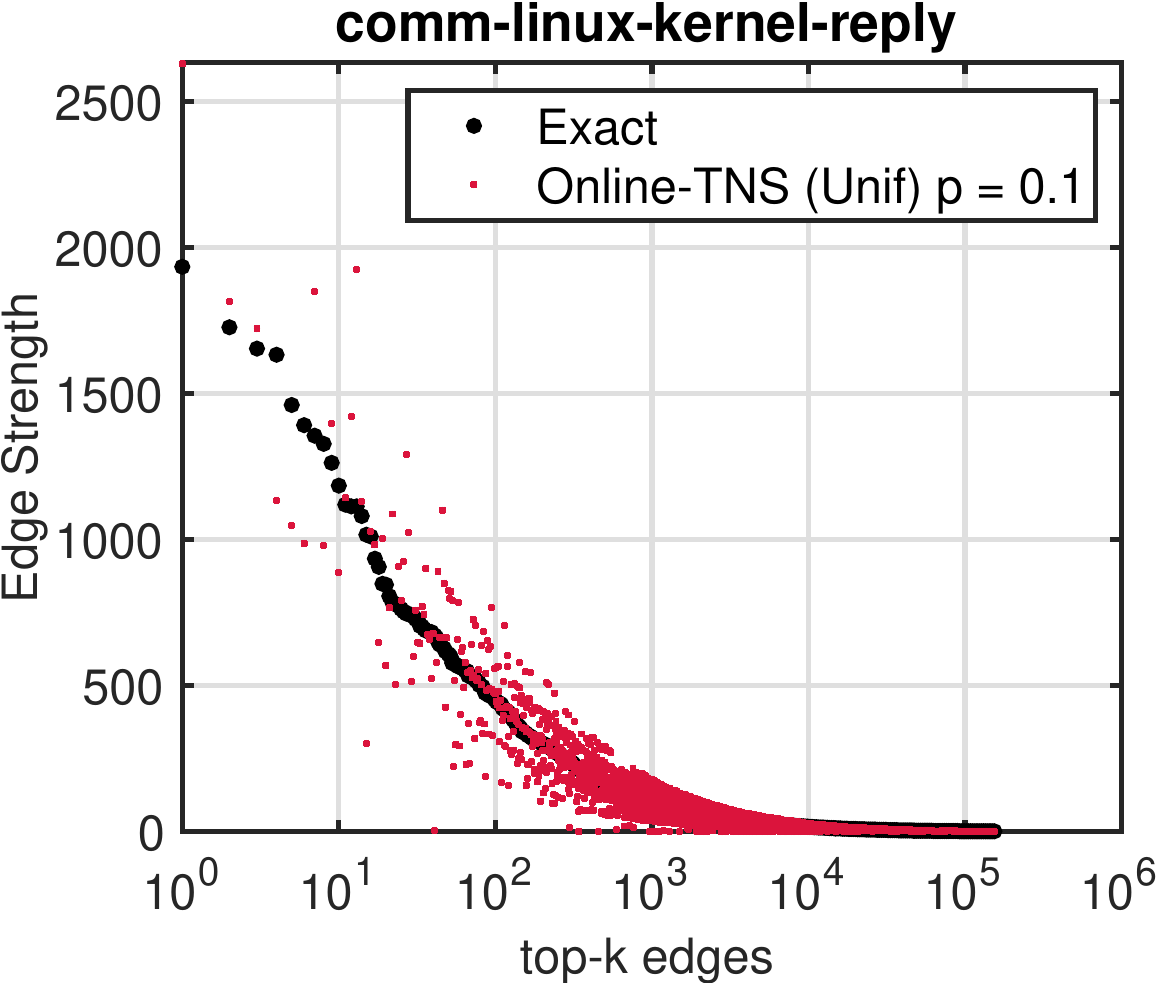}}
\hspace{3mm}
\subfigure
{\includegraphics[width=\figsz\linewidth]{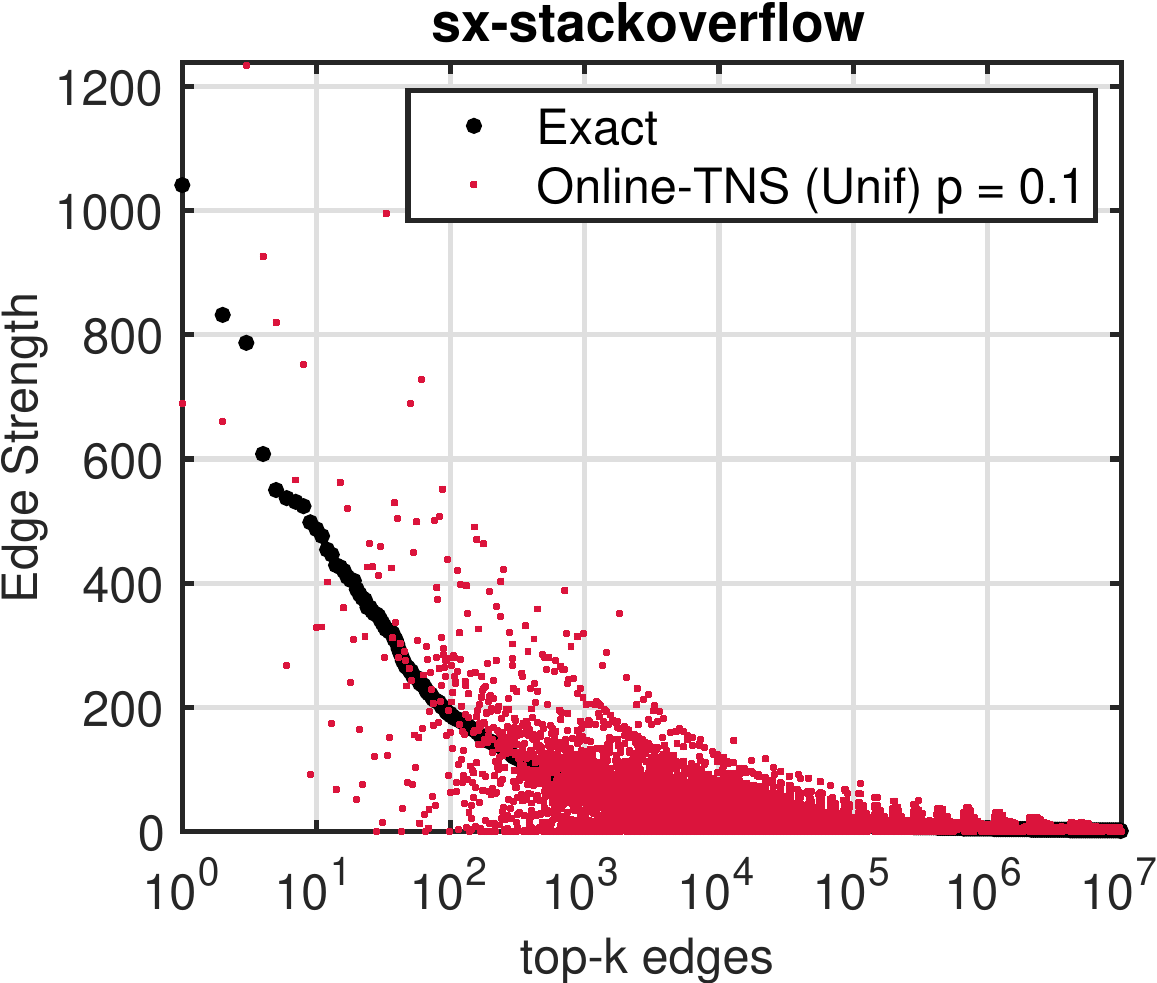}}

\caption{Temporal link strength (No-decay) estimated distribution vs exact distribution for top-k links .
Results are shown for sampling fraction $p=0.1$. (Top) Results for
Online-TNS Algorithm with adaptive sampling weights~\ref{alg-TNS}. 
(Bottom) Results for 
Online-TNS Algorithm with uniform sampling weights.
}
\label{fig:edge-weight}
\end{figure*}

\renewcommand{\figsz}{0.22}
\begin{figure*}[h!]
\centering
\subfigure
{\includegraphics[width=\figsz\linewidth]{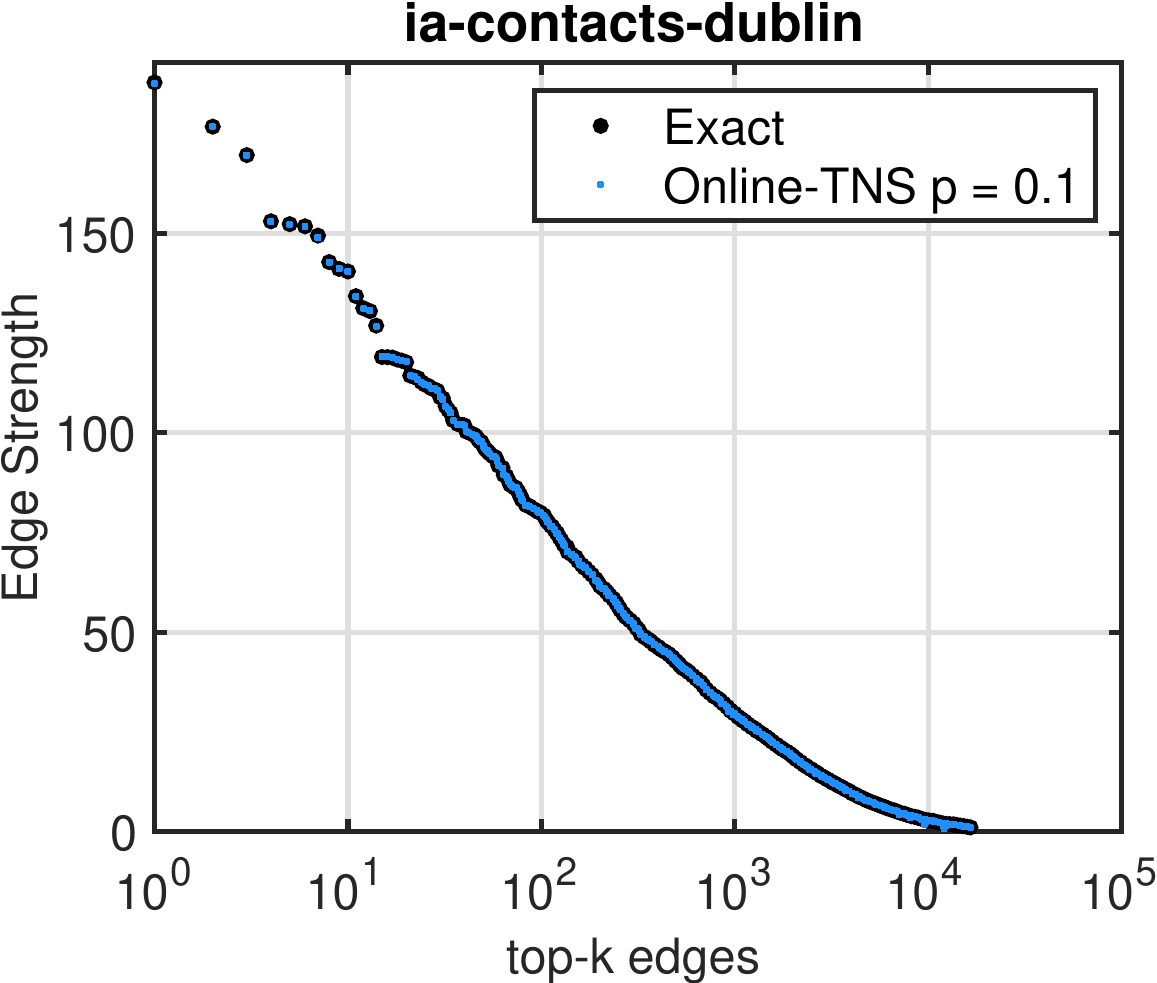}}
\hspace{3mm}
\subfigure
{\includegraphics[width=\figsz\linewidth]{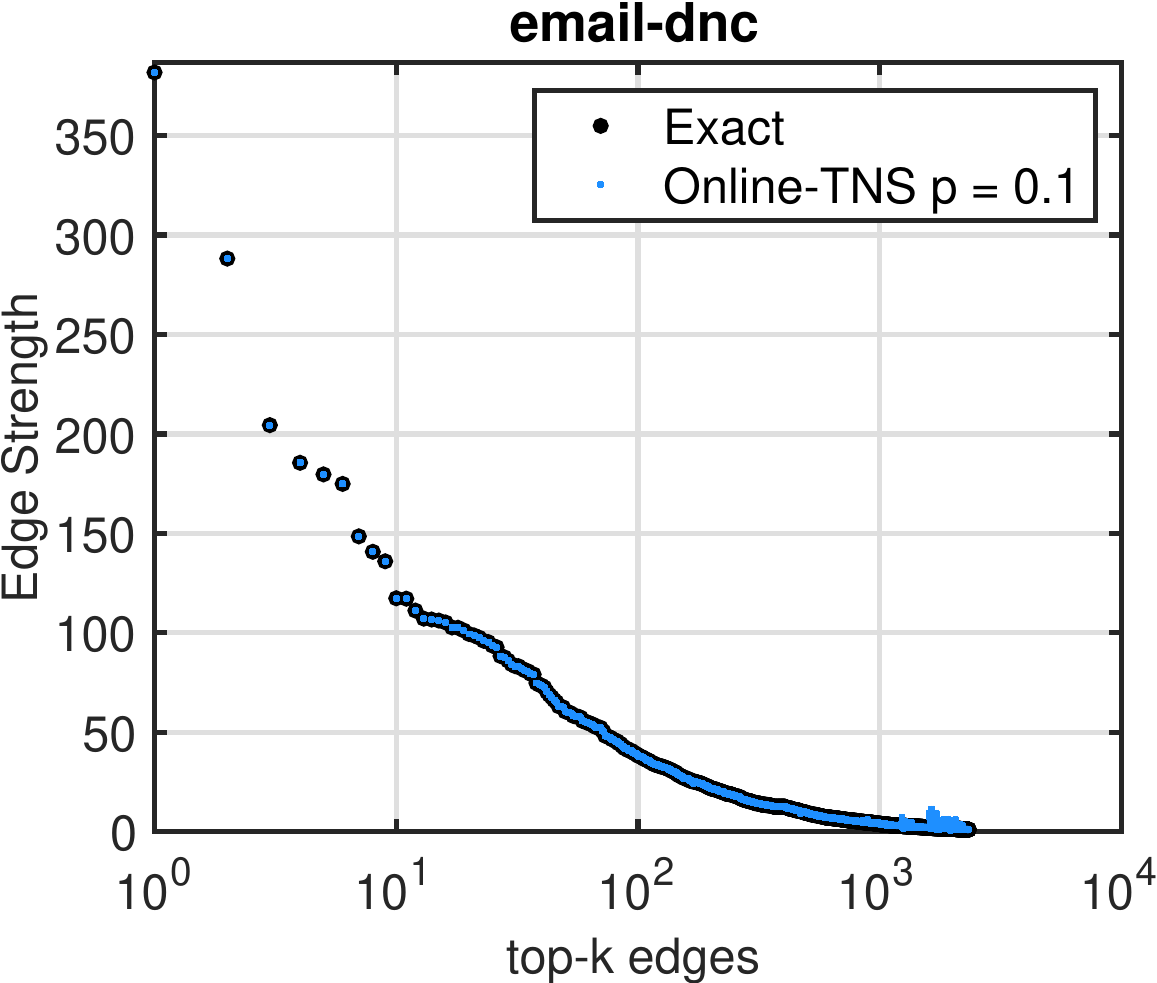}}
\hspace{3mm}
\subfigure
{\includegraphics[width=\figsz\linewidth]{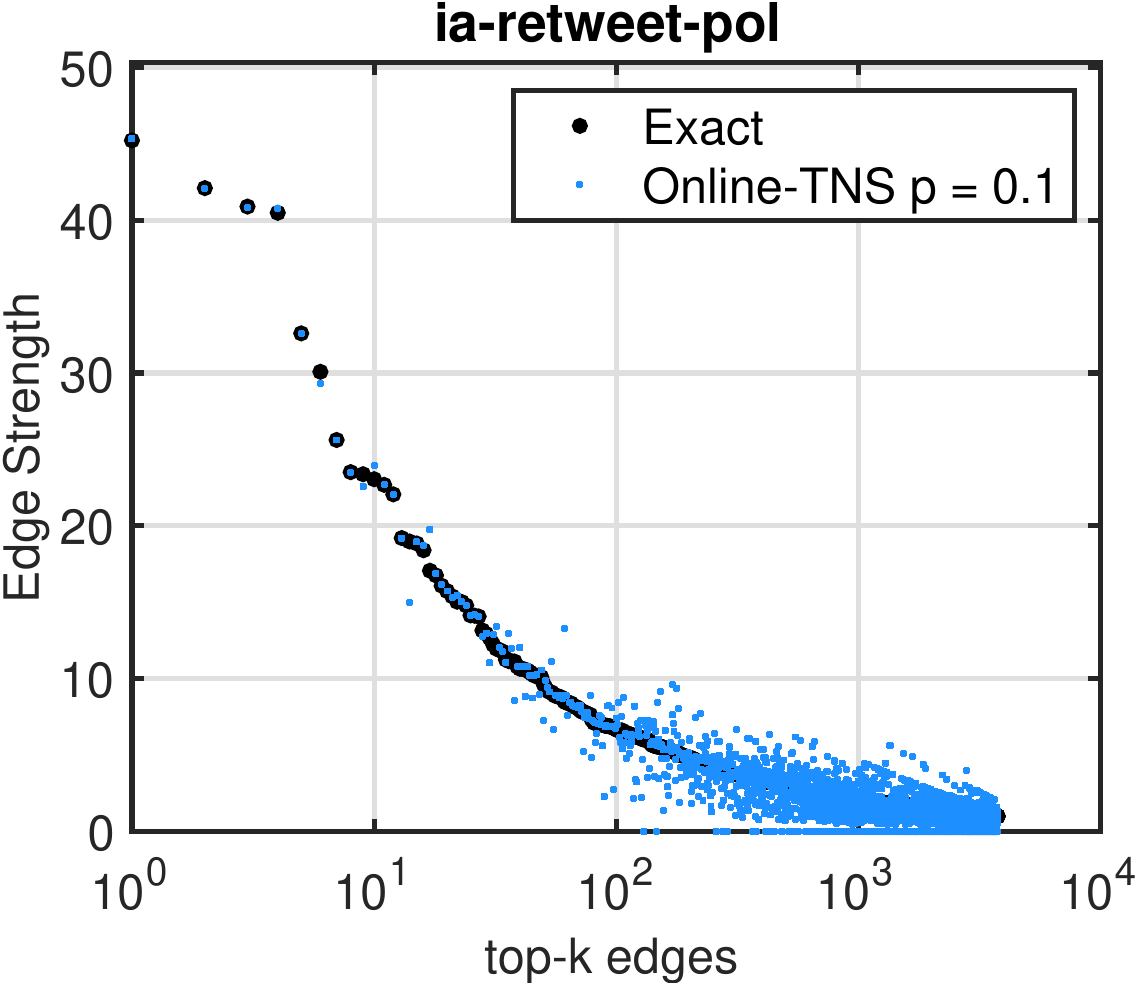}}
\hspace{3mm}
\subfigure
{\includegraphics[width=\figsz\linewidth]{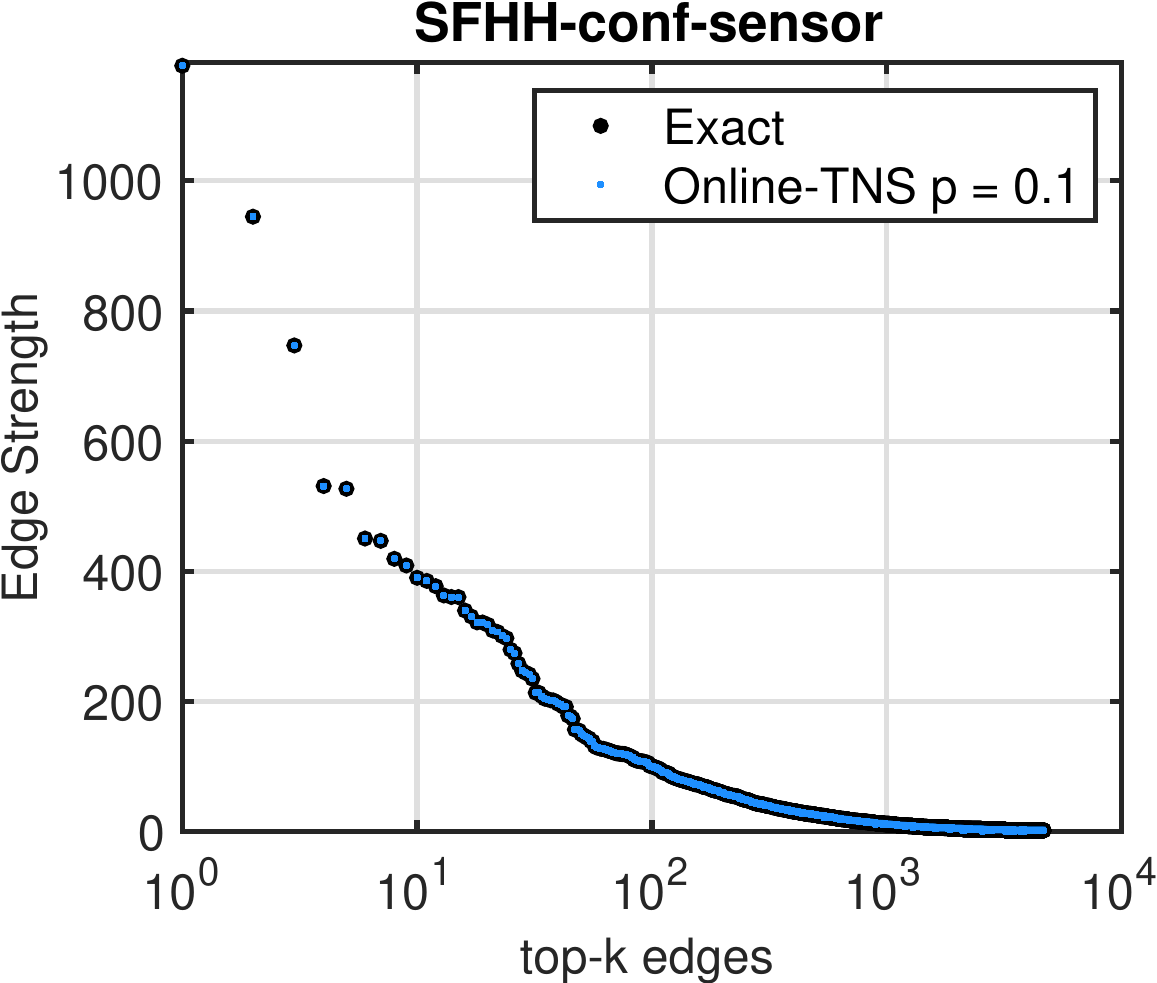}}

\subfigure
{\includegraphics[width=\figsz\linewidth]{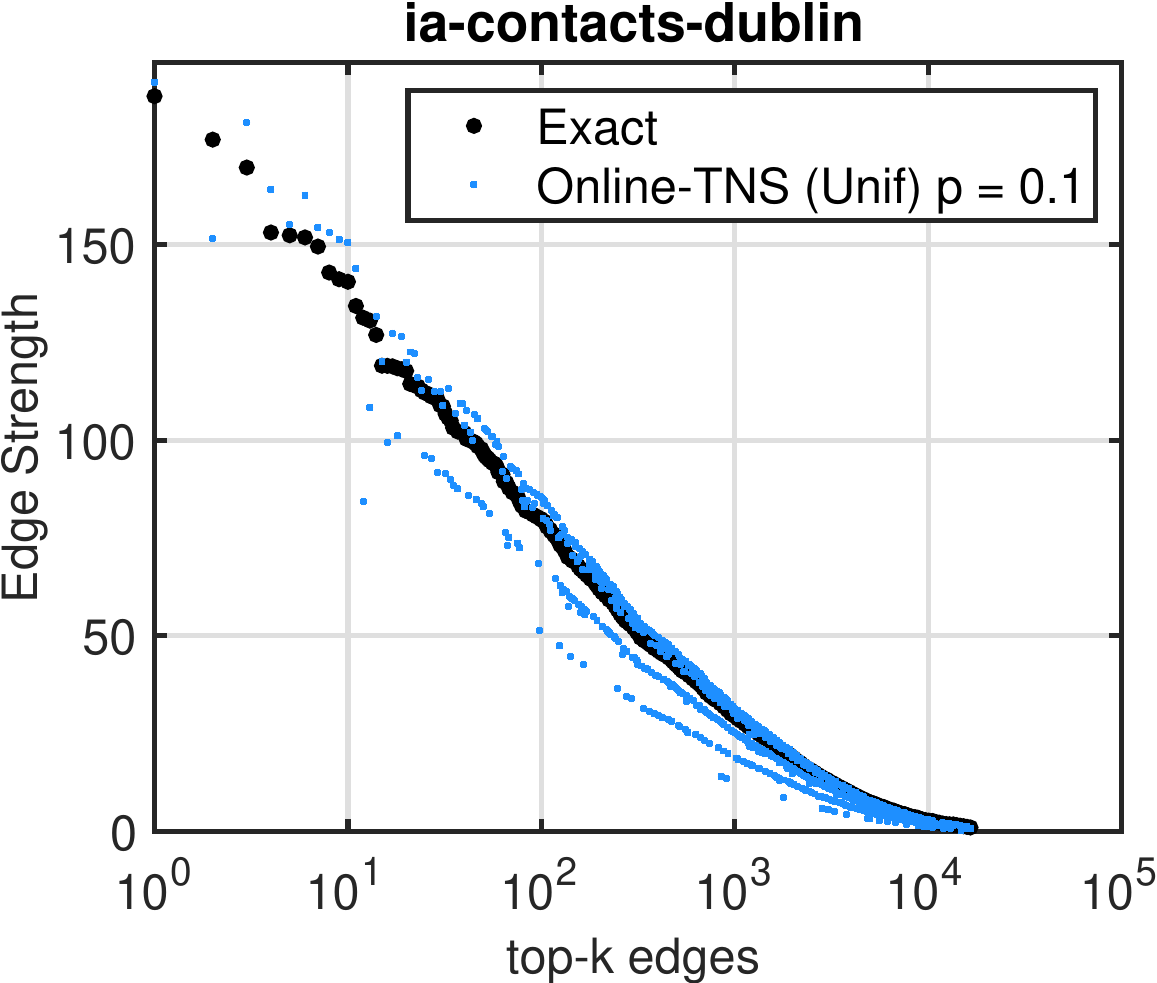}}
\hspace{3mm}
\subfigure
{\includegraphics[width=\figsz\linewidth]{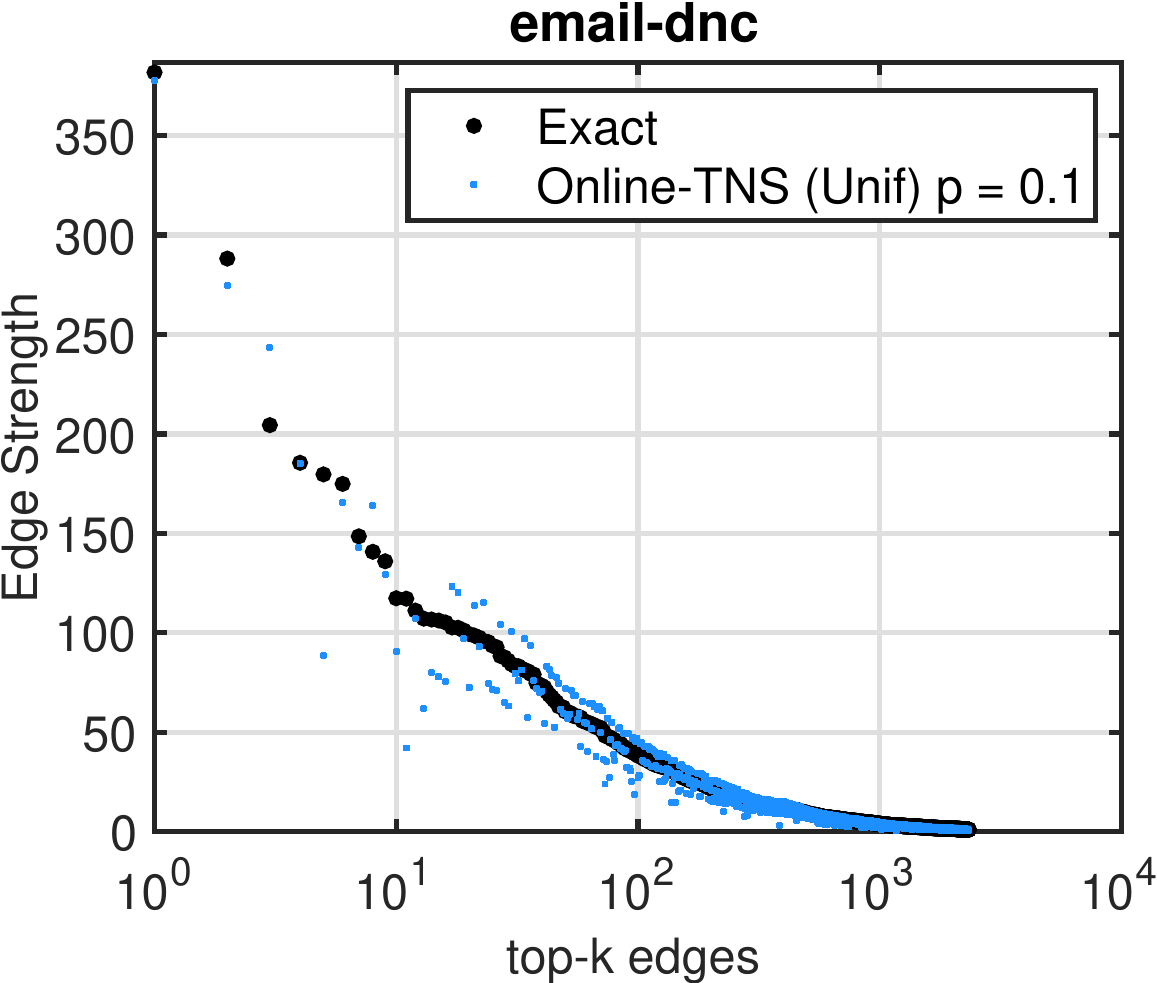}}
\hspace{3mm}
\subfigure
{\includegraphics[width=\figsz\linewidth]{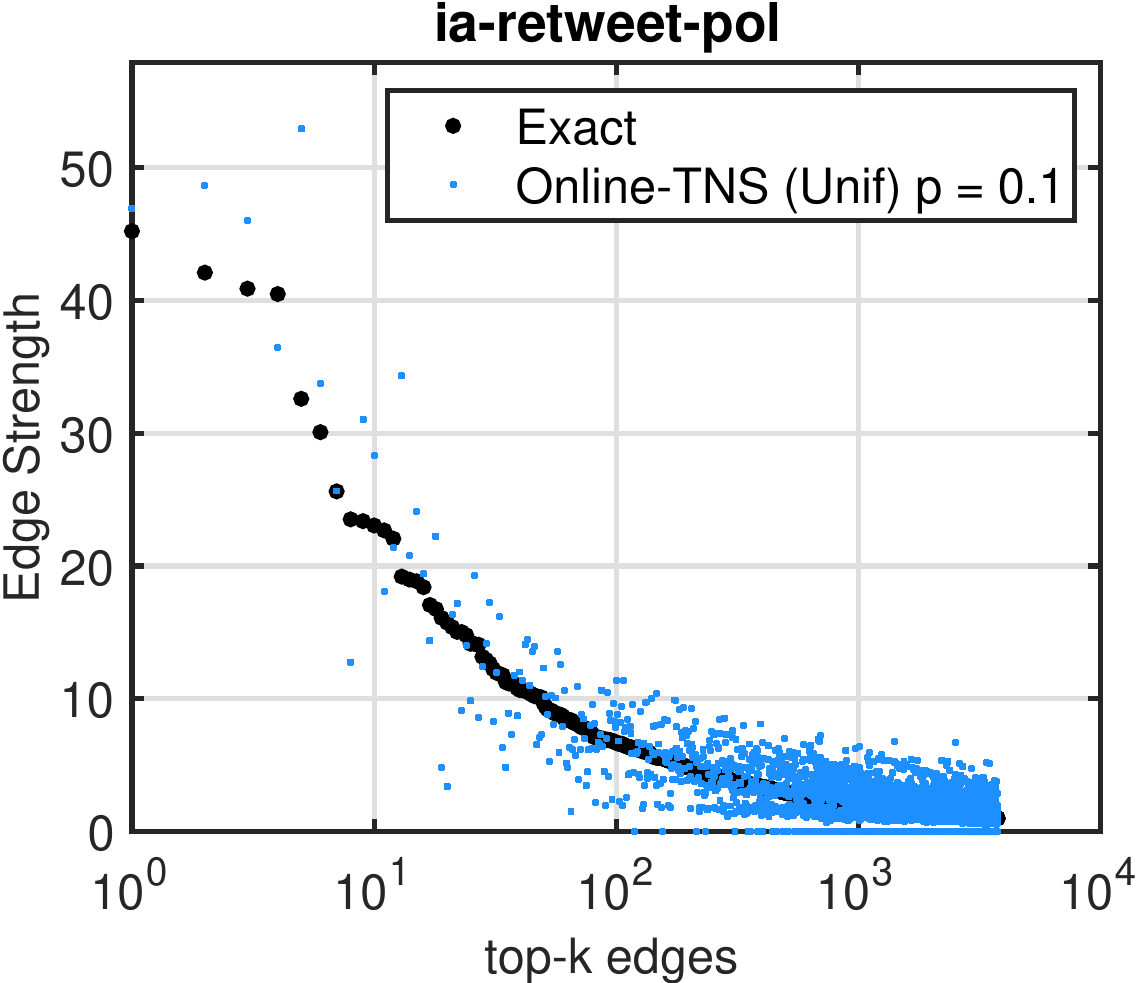}}
\hspace{3mm}
\subfigure
{\includegraphics[width=\figsz\linewidth]{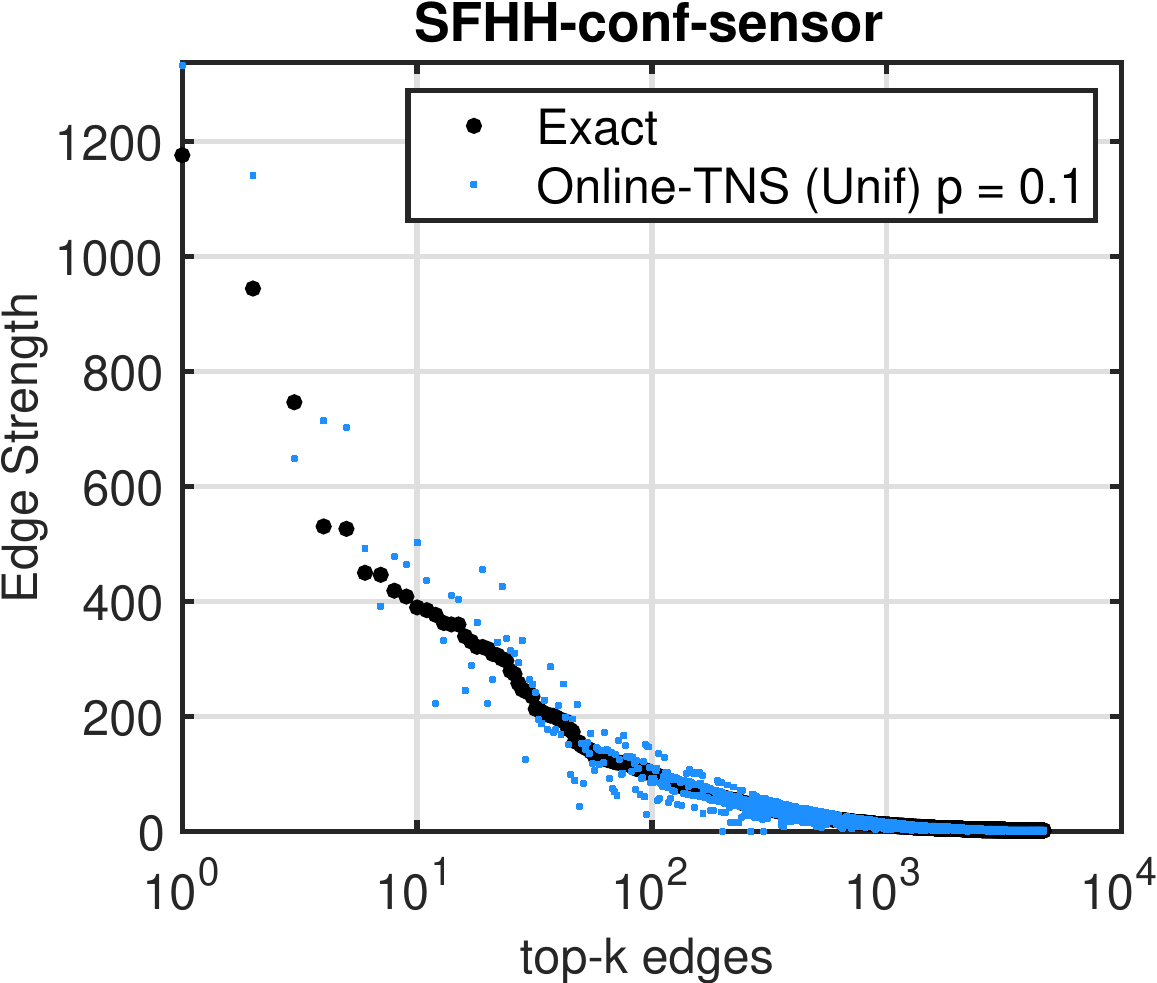}}
\subfigure

\caption{Temporal decay link strength estimated distribution vs exact distribution for top-k links .
Results are shown for sampling fraction $p=0.1$. (Top) Results for
Online-TNS Algorithm with adaptive sampling weights~\ref{alg-TNS}. 
(Bottom) Results for 
Online-TNS Algorithm with uniform sampling weights.
}
\label{fig:edge-weight-decay}
\end{figure*}

\subsubsection{Estimation of Temporal Link Strength} \label{sec:exp-temporal-link-strength}
Link strength is one of the most fundamental properties of temporal networks~\cite{xiang2010modeling}.
Therefore, estimating it in an online fashion is clearly important. 
Results using Alg.~\ref{alg-TNS} with adaptive sampling weights and unbiased estimators (see Section~\ref{sec:topology-adaptive-est}) for temporal link strength estimation are provided in Figures~\ref{fig:edge-weight} and~\ref{fig:edge-weight-decay} (top row) for the no-decay and link-decay models respectively. We show the distribution of the top-k edges ($k= 10$ million) and compare the exact link strength with the estimated link strength. Notably our approach not only accurately estimates the strength of the link but also captures the correct order of the links (top-links ordered by their strength from high to low). 
From Figures~\ref{fig:edge-weight} and~\ref{fig:edge-weight-decay} (top row), we observe the exact and estimated link strengths for the top-k edges to be nearly indistinguishable from one another. We also compare to Alg.~\ref{alg-TNS} with uniform sampling weights, \ie, Online-TNS (Unif), in which edges are assigned uniform sampling weights at the sampling time, and fixed for the rest of the streaming process, results are shown in Figures~\ref{fig:edge-weight} and~\ref{fig:edge-weight-decay} (bottom row). While the estimated link strengths from Online-TNS with adaptive weights are nearly identical to the exact link strengths, estimated distributions from uniform sampling weights are significantly worse. Unlike uniform sampling weights where the weights remain constant, using the adaptive sampling weights helps the algorithm to adapt to the changing topology of the streaming network, which leads to favoring the retention of edges with higher strength. 

\begin{table}[h!]
\centering
\caption{%
No-decay Results:  Relative spectral norm (\ie, $\norm{\mathbf{C} - \hat{\mathbf{C}}}_2/\norm{\mathbf{C}}_2$) for sampling fraction $p=0.1$, comparison between adaptive sampling weights (Alg.~\ref{alg-TNS}) and uniform sampling weights. 
}
\label{table:results-spectral-error}
\renewcommand{\arraystretch}{1.10} 
\setlength{\tabcolsep}{8.0pt} 
\begin{tabularx}{0.7\linewidth}{@{}
r cc
HHHHH H
@{}
}
\toprule

\textsc{Temporal Network} & \textbf{Adaptive} \textbf{(Alg.~\ref{alg-TNS})} & \textbf{Uniform} &  \\
\midrule
\textsf{ia-facebook-wall} & 0.0090 & 0.3976 &  \\
\textsf{sx-stackoverflow}  & 0.0992 & 0.4360 & \\ 
\textsf{wiki-talk} & 0.1020 & 0.4272 & \\
\textsf{comm-linux-reply} & 0.0041 & 0.1978 &  \\
\textsf{fb-forum} & 0.0390 & 0.2640 & \\
\textsf{ia-enron-email} & 0.0098 & 0.4080 & \\
\textsf{SFHH-conf-sensor} & 0.0090 & 0.2770 & \\
\textsf{ia-contacts-hyper} & 0.0034 & 0.0529 & \\
\bottomrule
\end{tabularx}
\end{table}

\begin{table}[h!]
\centering
\caption{%
Link-decay Results:  Relative spectral norm (\ie, $\norm{\mathbf{C} - \hat{\mathbf{C}}}_2/\norm{\mathbf{C}}_2$) for sampling fraction $p=0.1$, comparison between adaptive sampling weights (Alg.~\ref{alg-TNS}) and uniform sampling weights. 
}
\label{table:results-spectral-error-decay}
\renewcommand{\arraystretch}{1.10} 
\setlength{\tabcolsep}{8.0pt} 
\begin{tabularx}{0.7\linewidth}{@{}
r cc
HHHHH H
@{}
}
\toprule

\textsc{Temporal Network} & \textbf{Adaptive} \textbf{(Alg.~\ref{alg-TNS})} & \textbf{Uniform} &  \\
\midrule
\textsf{CollegeMsg} & 0.0797 & 0.0819 & \\
\textsf{ia-retweet-pol} & 0.1451 & 0.2723 & \\ 
\textsf{ia-contacts-dublin} & 0.0058 & 0.1624 & \\
\textsf{ia-facebook-wall} & 0.0335 & 0.0914 & \\
\textsf{ia-contacts-hyper} & 0.0009 & 0.0511 & \\
\textsf{SFHH-conf-sensor} & 0.0034 & 0.2283 & \\
\textsf{email-dnc} & 0.0143 & 0.1506 & \\
\bottomrule
\end{tabularx}
\vspace{-3mm}
\end{table}

In Tables~\ref{table:results-spectral-error} and~\ref{table:results-spectral-error-decay}, we show the relative spectral norm for online-TNS with adaptive and uniform sampling weights for no-decay and link-decay models respectively. The relative spectral norm is defined as $\norm{\mathbf{C} - \hat{\mathbf{C}}}_2/\norm{\mathbf{C}}_2$, where $\mathbf{C}$ is the exact time-dependent adjacency matrix of the input graph, whose entries represent the link strength, $\hat{\mathbf{C}}$ is the average estimated time-dependent adjacency matrix (estimated from the sample), and $\norm{\mathbf{C}}_2$ is the spectral norm of $\mathbf{C}$. The spectral norm $\norm{\mathbf{C} - \hat{\mathbf{C}}}_2$ is widely used for matrix approximations~\cite{achlioptas2013near}. $\norm{\mathbf{C} - \hat{\mathbf{C}}}_2$ measures the strongest linear trend of $\mathbf{C}$ not captured by the estimate $\hat{\mathbf{C}}$. The results show Online-TNS with adaptive sampling weights significantly outperforms the variant using uniform sampling weights, and captures the linear trend and structure of the data better, with an average $20\%$ improvement over the uniform sampling weights. We also measured the error using relative Frobenius norm (\ie, $\norm{\mathbf{C} - \hat{\mathbf{C}}}_F/\norm{\mathbf{C}}_F$) and observed similar conclusions. 

\subsubsection{Temporally Weighted Motif Estimation}
\label{sec:exp-temporal-motifs}
Recall that our formulation of temporal motif differs from previous work in that instead of counting motifs that occur within some time period $\delta$, our formulation focuses on counting temporally weighted motifs where the motifs are weighted such that motifs that occur more recent and contain active links are assigned larger weight than those occurring in the distant past. 
This formulation is clearly more useful and important, since it can capture the evolution of the network and relationships at a continuous-time scale. Also, this formulation would be useful for many practical applications involving prediction and forecasting since it appropriately accounts for temporal statistics (in this case, motifs) that occur more recently, which are by definition more predictive of some future event. 
In Table~\ref{table:results-motif-est-with-and-without-decay}, we show results for estimating the temporally weighted motif counts. For brevity, we only show results for triangle motifs (both decay and no-decay models), but the proposed framework and unbiased estimators in Algorithm~\ref{alg-TNS} and Section~\ref{sec:topology-adaptive-est} generalize to any network motifs of larger size. 
For these results, we set the decay factor $\delta$ to $30$ days. Notably, all of the temporally decayed motif count estimates have a relative error that is less than $0.03$ as shown in Table~\ref{table:results-motif-est-with-and-without-decay} (Adaptive). 
Nevertheless, this demonstrates that our efficient temporal sampling framework is able to leverage accurate estimators for even the smallest sample sizes. Table~\ref{table:results-motif-est-with-and-without-decay} also shows results of Algorithm~\ref{alg-TNS} with uniform sampling weights. Overall, we observe that Online-TNS with adaptive sampling weights generally outperforms Online-TNS with uniform sampling weights. We conjecture that Online-TNS with adaptive sampling is general and would be useful in various applications beyond the scope of this paper. In particular, for applications that require importance sampling with the ability to combine both topology (\eg, edge multiplicity, temporal strength, subgraphs) and auxiliary information (\eg, node/edge attributes and features). We will explore these applications in future work.

\begin{table*}[t!]
\centering
\caption{%
Results for temporally weighted motif count estimation. 
Online TNS with adaptive weights compared to online TNS with uniform weights.
Relative error for triangle counts are reported using $p=0.1$.
}
\label{table:results-motif-est-with-and-without-decay}
\renewcommand{\arraystretch}{1.20} 
\small 
\setlength{\tabcolsep}{6.3pt} 
\begin{tabularx}{0.90\linewidth}{@{}
r H 
XHXX
XHXX
HHHHH H
@{}
}
\toprule
& & \multicolumn{4}{c}{\sc Without Decay} & \multicolumn{4}{c}{\sc With Decay} \\
\cmidrule(l{1pt}r{5pt}){3-6}
\cmidrule(l{4pt}r{2pt}){7-10}

\textsc{Temporal Network} & $|E_T|$ & 
\textbf{Exact} & 
\textbf{est.} & 
\textbf{Adaptive} &
\textbf{Uniform} & 
\textbf{Exact} & 
\textbf{est.} & 
\textbf{Adaptive} &
\textbf{Uniform} 
\\
\midrule

\textbf{sx-stackoverflow} & 48M & 15B & 15B & 0 & 0.0133 & 168M & 168M & 0.00004 &  0.0006 \\
\textbf{ia-facebook-wall} & 856k & 435M & 435M & 0.0004 & 0.0605 & 9.9M & 9.9M & 0.0004 &  0.0247 \\
\textbf{wiki-talk} & 6.1M & 12B & 12B & 0.0003 & 0.0171 & 394M & 394M & 0.0001 & 0.0022 \\
\textbf{CollegeMsg} & 60k & 6.2M & 6.1M & 0.0148 & 0.0545 & 2.0M & 2.0M & 0.0003 & 0.0277 \\
\textbf{ia-retweet-pol} & 61k & 380k & 371k & 0.0236 & 0.0341 & 147k & 147k & 0.0001 & 0.0247 \\
\textbf{ia-prosper-loans} & 3.4M & 1.4M & 1.4M & 0.0056 & 0.0078 & 232k & 230k & 0.0067 & 0.0048 \\
\textbf{comm-linux-reply} & 984k & 148B & 148B & 0 & 0.0055 & 242M & 242M & 0.00003 & 0.000024\\
\textbf{email-dnc} & 37k & 483M & 483M & 0 & 0.0004 & 251M & 251M & 0.00006 & 0.0097 \\
\textbf{ia-enron-email} & 1.1M & 14B & 14B & 0 & 0.0726 & 329M & 329M & 0.0003 & 0.0034 \\
\textbf{ia-contacts-dublin} & 416k & 382M & 382M & 0.00005 & 0.00008 & 381M & 381M & 0 & 0.0001 \\
\textbf{fb-forum} & 34k & 3.3M & 3.3M & 0.0036 & 0.1761 & 763k & 758k & 0.0067 & 0.0137 \\
\textbf{ia-contacts-hyper09} & 21k & 93M & 93M & 0 & 0.00041 & 88M & 88M & 0 & 0.00086\\
\textbf{SFHH-conf-sensor} & 70k & 622M & 622M & 0 & 0.0665 & 604M & 604M & 0.0002 & 0.0434 \\
\textbf{sx-superuser} & 1.1M & 83M & 82M & 0.0072 & 0.0137 & 2.1M & 2.1M & 0.0013 & 0.0016 \\
\textbf{sx-askubuntu} & 727k & 71M & 70M & 0.0035 & 0.0077 & 2.7M & 2.7M & 0.0069 & 0.0135 \\
\textbf{sx-mathoverflow} & 390k & 269M & 269M & 0.0008 & 0.0688 & 2.8M & 2.8M & 0.0004 & 0.0003 \\
\bottomrule
\end{tabularx}
\end{table*}

\subsubsection{Sensitivity Analysis of the Decay Factor}
\label{sec:exp-delta-analysis}
We now study the impact of the choice of the decay factor $\delta$ on the quality of the estimates. When we choose the value for the mean lifetime $\delta$, it is more intuitive to think about the half-life $\eta_{1/2}$ of an edge. The half-life of an edge gives the amount of time for an edge to lose half of its weight/strength in the absence of new interactions. Given $\delta > 0$, the half-life of an edge is defined as $\eta_{1/2} = \delta \ln 2$. As such the choice of $\delta$ is crucial to filter-out and down-weight the old activity in continuously evolving networks. When the half-life is short (\ie, the mean lifetime $\delta$ is short), the interactions result in weak links among nodes, where the link strength dies off quickly unless the interactions occur more frequently and sustainably among the nodes. On the other hand, when the half-life is long (\ie, $\delta$ is long), links are able to build a momentum and strengthen from interactions that otherwise occurred too far in time. In Figure~\ref{fig:edge-weight-delta}, we show the estimated link strengths obtained using Online-TNS with adaptive weights for two different decay factors $\delta = 1$-day and $\delta = 7$-days. Clearly, the range and scale of the link strength is much higher when $\delta$ is long. In addition, we also observe that the estimated distributions of link strengths for the top-k edges are nearly indistinguishable from the exact distributions. This is due to the unbiasedness property of the proposed estimators, as the unbiasedness property holds regardless the choice of the decay parameter $\delta$. In Table~\ref{table:results-delta}, we provide the temporally weighted motif count estimation obtained using Online TNS (Alg~\ref{alg-TNS}) with adaptive weights and different decay $\delta$ parameters. Clearly, the relative error is small for $\delta$ values, which is a result of the unbiasedness property of the proposed estimators.

\begin{table*}[h!]
\centering
\caption{%
Sensitivity analysis of the decay factor $\delta$. Results for temporally weighted motif count estimation. 
Relative error reported using $p=0.1$ for triangle counts using Online TNS (Alg~\ref{alg-TNS}) with adaptive weights and different decay $\delta$ parameters.
}
\label{table:results-delta}
\renewcommand{\arraystretch}{1.05} 
\setlength{\tabcolsep}{10.0pt} 
\begin{tabularx}{0.7\linewidth}{
r rrr
}
\toprule
& \multicolumn{3}{c}{\sc Decay Factor $\delta$} \\
\cmidrule(l{0pt}r{5pt}){2-4}

\textsc{Temporal Network} & \textbf{1-day} & \textbf{1-week} & \textbf{1-month}  \\

\midrule
\textsf{CollegeMsg} & 0.002 & 0.003 & 0.0003  \\
\textsf{ia-retweet-pol} & 0.074 & 0.023 & 0.0001  \\
\textsf{ia-contacts-dublin} & 0.00003 & 0 & 0  \\
\textsf{ia-facebook-wall} & 0.0091 & 0.0011 & 0.0004  \\
\textsf{email-dnc} & 0.00004 & 0.00005 & 0.00006  \\
\bottomrule
\end{tabularx}
\end{table*}
\renewcommand{\figsz}{0.28}
\begin{figure*}[h!]
\centering
\subfigure
{\includegraphics[width=\figsz\linewidth]{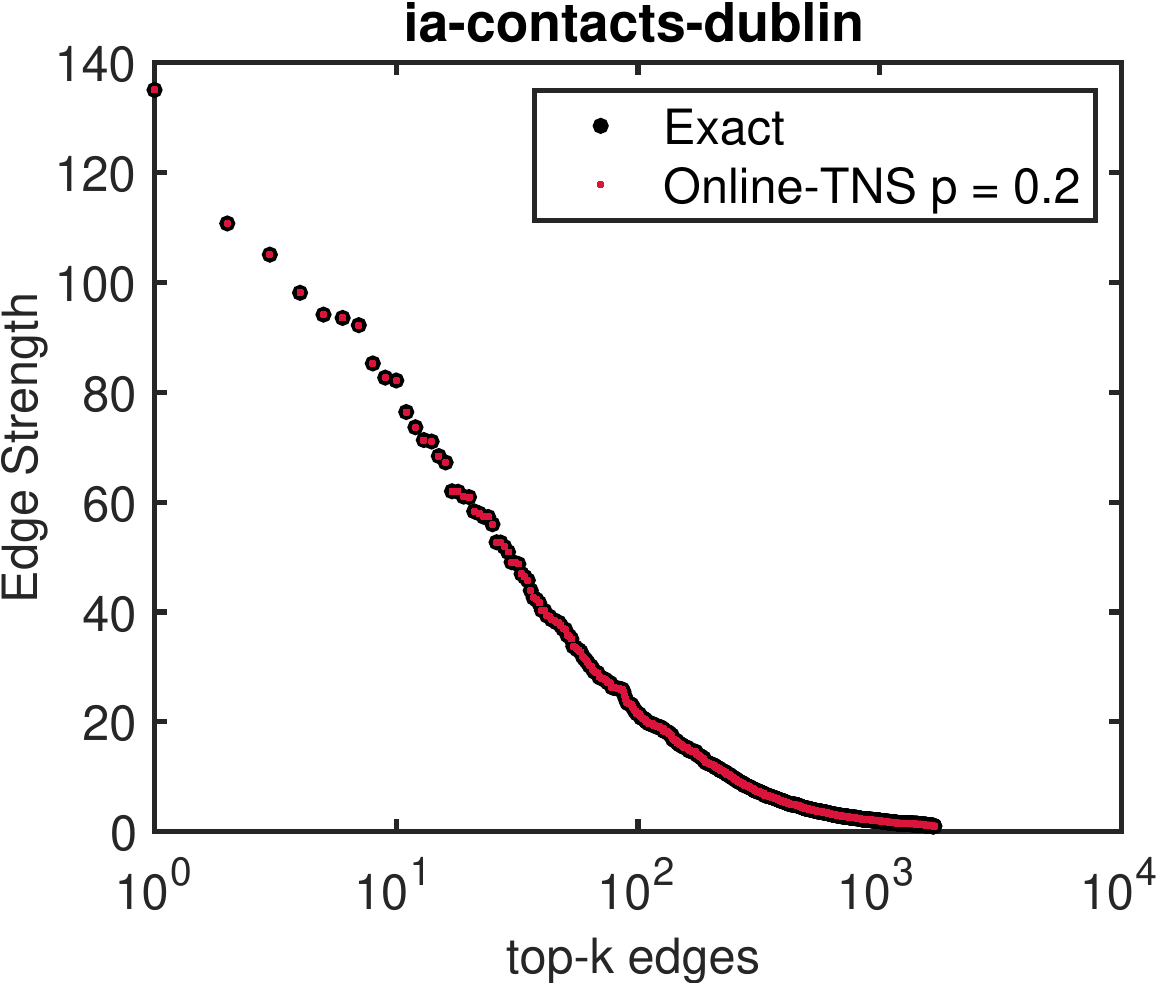}}
\hspace{3mm}
\subfigure
{\includegraphics[width=\figsz\linewidth]{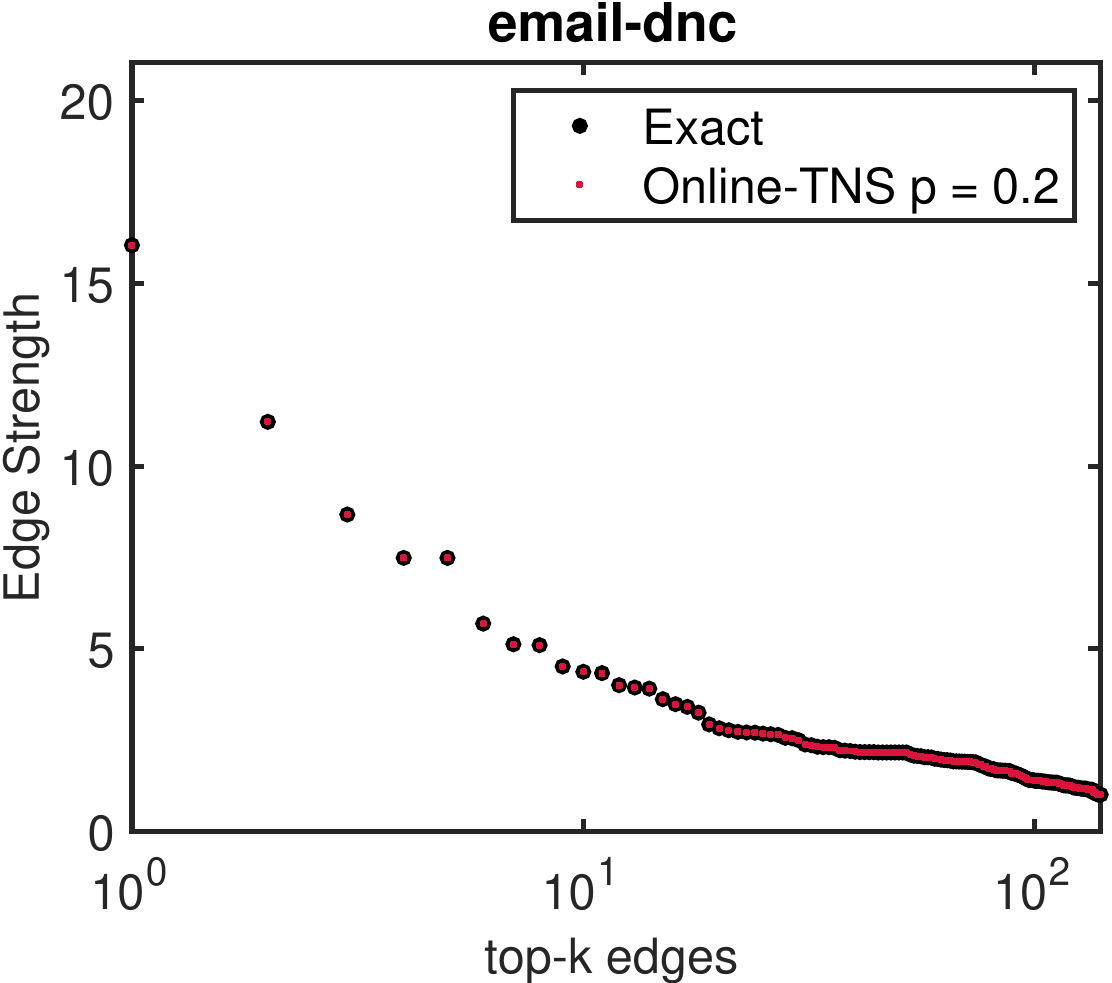}}
\hspace{3mm}
\subfigure
{\includegraphics[width=\figsz\linewidth]{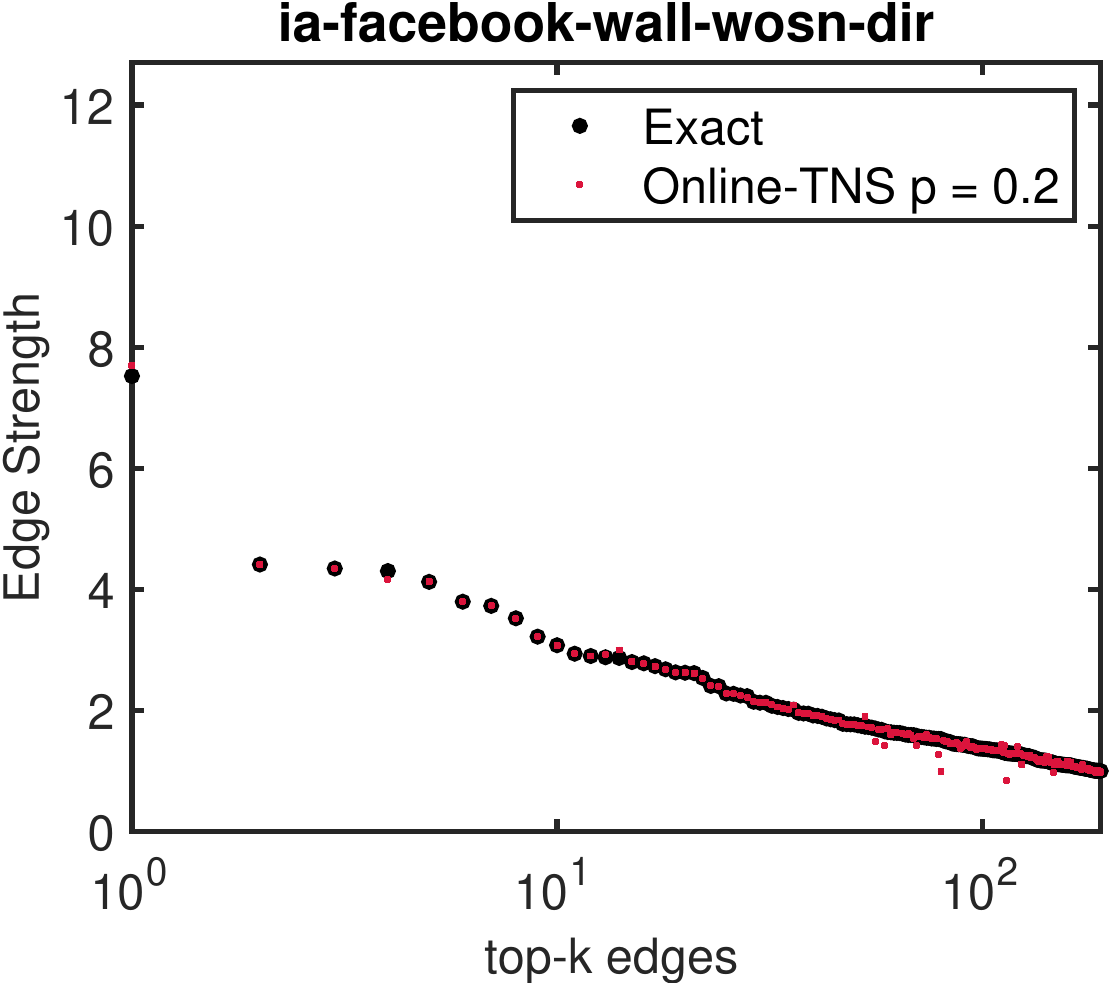}}
\vspace{-3mm}
\subfigure
{\includegraphics[width=\figsz\linewidth]{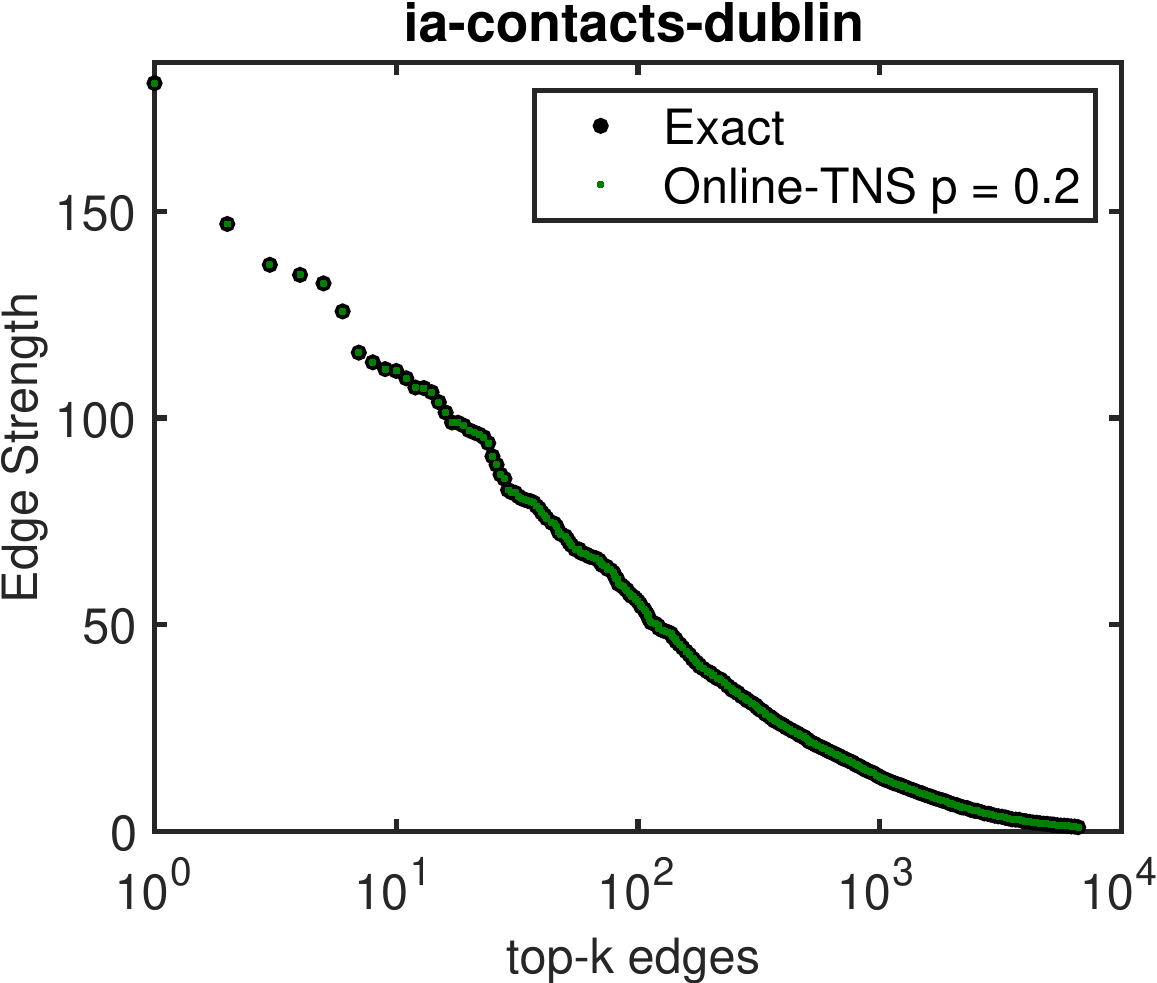}}
\hspace{3mm}
\subfigure
{\includegraphics[width=\figsz\linewidth]{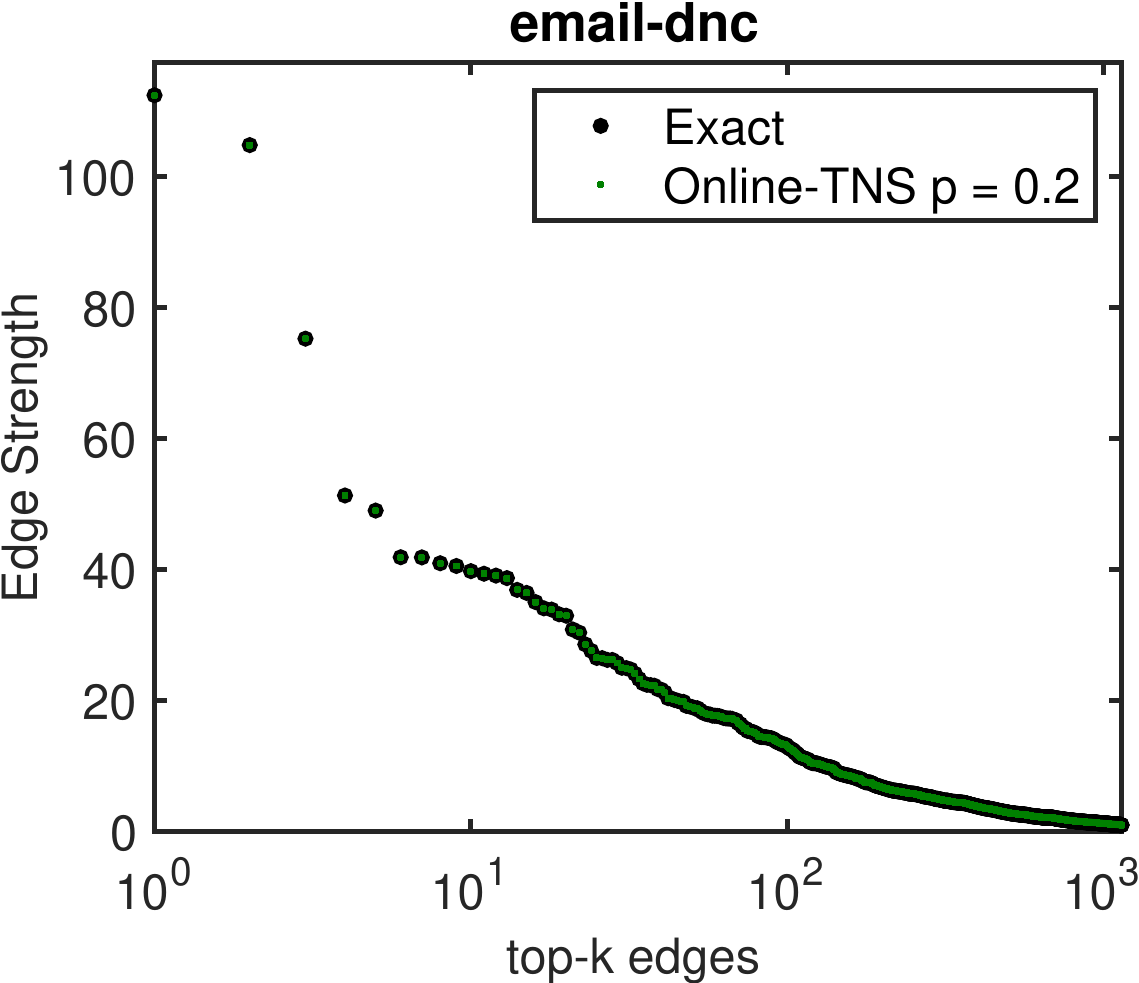}}
\hspace{3mm}
\subfigure
{\includegraphics[width=\figsz\linewidth]{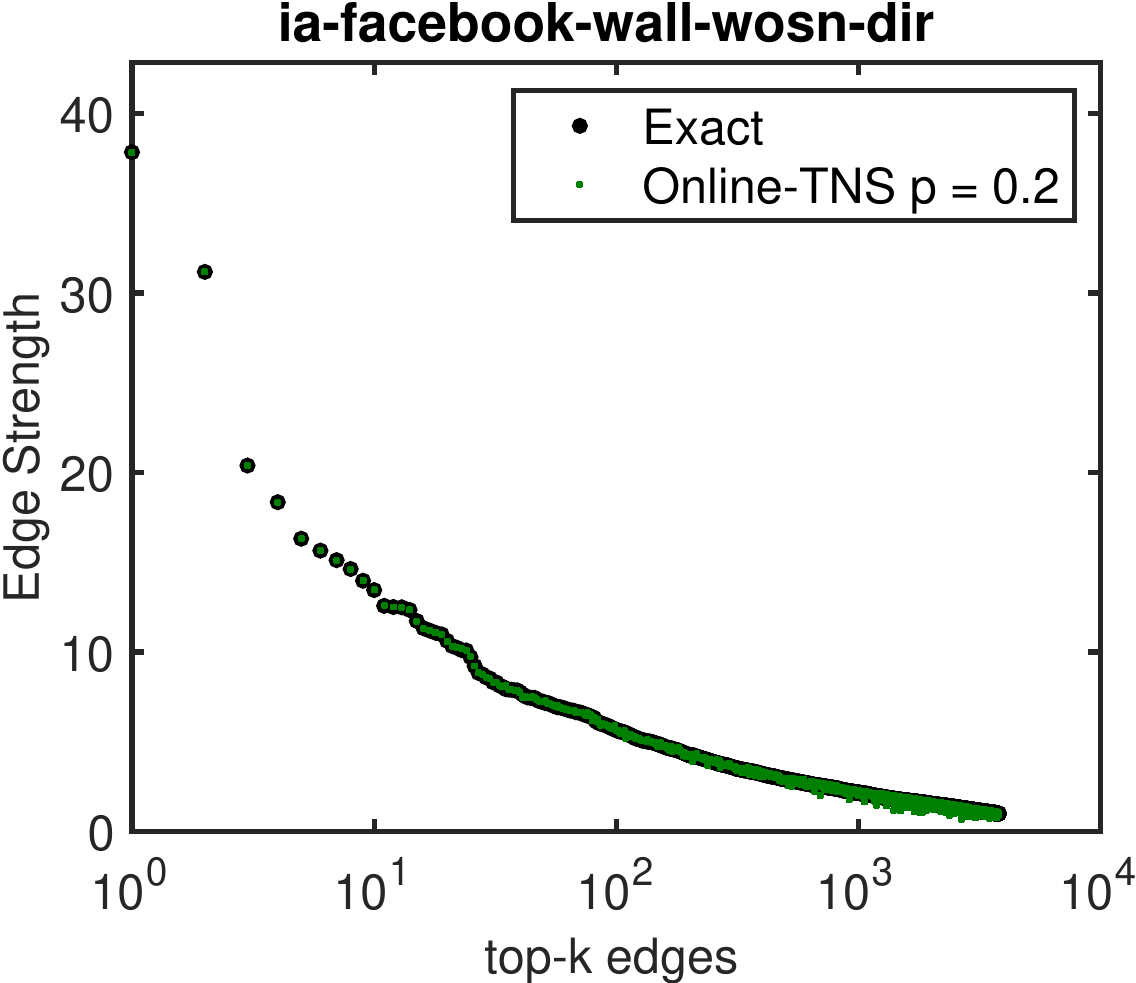}}

\caption{Temporal link strength estimated distribution vs exact distribution for top-k links.
Results are shown for sampling fraction $p=0.2$. (Top: Red) Results for
Online-TNS Algorithm~\ref{alg-TNS} with adaptive sampling weights and decay factor $\delta = 1$ day. 
(Bottom: Green) Results for
Online-TNS Algorithm~\ref{alg-TNS} with adaptive sampling weights~\ref{alg-TNS} and decay factor $\delta = 7$ days. 
}
\label{fig:edge-weight-delta}
\end{figure*}

\begin{table*}[h!]
\centering
\caption{%
Results for estimating temporal burstiness.
For each temporal network, we show the estimated burstiness using different sampling probabilities (first row) compared to the exact.
The relative error $\nicefrac{|\hat{B}-B|}{B}$ of the estimates is also shown.
}

\renewcommand{\arraystretch}{1.10} 
\renewcommand{\arraystretch}{1.05} 
\setlength{\tabcolsep}{7.0pt} 
\begin{tabularx}{0.95\linewidth}{
r rrrrr r 
HHHHH H
@{}
}
\toprule
& \multicolumn{5}{c}{\sc Sampling Fraction} & \\
\cmidrule(l{0pt}r{5pt}){2-6}

\textsc{Temporal Network} & \textbf{0.1} & \textbf{0.2} & \textbf{0.3} & \textbf{0.4} & \textbf{0.5} & \textbf{Exact} \\

\midrule
\textsf{wiki-talk} & 0.6196 & 0.6208 & 0.6208 & 0.6207 & 0.6206 & 0.6206 \\
\emph{(error)} & 0.0017   & 0.0003  &  0.0003  &  $<$0.0001   & $<$0.0001\\

\midrule
\textsf{ia-facebook-wall-wosn} & 0.4482 & 0.4534 & 0.4535 & 0.4535 & 0.4535 & 0.4535 \\ 
\emph{(error)} & 0.0116  &  0.0002  &  $<\!\!10^{-5}$  &  $<\!\!10^{-5}$  &  $<\!\!10^{-5}$ \\

\midrule
\textsf{bitcoin} & 0.7738 & 0.7642 & 0.7606 & 0.7586 & 0.7579 & 0.7576 \\
\emph{(error)} & 0.0214  &  0.0087  &  0.0040  &  0.0013  &  0.0004 \\

\midrule
\textsf{sx-stackoverflow} & 0.6517 & 0.6712 & 0.6808 & 0.6863 & 0.6891 &  0.6898 \\
\emph{(error)} & 0.0552   & 0.0269  &  0.0130  &  0.0050  &  0.0010 \\

\bottomrule
\end{tabularx}
\label{table:results-burstiness}
\end{table*}

\subsection{Estimation of Temporal Statistics} \label{sec:exp-temporal-network-statistics}
While the proposed framework can be used to obtain unbiased estimates of arbitrary temporal network statistics, we focus in this section on two important temporal properties and their distributions
including burstiness~\cite{barabasi2005origin} and temporal link persistence~\cite{clauset2012persistence}. 
For a survey of other important temporal network statistics that are applicable for estimation using the framework, see~\cite{holme2012temporal}.

\subsubsection{Burstiness}
Burstiness $B$ is widely used to characterize the link activity in temporal networks~\cite{holme2012temporal}. Burstiness is computed using the mean $\mu$ and standard deviation $\sigma$ of the distribution of same-edge inter-contact times collected from all links, \ie, $B = (\sigma - \mu)/(\sigma + \mu)$. The inter-contact time is the elapsed time between two subsequent same-edge interactions (\ie, time between two text messages from the same pair of friends). Burstiness measures the deviation of relationship activity from a Poisson process.     
In Table~\ref{table:results-burstiness}, we use the proposed framework to estimate burstiness (\ie, computed using the sampled network).
We show the estimated burstiness for sampling fraction $p \in \{0.1,0.2,0.3,0.4,0.5\}$.
In addition, we also provide the relative error of the estimates across the different sampling fractions.
From Table~\ref{table:results-burstiness}, we observe the relative errors are small and the estimates are shown to converge as the sampling fraction $p$ increases. In Figure~\ref{fig:intercontact-time}, we show the exact and estimated distribution of inter-contact times for sampling fractions $p=0.1$ (top row) and $p=0.2$ (bottom row). We observe that the estimated distribution from the sample accurately captures the exact distribution.

\subsubsection{Temporal Link Persistence}
The persistence of an edge measures the lifetime of relationships, and is computed as the elapsed time between the first interaction and the last interaction of the same edge~\cite{holme2012temporal}. Let $L$ denote the average link persistence (or lifetime) computed over all edges in the full (sampled) network defined as $L = \frac{1}{|K|} \sum_{(i,j) \in K} \tau_{ij}^{(\text{last})} - \tau_{ij}^{(\text{first})}$. Relative error of estimated link persistence is shown in Table~\ref{table:results-persistence}. In Figure~\ref{fig:persistence}, we show the exact and estimated distribution (\ie, computed using the sampled network) of link persistence scores for sampling fractions $p=0.1$ (top row) and $p=0.2$ (bottom row). We observe that the estimated distributions from the sampled network across all graphs accurately captures the exact distribution (for both burstiness and persistence). 
We also note that the proposed algorithm Alg~\ref{alg-TNS} can also handle labeled graphs (with vertex/edge categorical variables), where the estimators are computed separately for each possible label combination in a stratified fashion.
\begin{table*}[h!]
\centering
\caption{%
Estimation results for temporal persistence.
For each temporal network, we report relative error $\nicefrac{|\hat{L}-L|}{L}$ of the estimates using different sampling fractions.
}
\label{table:results-persistence}
\renewcommand{\arraystretch}{1.10} 
\setlength{\tabcolsep}{12.0pt} 
\begin{tabularx}{0.96\linewidth}{
r rrrrr r 
HHHHH H
@{}
}
\toprule
& \multicolumn{5}{c}{\sc Sampling Fraction} & \\
\cmidrule(l{0pt}r{1pt}){2-6}

\textsc{Temporal Network} & \textbf{0.1} & \textbf{0.2} & \textbf{0.3} & \textbf{0.4} & \textbf{0.5} \\
\midrule
\textbf{wiki-talk} & 0.1380 & 0.0412 & 0.0112 & 0.0009 & $<\!\!10^{-6}$ \\
\textbf{ia-facebook-wall-wosn} & 0.1232 & 0.0023 & $<\!\!10^{-7}$ & $<\!\!10^{-7}$ & $<\!\!10^{-7}$ \\
\textbf{sx-stackoverflow} & 0.1718 & 0.0794 & 0.0357 & 0.0119 & 0.0016 \\
\textbf{bitcoin} & 0.1056 & 0.0207 & 0.0023 & 0.0020 & 0.0024 \\
\bottomrule
\end{tabularx}
\end{table*}

\renewcommand{\figsz}{0.22}
\begin{figure*}[h!]
\centering
\subfigure
{\includegraphics[width=\figsz\linewidth]{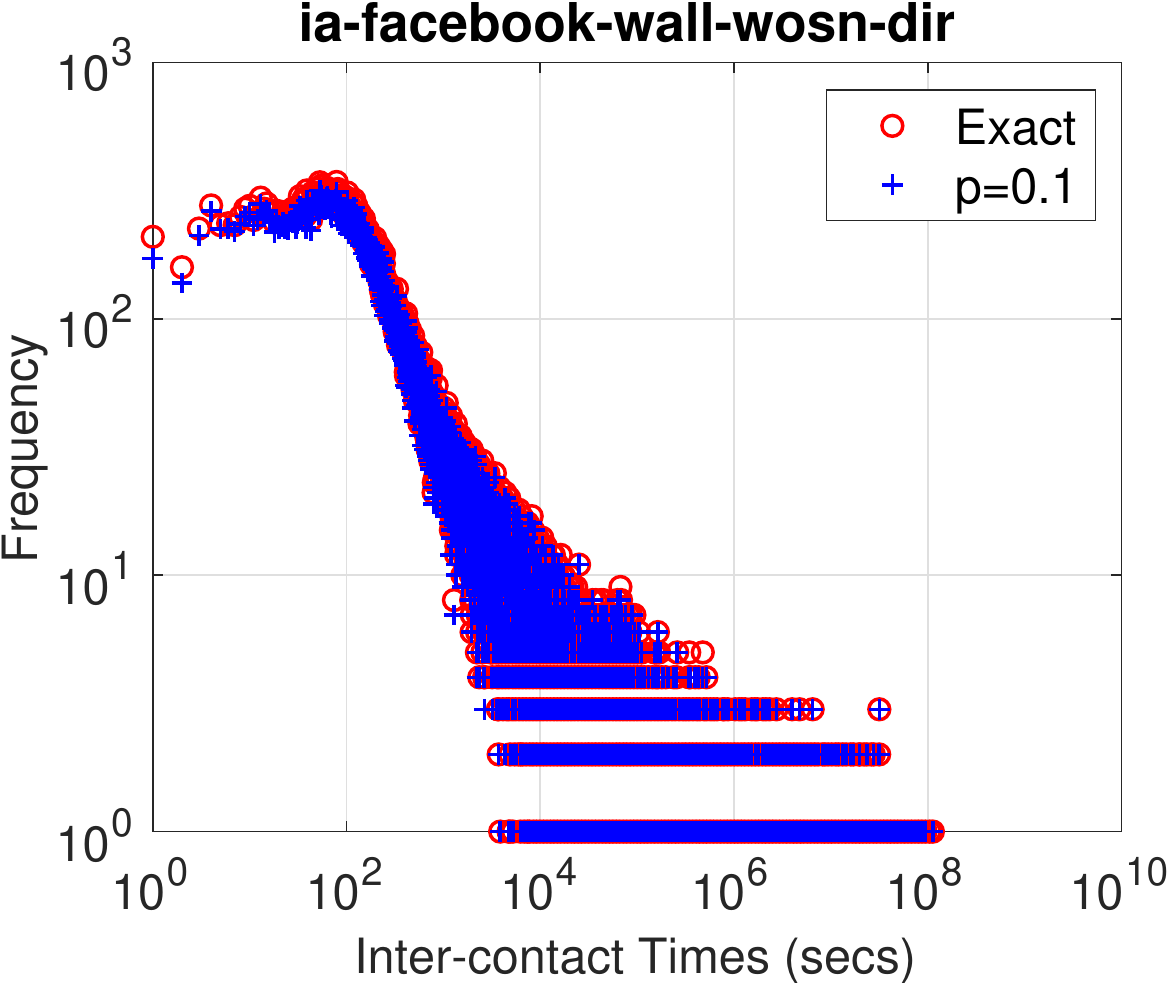}}
\subfigure
{\includegraphics[width=\figsz\linewidth]{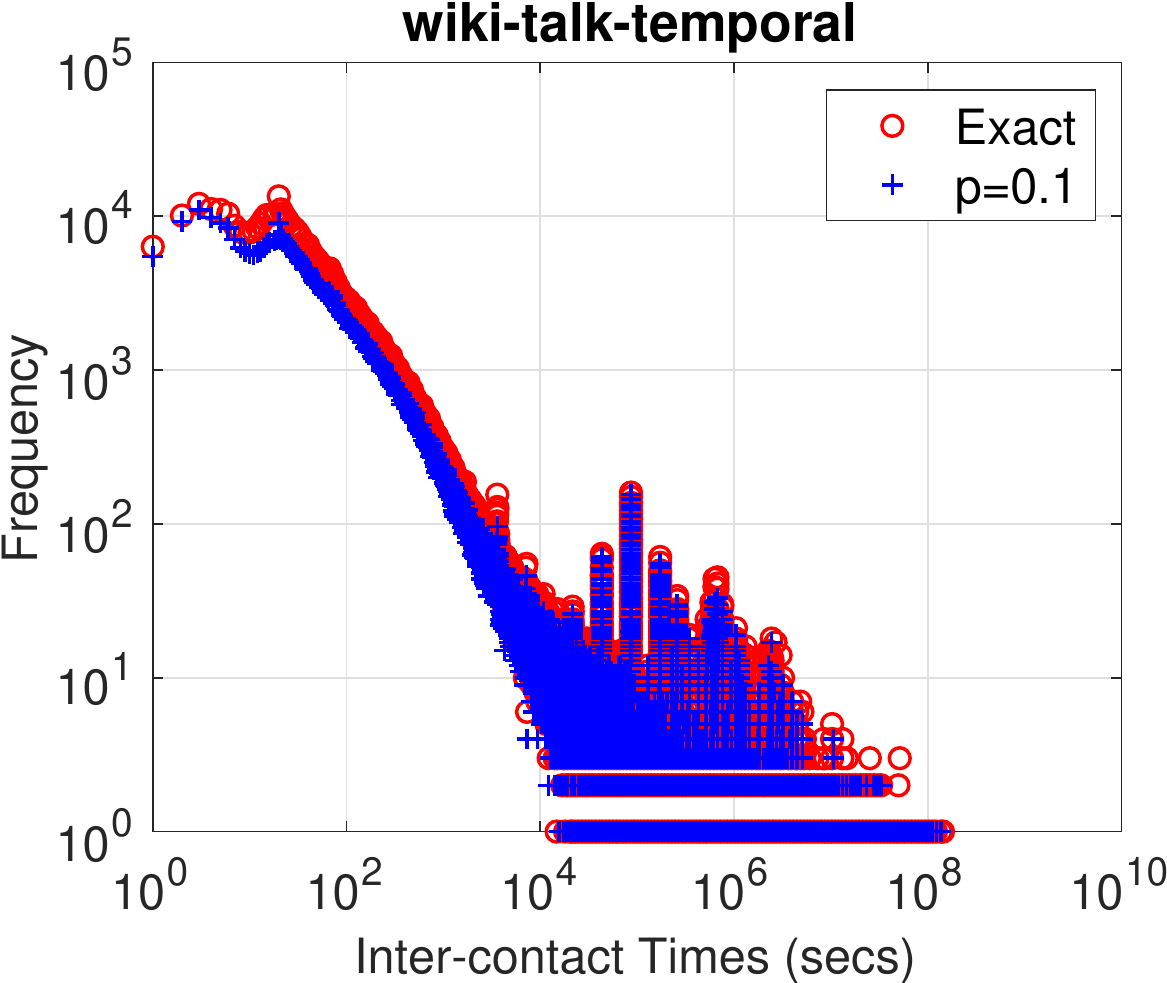}}
\subfigure
{\includegraphics[width=\figsz\linewidth]{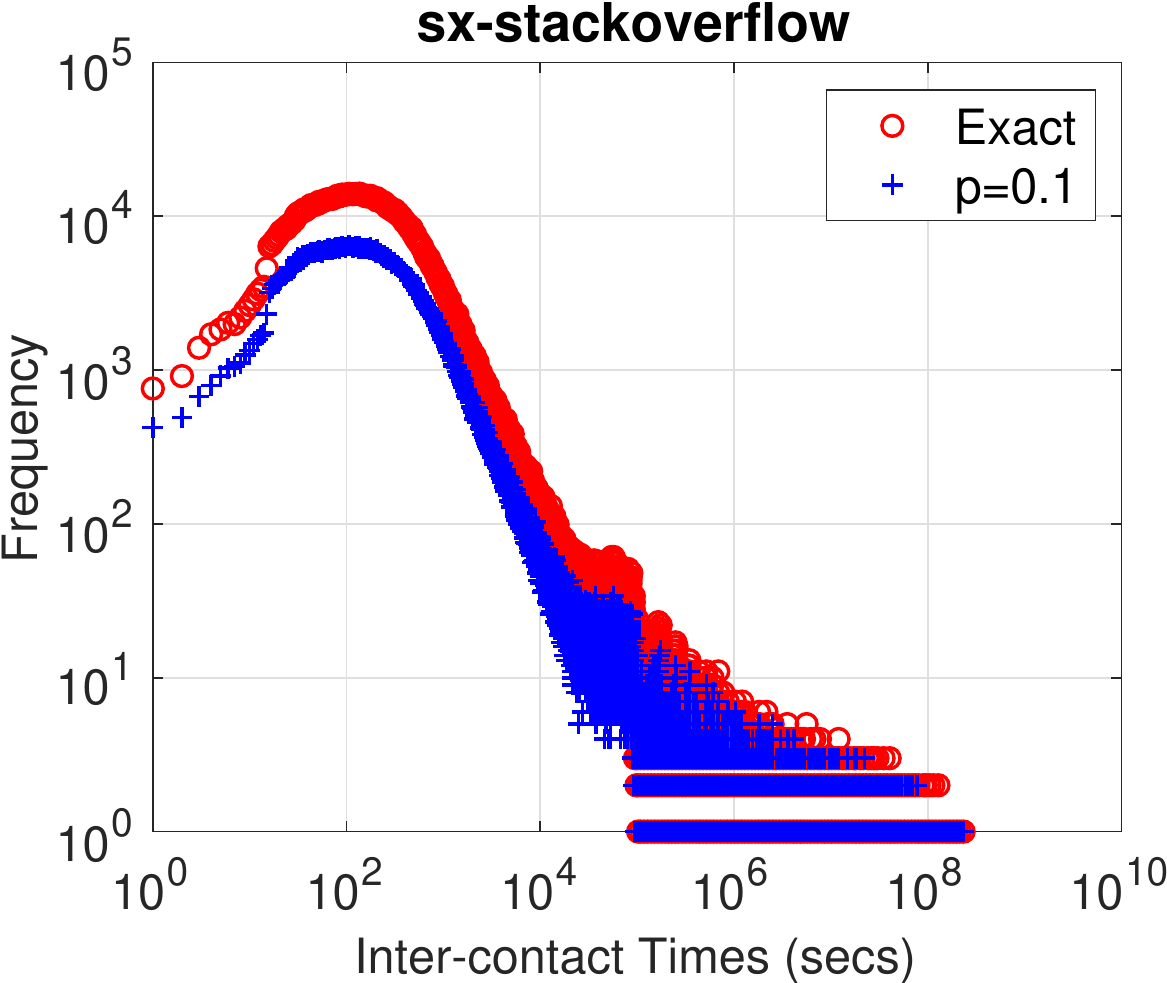}}
\subfigure
{\includegraphics[width=\figsz\linewidth]{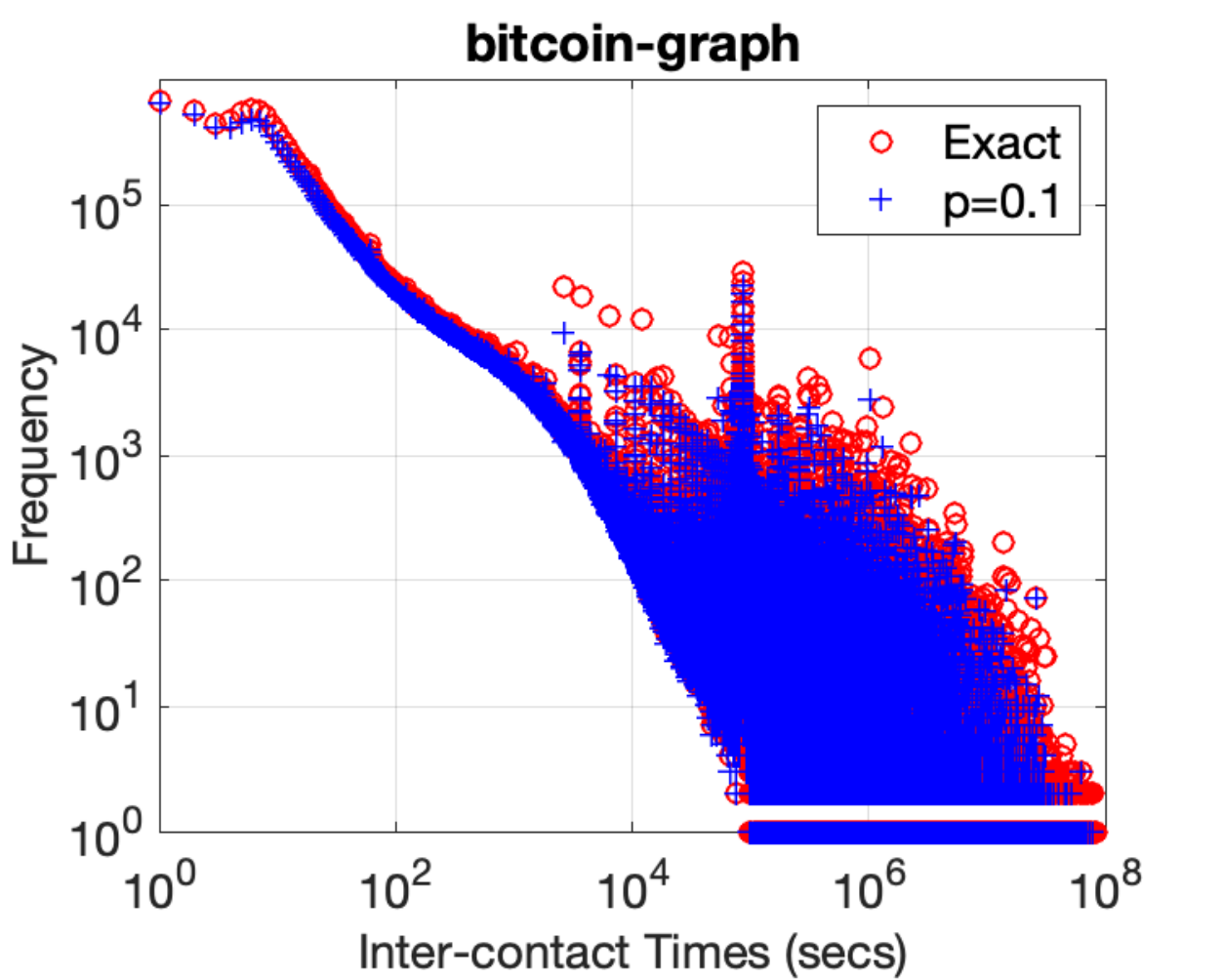}}

\subfigure
{\includegraphics[width=\figsz\linewidth]{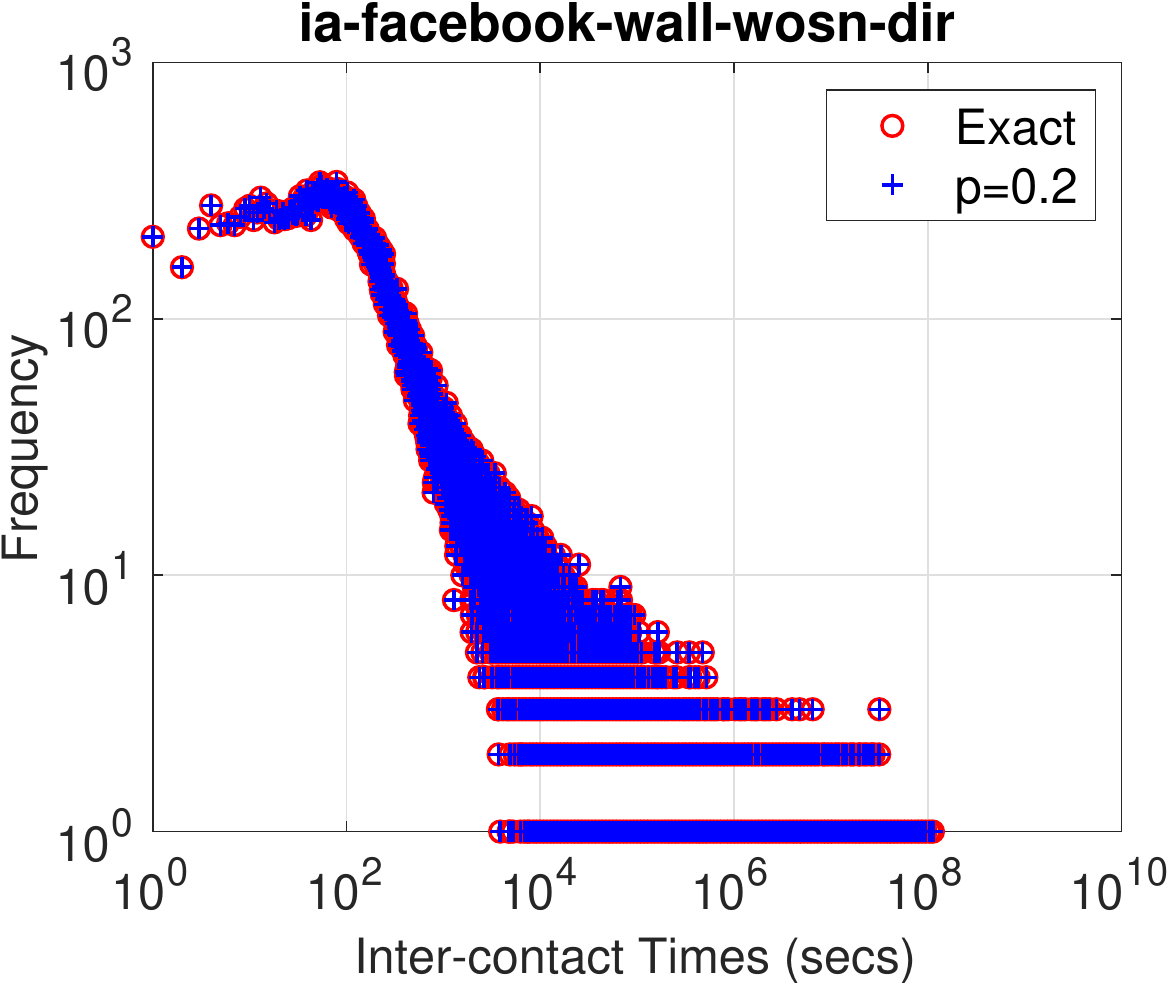}}
\subfigure
{\includegraphics[width=\figsz\linewidth]{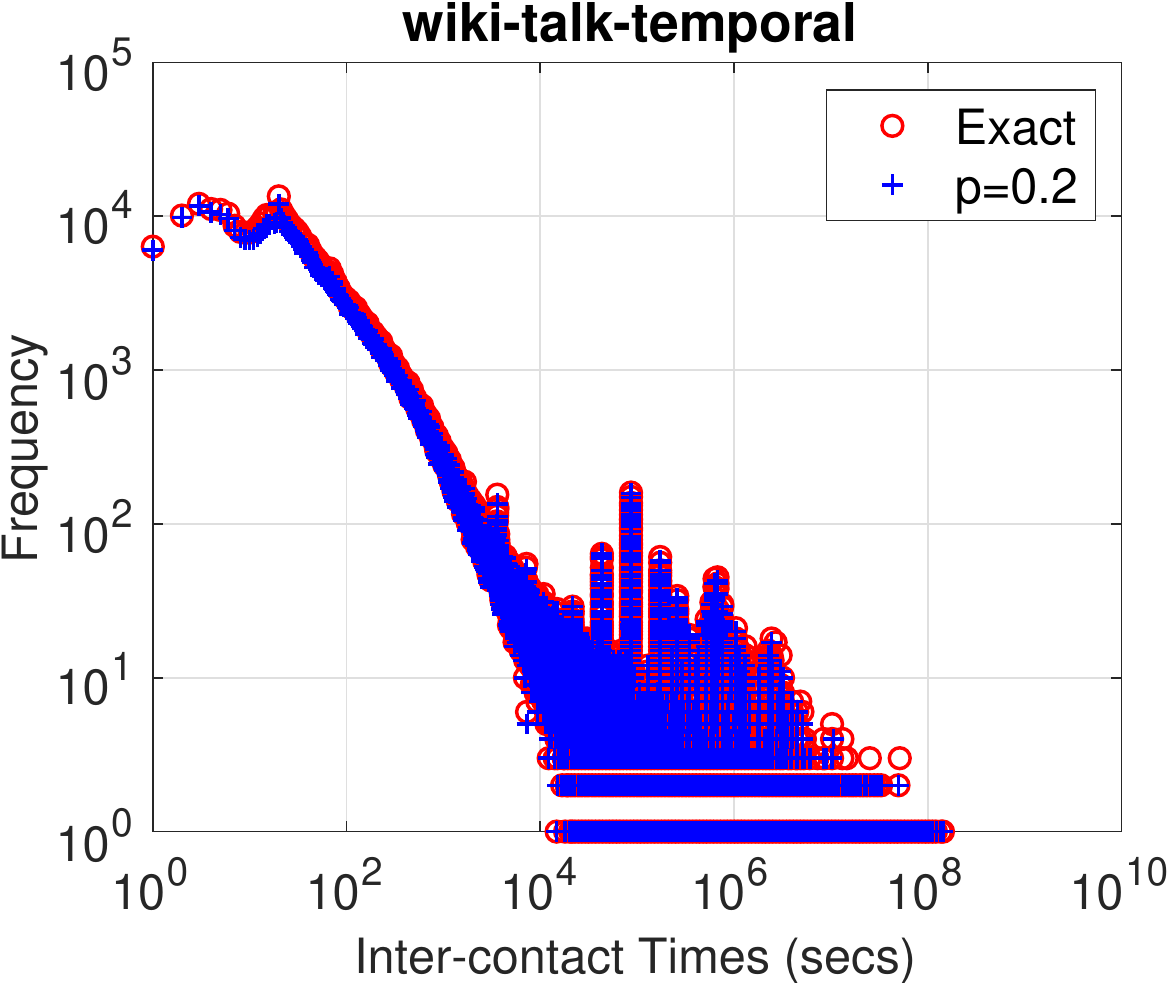}}
\subfigure
{\includegraphics[width=\figsz\linewidth]{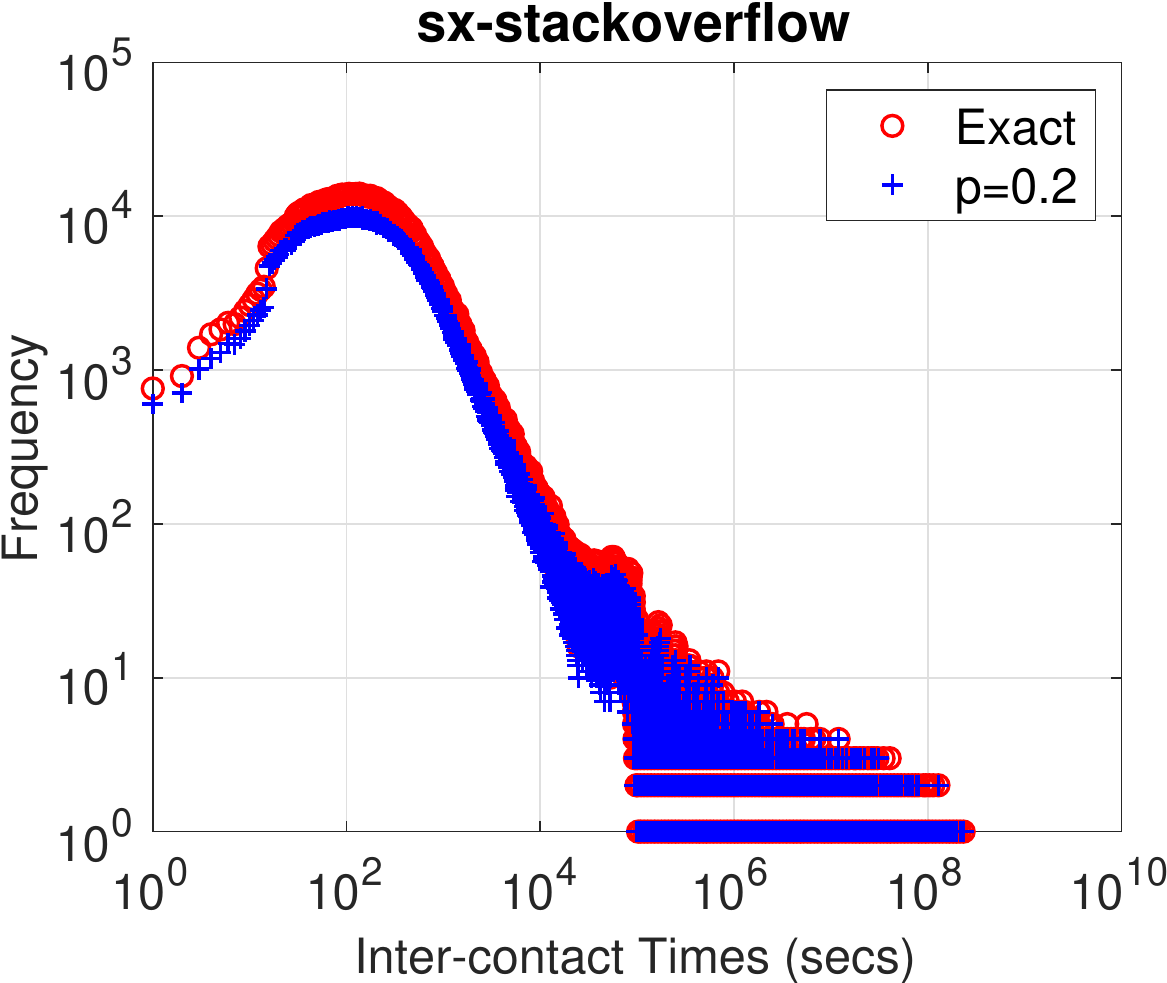}}
\subfigure
{\includegraphics[width=\figsz\linewidth]{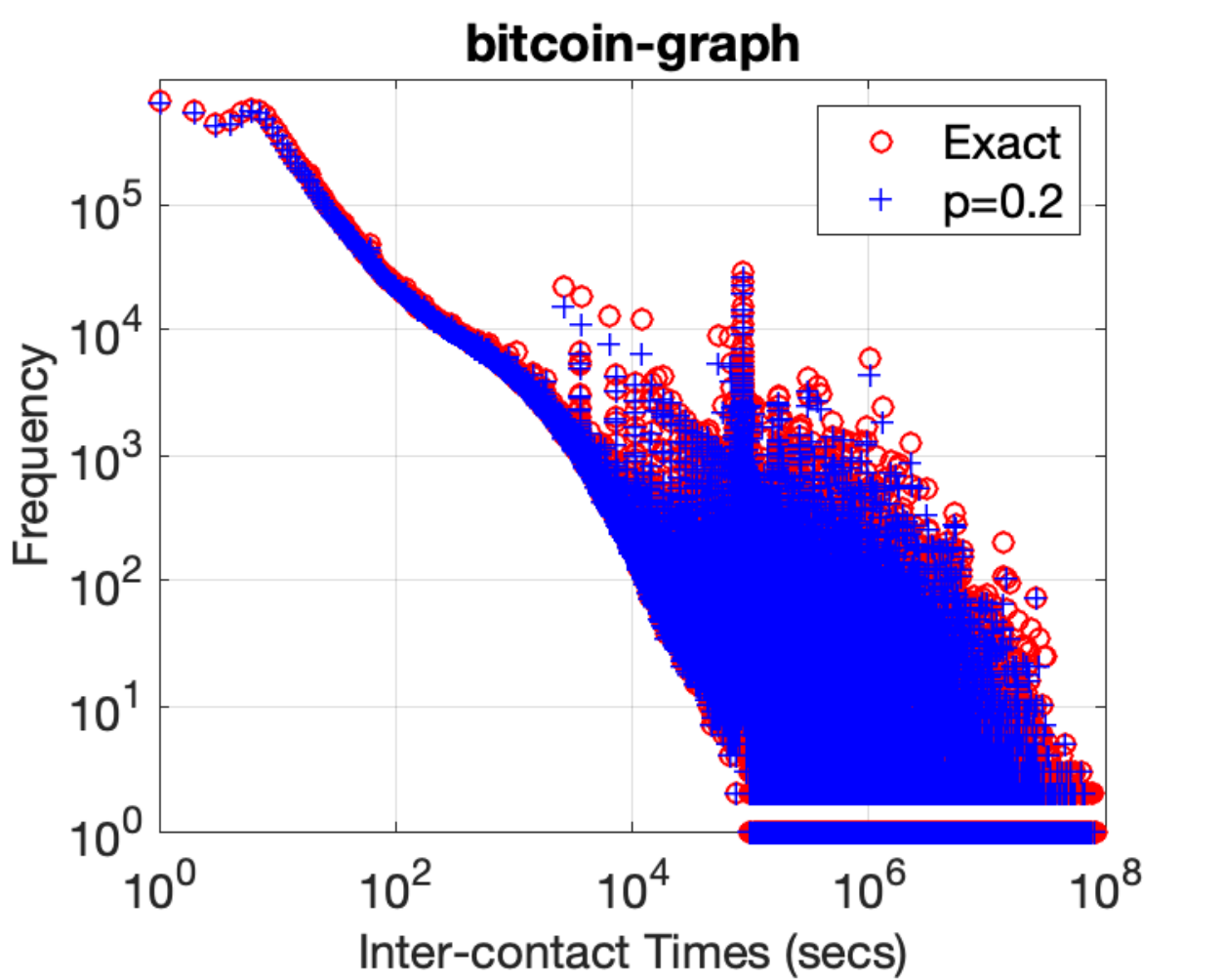}}

\caption{Estimation results for the distribution of inter-contact times compared to the exact distribution.
Results are shown for $p=0.1$ (top) and $p=0.2$ (bottom).
}
\label{fig:intercontact-time}
\end{figure*}
\renewcommand{\figsz}{0.22}
\begin{figure*}[h!]
\centering
\subfigure
{\includegraphics[width=\figsz\linewidth]{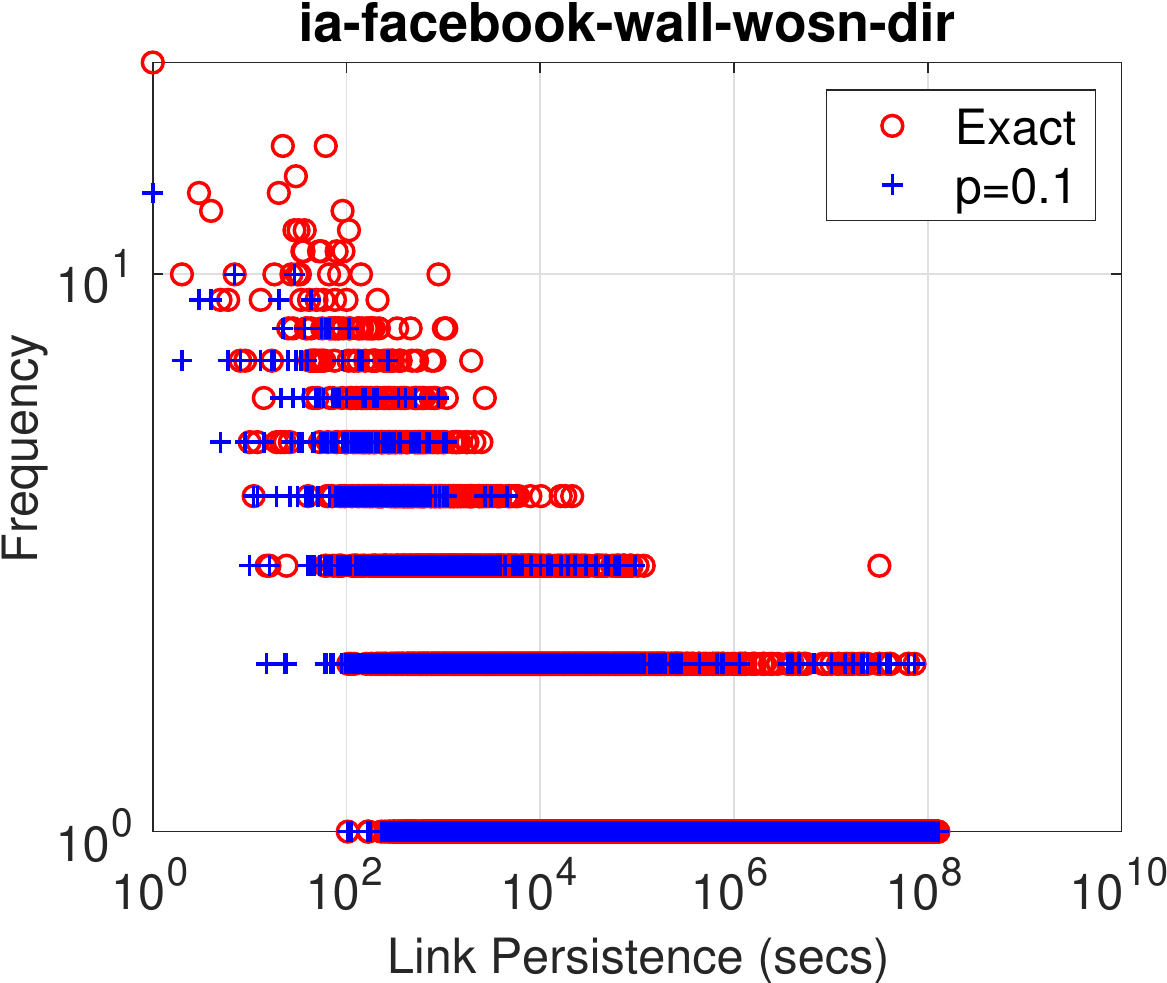}}
\subfigure
{\includegraphics[width=\figsz\linewidth]{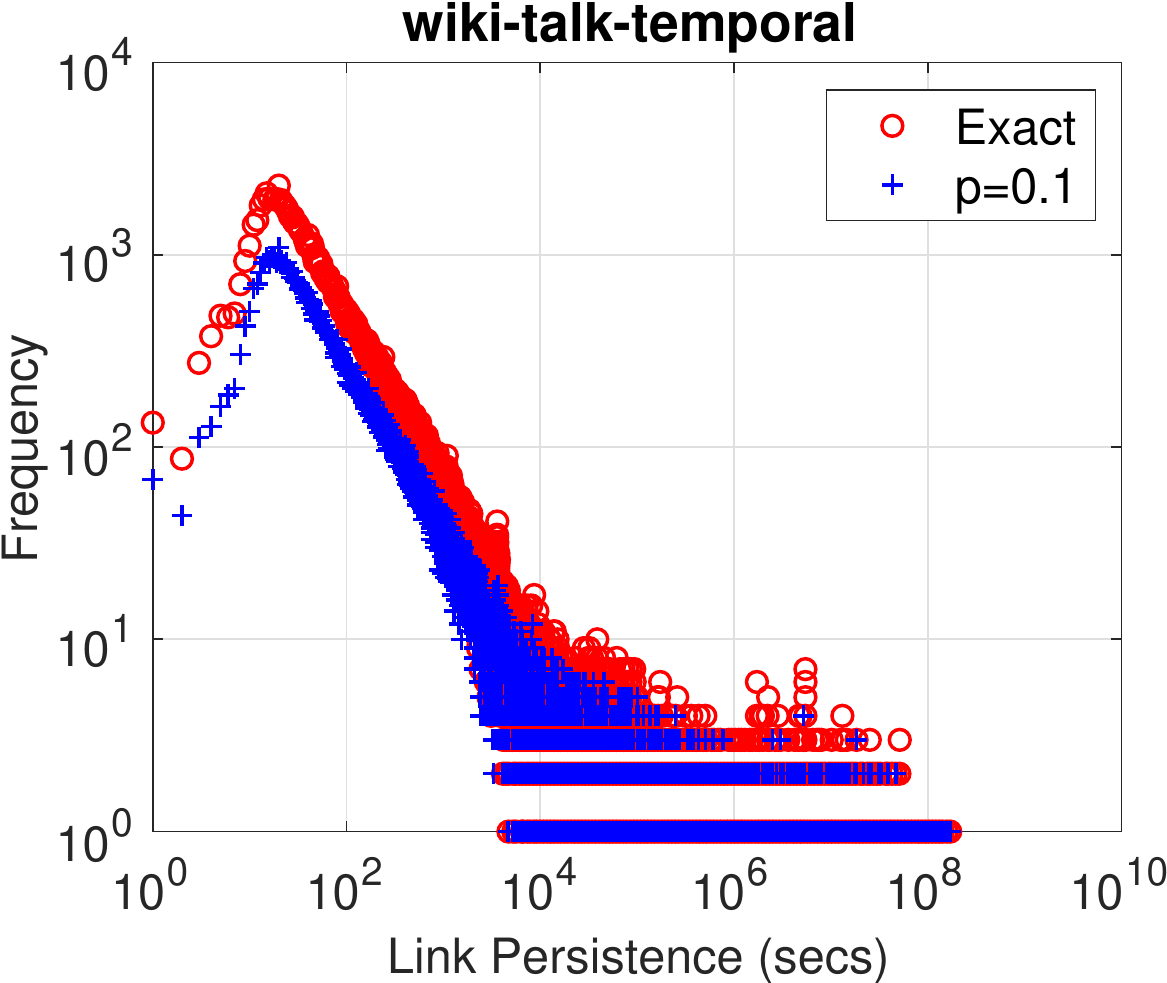}}
\subfigure
{\includegraphics[width=\figsz\linewidth]{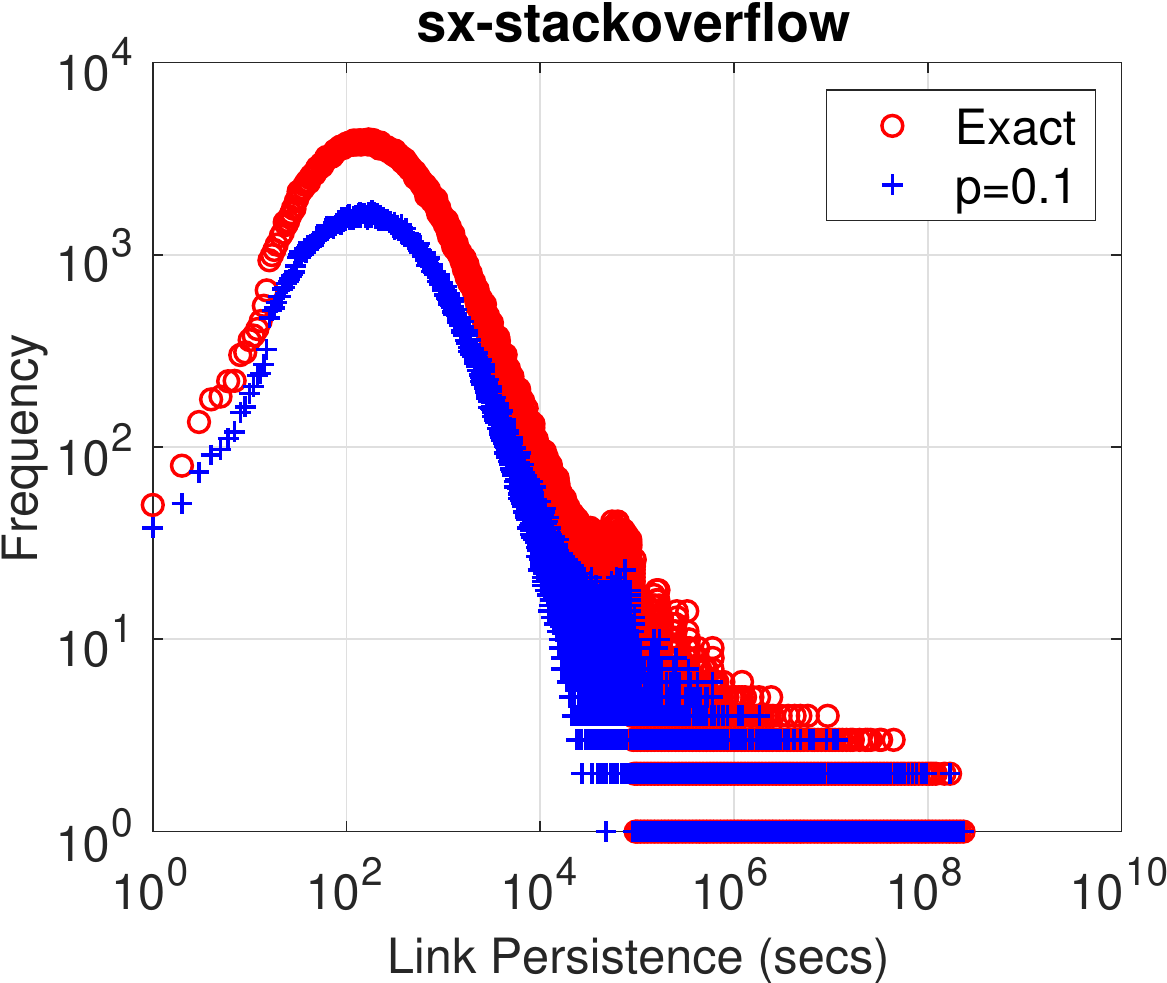}}
\subfigure
{\includegraphics[width=\figsz\linewidth]{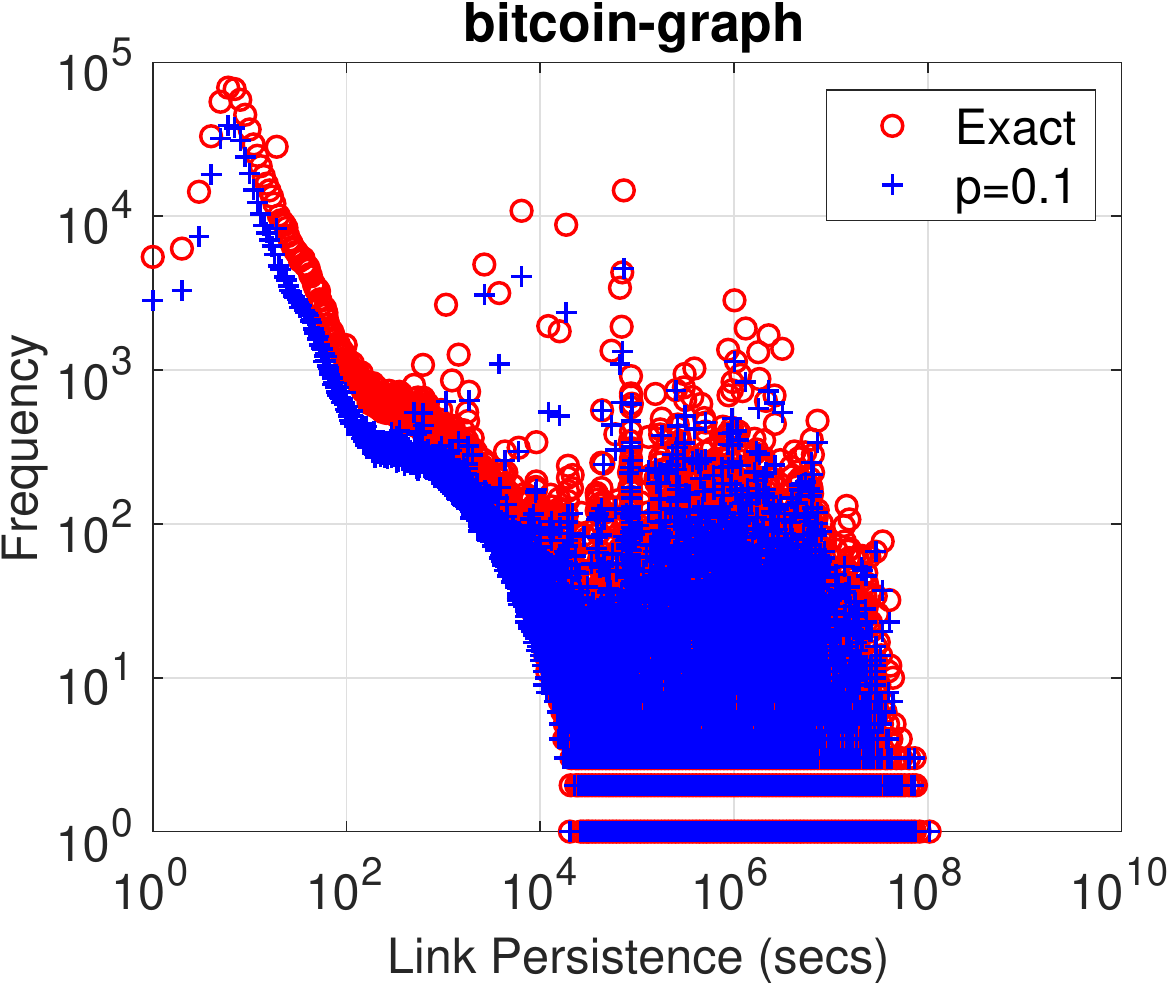}}

\subfigure
{\includegraphics[width=\figsz\linewidth]{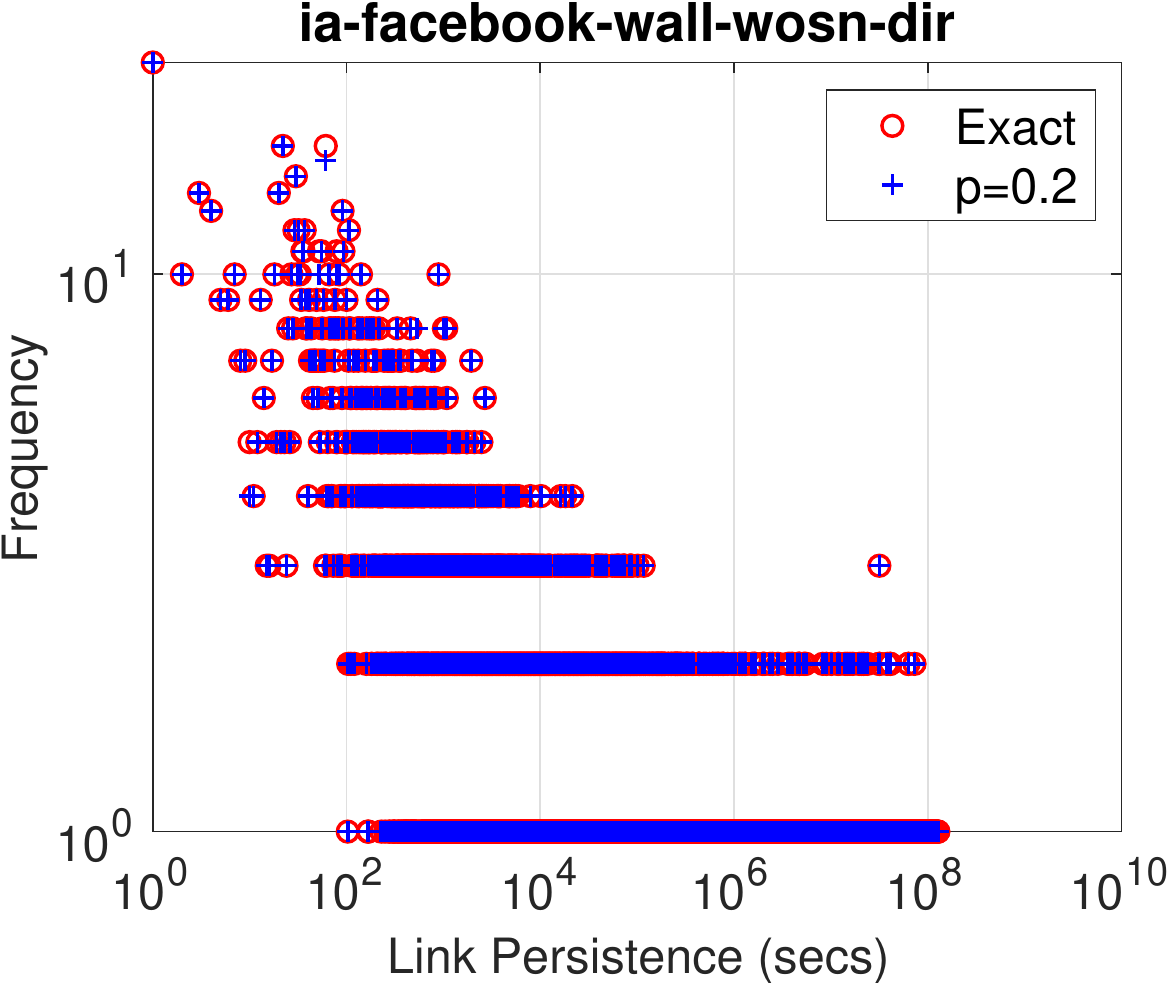}}
\subfigure
{\includegraphics[width=\figsz\linewidth]{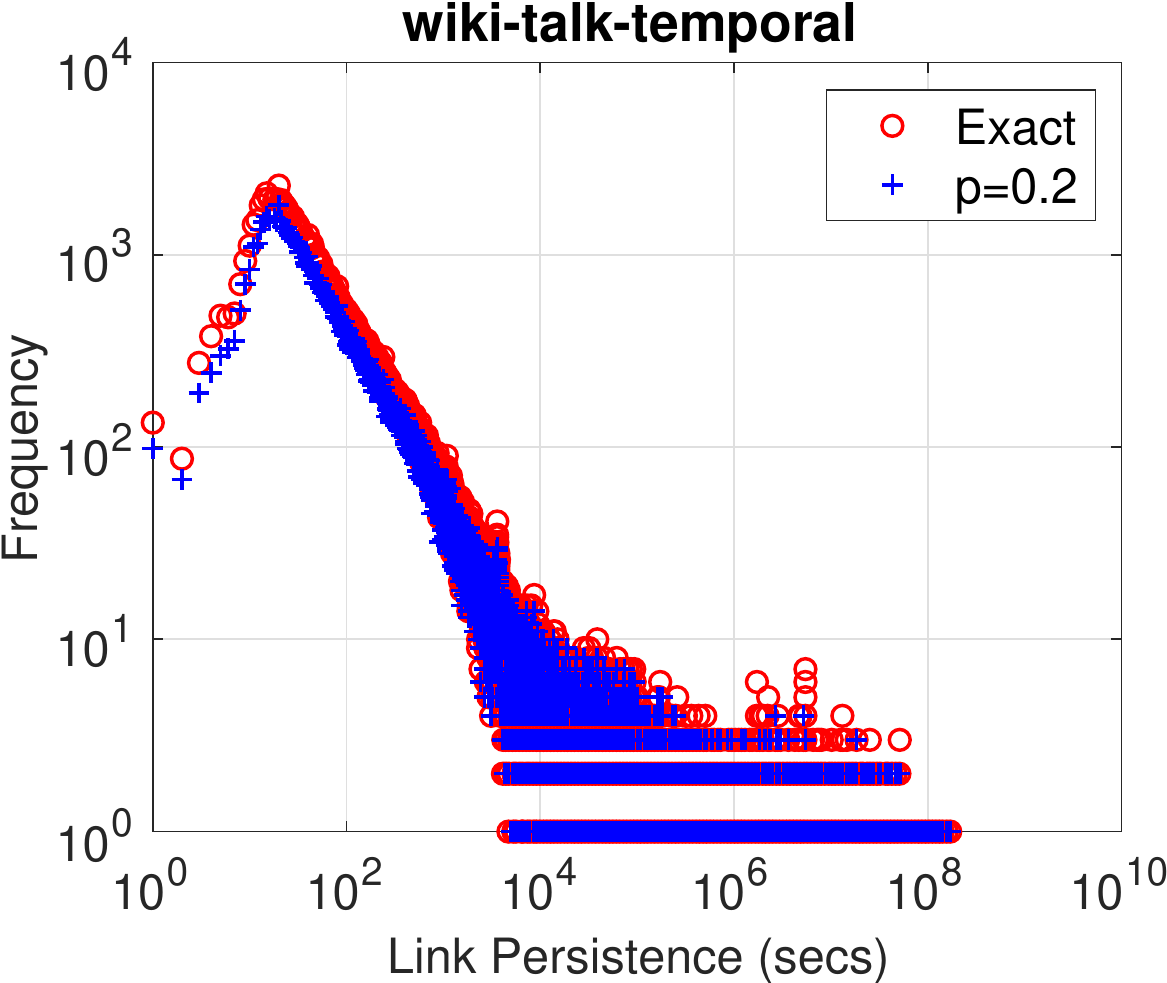}}
\subfigure
{\includegraphics[width=\figsz\linewidth]{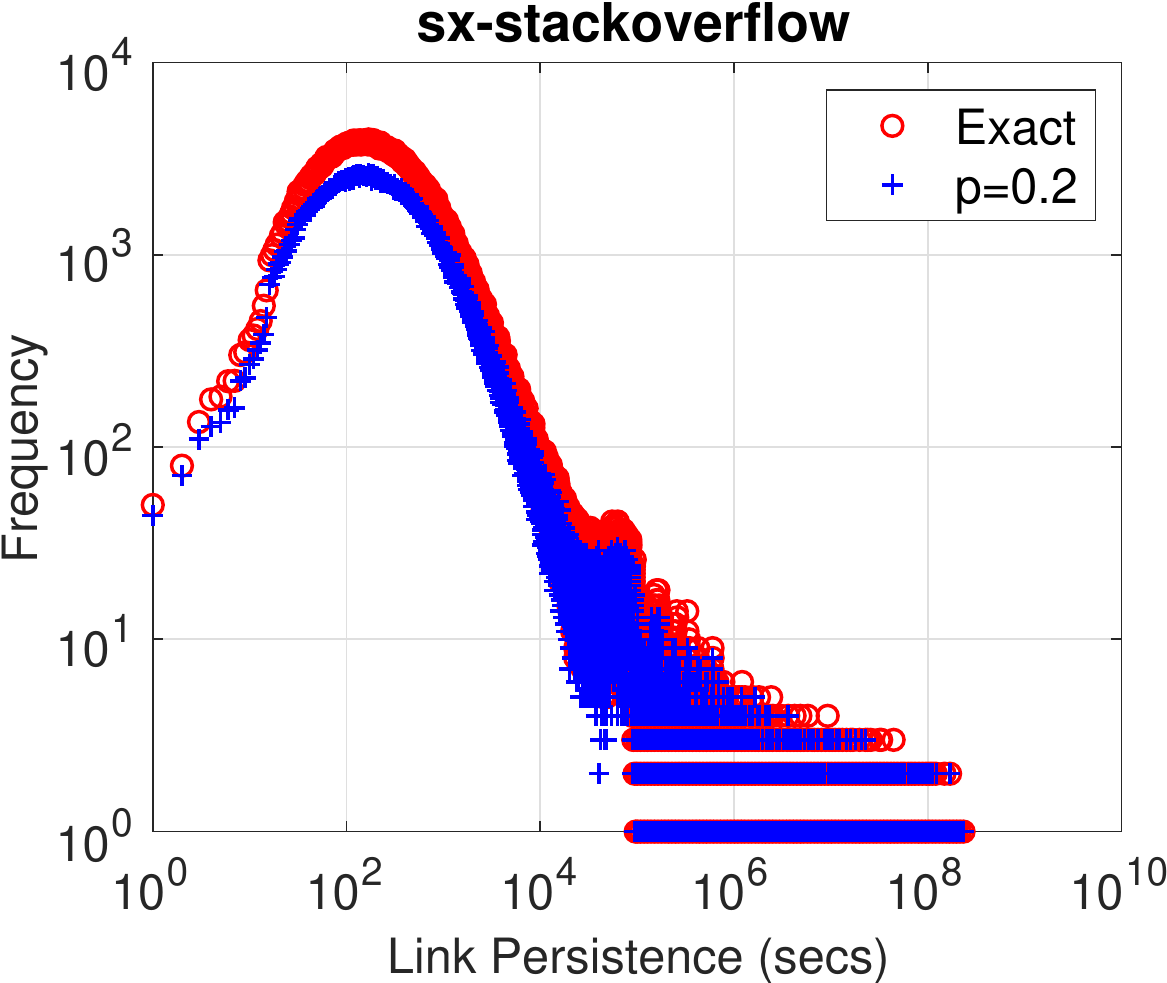}}
\subfigure
{\includegraphics[width=\figsz\linewidth]{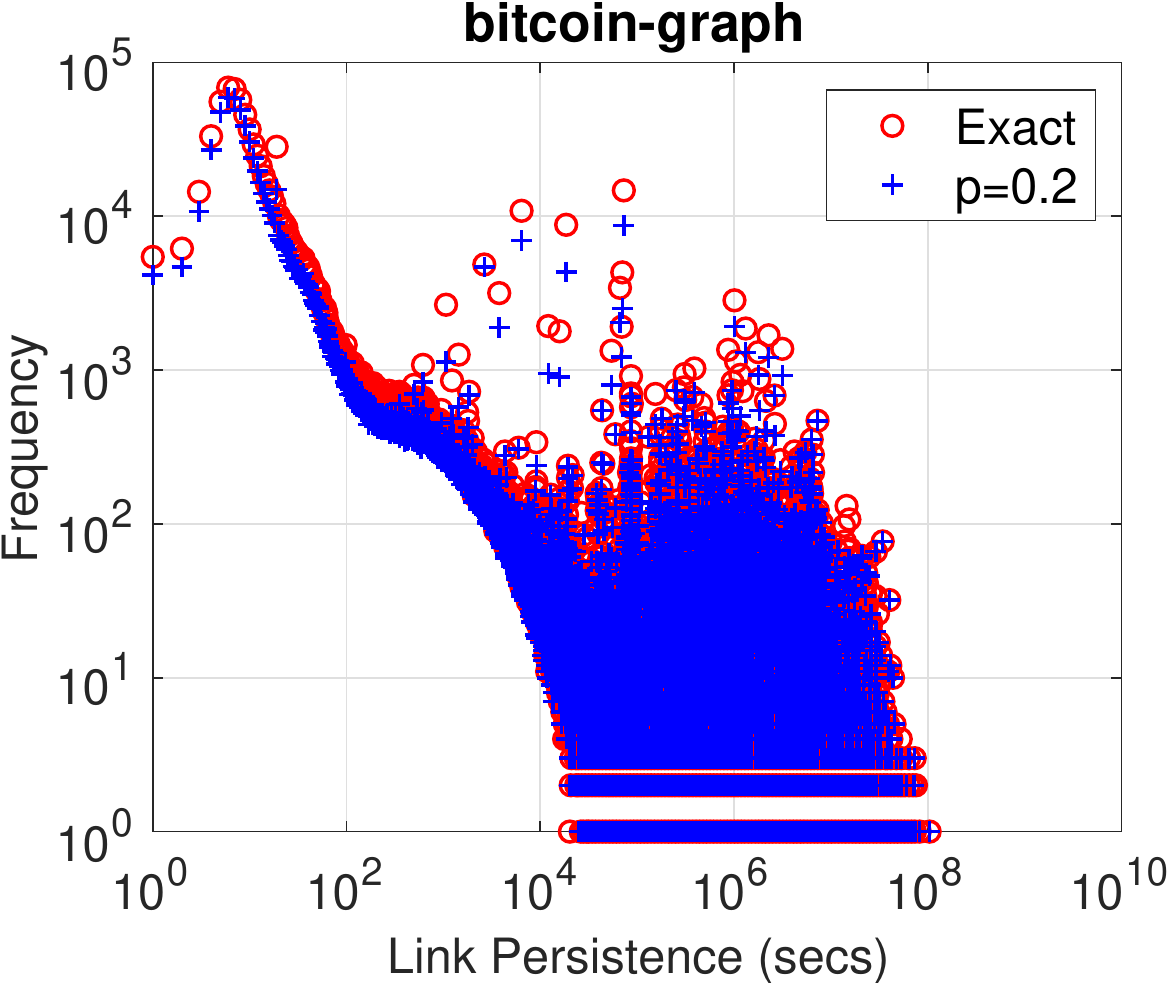}}

\caption{
Estimation results for the distribution of link persistence scores compared to the exact distribution.
Results are shown for $p=0.1$ (top) and $p=0.2$ (bottom).
}
\label{fig:persistence}
\end{figure*}

\section{Related work} 
\label{sec:related-work}
Sampling algorithms are fundamental in studying and understanding networks~\cite{vitter1985random,newman2018networks,kolaczyk2014statistical,ahmed2014network}, where the goal is to collect a representative sample that capture the characteristics of the full network. 
Network sampling has been widely studied in the context of small static networks that can fit entirely in memory~\cite{kolaczyk2014statistical}. For instance, there is uniform node sampling~\cite{stumpf2005subnets}, random walk sampling\cite{leskovec2006sampling}, edge sampling~\cite{ahmed2014network}, among others~\cite{ahmed2016estimation,liu2019sampling}. 
More recently, there has been a growing interest in sampling techniques for streaming network data in which temporal networks evolve continuously in time~\cite{ahmed2020neurips,cormode2014second,zhao2016link,jha2013space,ahmed2014graph,stefani2017triest,ahmed2017sampling,ahmedijcai18sampling,jha2015counting,pavan2013counting,lim2015mascot,simpson2015catching,ahmed2010time,lim2018memory}. 
For seminal surveys on the topic, see~\cite{ahmed2014network,mcgregor2014graph}. 

However, most existing methods for sampling streaming network data have focused on the primary objective of selecting a sample to estimate static network properties, \eg, point statistics such as global triangle count or clustering coefficient~\cite{ahmed2017sampling}. 
As such, it is unclear how representative these samples are for temporal network statistics such as the link strength~\cite{xiang2010modeling}, link persistence~\cite{clauset2012persistence}, burstiness~\cite{barabasi2005origin}, temporal motifs~\cite{kovanen2011temporal}, among others~\cite{holme2012temporal}. 
Despite the fundamental importance of this question, it has not been addressed in the context of streaming and online methods. 

Some of the recent work on stream sampling focused on multi-graph streams~\cite{stefani2017triest,vitter1985random,lim2018memory}, where edges may appear multiple times in the stream. However, most of these methods (e.g., Triest, reservoir sampling) mainly sample edges separately, thus, multiple occurrences of an edge $(u,v)$ may appear in the final sample, which allocates more space. In addition, the recent work in~\cite{lim2018memory} uses a uniform sampling probability to sample the edges, and stores an edge only once with its estimator. On the other hand, our proposed Algorithm~\ref{alg-TNS} adaptively samples edges with probabilities proportional to their link strength, and incrementally updates the overall estimate of link strength of an edge $(u,v)$, and stores an edge only once with its estimator. This leads to more space-efficient and accurate samples.  

There has been one recent work for sampling temporal motifs~\cite{liu2019sampling}. 
However, their work focused on a different problem based on counting motifs in temporal networks that form within some time $\bigtriangleup t$ ~\cite{kovanen2011temporal}. 
More specifically, their approach uses judicious partitioning of interactions in time bins, which can obfuscate or dilute temporal and structural information. 
In this paper, we formulate instead the notion of a \emph{temporally weighted motif} based on the temporal network link decay model.
We argue that this formulation is more meaningful and useful for practical applications especially related to prediction and forecasting where links and motifs that occur more recently are more important than those occurring in the distant past. 
In addition to the difference in problem, that work does \emph{not} focus on streaming nor the online setting since the entire graph is loaded into memory.

The temporally decaying model of temporal networks is useful for many important predictive modeling and forecasting tasks including classification~\cite{rossi2012time,sharan2008temporal}, link prediction~\cite{dunlavy2011temporal,timescore2012,choudhary2013link,muniz2018combining,chen2018exploiting,chi2019link,FLORENTINO2020106268,wu2020link,zhao2019incremental}, influence modeling~\cite{goyal2010learning}, regression~\cite{dpr-IM}, and anomaly detection~\cite{aggarwal2011outlier,rossi2013modeling}.
Despite the practical importance of the temporal link decaying model, our work is the first to propose network sampling and unbiased estimation algorithms for this setting.
Therefore, the proposed temporal decay sampling and unbiased estimation methods bring new opportunities for many real-world applications that involve prediction and forecasting from temporal networks representing a sequence of timestamped edges. This includes recommendation~\cite{dunlavy2011temporal,Bharadhwaj2018}, influence modeling~\cite{goyal2010learning}, visitor stitching~\cite{HONE}, etc.

Moreover, there has been a lot of research on deriving new and important temporal network statistics and properties that appropriately characterize the temporal network~\cite{holme2012temporal}. Other recent work has focused on extending node ranking and importance measures to dynamic networks such as Katz~\cite{grindrod2011communicability} and eigenvector centrality~\cite{taylor2017eigenvector}. These centrality measures use a sequence of static snapshot graphs to compute an importance or node centrality score of nodes. Since the proposed temporal sampling framework is general and can be used to estimate a time-dependent representation of the temporal network, it can be used to obtain unbiased estimates of these recent dynamic node centrality measures.

The proposed temporal network sampling framework can also be leveraged for estimation of node embeddings~\cite{from-comm-to-structural-role-embeddings} including both community-based (proximity) and role-based structural node embeddings~\cite{ahmed2018learning,roles2015tkde}. More recently, there has been a surge in activity for developing node embedding and graph representation learning methods for temporal networks. There have been embedding methods proposed for both continuous-time dynamic networks consisting of a stream of timestamped edges~\cite{nguyen2018continuous,lee2019temporal,Liu2019www} as well as discrete-time dynamic networks where the actual edge stream is approximated with a sequence of static snapshot graphs~\cite{rossi2013modeling,mahdavi2019dynamic, taheri2019predictive,hogun-ijcai19,Taheri-www2019}. All of these works may benefit from the proposed framework as it estimates a time-dependent representation of the temporal network that can be used as input to any of these methods for learning time-dependent node embeddings.

In the context of accumulating sample based counts of repeated objects, Sticky Sampling~\cite{10.5555/1287369.1287400} and counting Samples~\cite{10.5555/314500.315083} have been proposed, together with Sample and Hold~\cite{10.1145/859716.859719} in the context of network measurement, along with adaptive versions for fixed size reservoirs ~\cite{10.1145/2783258.2783279,10.1145/1064212.1064223}. Our approach differs from these methods in many ways and provide the following advantages. First, the cost of updating the sample is much cheaper compared to these methods. Second, the discard step is computationally cheaper being $O(1)$ to pick the minimum priority element. Third, our approach provides unbiased estimators not only for single links, but also for link-product counts for temporal motifs through Theorem~\ref{thm:nonu:C}(iii).

\section{Conclusion} 
\label{sec:conc}
This work proposed a novel general framework for online sampling and unbiased estimation of temporal networks.
The framework gives rise to online single-pass streaming sampling algorithms for estimating arbitrary temporal network statistics.
We also proposed a temporal decay sampling algorithm for estimating statistics based on the temporal decay model that assumes the strength of links evolve as a function of time, and the temporal statistics and temporal motif patterns are temporally weighted accordingly.
To the best of our knowledge, this work is the first to propose sampling and unbiased estimation algorithms for this setting, which is fundamentally important for practical applications involving prediction and forecasting from temporal networks.
The proposed framework and temporal network sampling algorithms that arise from it, enable fast, accurate, and memory-efficient statistical estimation of temporal network patterns and properties. 
Finally, the experiments demonstrated the effectiveness of the proposed approach for unbiased estimation of temporal network statistics. Other graph properties such as page rank, degree distribution, and centrality would be suitable for future extensions of the proposed framework.

\bibliographystyle{unsrt}
\bibliography{paper}

\begin{thebibliography}{10}

\bibitem{newman2018networks}
Mark Newman.
\newblock {\em Networks}.
\newblock Oxford university press, 2018.

\bibitem{newman2006structure}
Mark~Ed Newman, Albert-L{\'a}szl{\'o}~Ed Barab{\'a}si, and Duncan~J Watts.
\newblock {\em The structure and dynamics of networks.}
\newblock Princeton university press, 2006.

\bibitem{holme2015modern}
Petter Holme.
\newblock Modern temporal network theory: a colloquium.
\newblock {\em The European Physical Journal B}, 88(9):234, 2015.

\bibitem{nguyen2018continuous}
Giang~Hoang Nguyen, John~Boaz Lee, Ryan~A Rossi, Nesreen~K Ahmed, Eunyee Koh,
  and Sungchul Kim.
\newblock Continuous-time dynamic network embeddings.
\newblock In {\em WWW}, pages 969--976, 2018.

\bibitem{rossi2012time}
Ryan Rossi and Jennifer Neville.
\newblock Time-evolving relational classification and ensemble methods.
\newblock In {\em PAKDD}, pages 1--13. Springer, 2012.

\bibitem{sharan2008temporal}
Umang Sharan and Jennifer Neville.
\newblock Temporal-relational classifiers for prediction in evolving domains.
\newblock In {\em ICDM}, pages 540--549, 2008.

\bibitem{holme2012temporal}
Petter Holme and Jari Saram{\"a}ki.
\newblock Temporal networks.
\newblock {\em Physics reports}, 519(3):97--125, 2012.

\bibitem{li2017fundamental}
Aming Li, Sean~P Cornelius, Y-Y Liu, Long Wang, and A-L Barab{\'a}si.
\newblock The fundamental advantages of temporal networks.
\newblock {\em Science}, 358(6366):1042--1046, 2017.

\bibitem{eckmann2004entropy}
Jean-Pierre Eckmann, Elisha Moses, and Danilo Sergi.
\newblock Entropy of dialogues creates coherent structures in e-mail traffic.
\newblock {\em PNAS}, 101(40):14333--14337, 2004.

\bibitem{rocha2017dynamics}
Luis~EC Rocha.
\newblock Dynamics of air transport networks: A review from a complex systems
  perspective.
\newblock {\em C. J. of Aero.}, 30(2):469--478, 2017.

\bibitem{peixoto2018change}
Tiago~P Peixoto and Laetitia Gauvin.
\newblock Change points, memory and epidemic spreading in temporal networks.
\newblock {\em Scientific reports}, 8(1):15511, 2018.

\bibitem{masuda2013predicting}
Naoki Masuda and Petter Holme.
\newblock Predicting and controlling infectious disease epidemics using
  temporal networks.
\newblock {\em F1000prime reports}, 5, 2013.

\bibitem{chen2013information}
Wei Chen, Laks~VS Lakshmanan, and Carlos Castillo.
\newblock Information and influence propagation in social networks.
\newblock {\em Syn. Lec. on Data Man.}, 5(4), 2013.

\bibitem{goyal2010learning}
Amit Goyal, Francesco Bonchi, and Laks~VS Lakshmanan.
\newblock Learning influence probabilities in social networks.
\newblock In {\em WSDM}, pages 241--250. ACM, 2010.

\bibitem{ahmed2014network}
Nesreen~K Ahmed, Jennifer Neville, and Ramana Kompella.
\newblock Network sampling: From static to streaming graphs.
\newblock {\em TKDD}, 8(2):7, 2014.

\bibitem{ahmed2014graph}
Nesreen~K Ahmed, Nick Duffield, Jennifer Neville, and Ramana Kompella.
\newblock Graph sample and hold: A framework for big-graph analytics.
\newblock In {\em KDD}, pages 1446--1455, 2014.

\bibitem{soundarajan2016generating}
Sucheta Soundarajan, Acar Tamersoy, Elias~B Khalil, Tina Eliassi-Rad,
  Duen~Horng Chau, Brian Gallagher, and Kevin Roundy.
\newblock Generating graph snapshots from streaming edge data.
\newblock In {\em WWW}, pages 109--110, 2016.

\bibitem{valdano2018epidemic}
Eugenio Valdano, Michele~Re Fiorentin, Chiara Poletto, and Vittoria Colizza.
\newblock Epidemic threshold in continuous-time evolving networks.
\newblock {\em Physical review letters}, 120(6):068302, 2018.

\bibitem{holme2019impact}
Petter Holme and Luis~EC Rocha.
\newblock Impact of misinformation in temporal network epidemiology.
\newblock {\em Network Science}, 7(1):52--69, 2019.

\bibitem{sulo2010meaningful}
Rajmonda Sulo, Tanya Berger-Wolf, and Robert Grossman.
\newblock Meaningful selection of temporal resolution for dynamic networks.
\newblock In {\em MLG KDD}, pages 127--136, 2010.

\bibitem{caceres2011temporal}
Rajmonda~Sulo Caceres, Tanya Berger-Wolf, and Robert Grossman.
\newblock Temporal scale of processes in dynamic networks.
\newblock In {\em ICDM Workshops}, pages 925--932. IEEE, 2011.

\bibitem{fenn2012dynamical}
Daniel~J Fenn, Mason~A Porter, Peter~J Mucha, Mark McDonald, Stacy Williams,
  Neil~F Johnson, and Nick~S Jones.
\newblock Dynamical clustering of exchange rates.
\newblock {\em Quantitative Finance}, 12(10):1493--1520, 2012.

\bibitem{flores2018eigenvector}
Julio Flores and Miguel Romance.
\newblock On eigenvector-like centralities for temporal networks: Discrete vs.
  continuous time scales.
\newblock {\em Journal of Computational and Applied Mathematics},
  330:1041--1051, 2018.

\bibitem{aggarwal2006biased}
Charu~C Aggarwal.
\newblock On biased reservoir sampling in the presence of stream evolution.
\newblock In {\em Proceedings of the 32nd international conference on Very
  large data bases}, pages 607--618, 2006.

\bibitem{zino2016continuous}
Lorenzo Zino, Alessandro Rizzo, and Maurizio Porfiri.
\newblock Continuous-time discrete-distribution theory for activity-driven
  networks.
\newblock {\em Physical review letters}, 117(22):228302, 2016.

\bibitem{yang2018influential}
Xin Yang and Ju~Fan.
\newblock Influential user subscription on time-decaying social streams.
\newblock {\em arXiv:1802.05305}, 2018.

\bibitem{zino2017analytical}
Lorenzo Zino, Alessandro Rizzo, and Maurizio Porfiri.
\newblock An analytical framework for the study of epidemic models on activity
  driven networks.
\newblock {\em Journal of Complex Networks}, 5(6):924--952, 2017.

\bibitem{lohr2019sampling}
Sharon~L Lohr.
\newblock {\em Sampling: Design and Analysis}.
\newblock Chapman and Hall/CRC, 2019.

\bibitem{kolaczyk2014statistical}
Eric~D Kolaczyk and G{\'a}bor Cs{\'a}rdi.
\newblock {\em Statistical analysis of network data with R}, volume~65.
\newblock Springer, 2014.

\bibitem{stumpf2005subnets}
Michael~PH Stumpf, Carsten Wiuf, and Robert~M May.
\newblock Subnets of scale-free networks are not scale-free: sampling
  properties of networks.
\newblock {\em PNAS}, 102(12):4221--4224, 2005.

\bibitem{leskovec2006sampling}
Jure Leskovec and Christos Faloutsos.
\newblock Sampling from large graphs.
\newblock In {\em KDD}, pages 631--636, 2006.

\bibitem{cormode2014second}
Graham Cormode and Hossein Jowhari.
\newblock A second look at counting triangles in graph streams.
\newblock {\em Theoretical Computer Science}, 552:44--51, 2014.

\bibitem{jha2013space}
Madhav Jha, Comandur Seshadhri, and Ali Pinar.
\newblock A space efficient streaming algorithm for triangle counting using the
  birthday paradox.
\newblock In {\em KDD}, pages 589--597, 2013.

\bibitem{stefani2017triest}
Lorenzo~De Stefani, Alessandro Epasto, Matteo Riondato, and Eli Upfal.
\newblock Triest: Counting local and global triangles in fully dynamic streams
  with fixed memory size.
\newblock {\em TKDD}, 11(4):43, 2017.

\bibitem{ahmed2017sampling}
Nesreen~K Ahmed, Nick Duffield, Theodore~L Willke, and Ryan~A Rossi.
\newblock On sampling from massive graph streams.
\newblock {\em Proceedings of the VLDB Endowment}, 10(11):1430--1441, 2017.

\bibitem{ahmedijcai18sampling}
Nesreen~K. Ahmed, Nick Duffield, and Liangzhen Xia.
\newblock Sampling for approximate bipartite network projection.
\newblock In {\em IJCAI}, pages 3286--3292, 2018.

\bibitem{simpson2015catching}
Olivia Simpson, C~Seshadhri, and Andrew McGregor.
\newblock Catching the head, tail, and everything in between: A streaming
  algorithm for the degree distribution.
\newblock In {\em ICDM}, pages 979--984, 2015.

\bibitem{jha2015counting}
Madhav Jha, Ali Pinar, and C~Seshadhri.
\newblock Counting triangles in real-world graph streams: Dealing with repeated
  edges and time windows.
\newblock In {\em 49th Asilomar Conference on Signals, Systems and Computers},
  pages 1507--1514. IEEE, 2015.

\bibitem{pavan2013counting}
A~Pavan, Kanat Tangwongsan, Srikanta Tirthapura, and Kun-Lung Wu.
\newblock Counting and sampling triangles from a graph stream.
\newblock {\em VLDB}, 6(14), 2013.

\bibitem{lim2015mascot}
Yongsub Lim and U~Kang.
\newblock Mascot: Memory-efficient and accurate sampling for counting local
  triangles in graph streams.
\newblock In {\em KDD}, pages 685--694. ACM, 2015.

\bibitem{ray2019efficient}
Abhik Ray, Lawrence~B Holder, and Albert Bifet.
\newblock Efficient frequent subgraph mining on large streaming graphs.
\newblock {\em Intelligent Data Analysis}, 23(1):103--132, 2019.

\bibitem{choudhury2013streamworks}
Sutanay Choudhury, Lawrence Holder, George Chin, Abhik Ray, Sherman Beus, and
  John Feo.
\newblock Streamworks: a system for dynamic graph search.
\newblock In {\em Proceedings of the 2013 ACM SIGMOD International Conference
  on Management of Data}, pages 1101--1104, 2013.

\bibitem{mcgregor2014graph}
Andrew McGregor.
\newblock Graph stream algorithms: a survey.
\newblock {\em ACM SIGMOD Record}, 43(1):9--20, 2014.

\bibitem{aggarwal2018extracting}
Charu~C Aggarwal.
\newblock Extracting real-time insights from graphs and social streams.
\newblock In {\em The 41st International ACM SIGIR Conference on Research \&
  Development in Information Retrieval}, pages 1339--1339, 2018.

\bibitem{xiang2010modeling}
Rongjing Xiang, Jennifer Neville, and Monica Rogati.
\newblock Modeling relationship strength in online social networks.
\newblock In {\em WWW}, pages 981--990, 2010.

\bibitem{clauset2012persistence}
Aaron Clauset and Nathan Eagle.
\newblock Persistence and periodicity in a dynamic proximity network.
\newblock {\em arXiv:1211.7343}, 2012.

\bibitem{barabasi2005origin}
Albert-Laszlo Barabasi.
\newblock The origin of bursts and heavy tails in human dynamics.
\newblock {\em Nature}, 435(7039):207, 2005.

\bibitem{kovanen2011temporal}
Lauri Kovanen, M{\'a}rton Karsai, Kimmo Kaski, J{\'a}nos Kert{\'e}sz, and Jari
  Saram{\"a}ki.
\newblock Temporal motifs in time-dependent networks.
\newblock {\em Journal of Statistical Mechanics: Theory and Experiment},
  2011(11):P11005, 2011.

\bibitem{AggarwalSDM2020}
Charu~C. Aggarwal, Yao Li, and Philip~S. Yu.
\newblock {\em On Supervised Change Detection in Graph Streams}, pages
  289--297.

\bibitem{larock2020understanding}
Timothy LaRock, Timothy Sakharov, Sahely Bhadra, and Tina Eliassi-Rad.
\newblock Understanding the limitations of network online learning.
\newblock {\em arXiv preprint arXiv:2001.07607}, 2020.

\bibitem{fond2018designing}
Timothy~La Fond, Jennifer Neville, and Brian Gallagher.
\newblock Designing size consistent statistics for accurate anomaly detection
  in dynamic networks.
\newblock {\em ACM Transactions on Knowledge Discovery from Data (TKDD)},
  12(4):1--49, 2018.

\bibitem{la2017ensemble}
Timothy La~Fond, Geoffrey Sanders, Christine Klymko, et~al.
\newblock An ensemble framework for detecting community changes in dynamic
  networks.
\newblock In {\em 2017 IEEE High Performance Extreme Computing Conference
  (HPEC)}, pages 1--6. IEEE, 2017.

\bibitem{10.1109/ICDE.2009.65}
Graham Cormode, Vladislav Shkapenyuk, Divesh Srivastava, and Bojian Xu.
\newblock Forward decay: A practical time decay model for streaming systems.
\newblock In {\em Proceedings of the 2009 IEEE International Conference on Data
  Engineering}, ICDE ’09, page 138–149, USA, 2009. IEEE Computer Society.

\bibitem{milo2002network}
Ron Milo, Shai Shen-Orr, Shalev Itzkovitz, Nadav Kashtan, Dmitri Chklovskii,
  and Uri Alon.
\newblock Network motifs: simple building blocks of complex networks.
\newblock {\em Science}, 298(5594):824--827, 2002.

\bibitem{ahmed2015efficient}
Nesreen~K Ahmed, Jennifer Neville, Ryan~A Rossi, and Nick Duffield.
\newblock Efficient graphlet counting for large networks.
\newblock In {\em ICDM}, pages 1--10, 2015.

\bibitem{benson2016higher}
Austin~R Benson, David~F Gleich, and Jure Leskovec.
\newblock Higher-order organization of complex networks.
\newblock {\em Science}, 353(6295):163--166, 2016.

\bibitem{ahmed2017graphlet}
Nesreen~K Ahmed, Jennifer Neville, Ryan~A Rossi, Nick~G Duffield, and
  Theodore~L Willke.
\newblock Graphlet decomposition: Framework, algorithms, and applications.
\newblock {\em KAIS}, 50(3):689--722, 2017.

\bibitem{ahmed2018learning}
Nesreen~K Ahmed, Ryan Rossi, John~Boaz Lee, Theodore~L Willke, Rong Zhou,
  Xiangnan Kong, and Hoda Eldardiry.
\newblock Learning role-based graph embeddings.
\newblock {\em arXiv:1802.02896}, 2018.

\bibitem{HONE}
Ryan~A. Rossi, Nesreen~K. Ahmed, Eunyee Koh, Sungchul Kim, Anup Rao, and Yasin
  Abbasi-Yadkori.
\newblock A structural graph representation learning framework.
\newblock In {\em WSDM}, 2020.

\bibitem{Taheri-www2019}
Aynaz Taheri, Kevin Gimpel, and Tanya Berger-Wolf.
\newblock Learning to represent the evolution of dynamic graphs with recurrent
  models.
\newblock In {\em WWW}, pages 301--307, 2019.

\bibitem{liu2019sampling}
Paul Liu, Austin~R Benson, and Moses Charikar.
\newblock Sampling methods for counting temporal motifs.
\newblock In {\em WSDM}, pages 294--302, 2019.

\bibitem{vitter1985random}
Jeffrey~S Vitter.
\newblock Random sampling with a reservoir.
\newblock {\em ACM Transactions on Mathematical Software (TOMS)}, 11(1):37--57,
  1985.

\bibitem{duffield2007priority}
Nick Duffield, Carsten Lund, and Mikkel Thorup.
\newblock Priority sampling for estimation of arbitrary subset sums.
\newblock {\em Journal of the ACM (JACM)}, 54(6):32, 2007.

\bibitem{miritello2011dynamical}
Giovanna Miritello, Esteban Moro, and Rub{\'e}n Lara.
\newblock Dynamical strength of social ties in information spreading.
\newblock {\em Physical Review E}, 83(4):045102, 2011.

\bibitem{nr}
Ryan~A. Rossi and Nesreen~K. Ahmed.
\newblock The network data repository with interactive graph analytics and
  visualization.
\newblock In {\em AAAI}, 2015.

\bibitem{lim2018memory}
Yongsub Lim, Minsoo Jung, and U~Kang.
\newblock Memory-efficient and accurate sampling for counting local triangles
  in graph streams: from simple to multigraphs.
\newblock {\em ACM Transactions on Knowledge Discovery from Data (TKDD)},
  12(1):1--28, 2018.

\bibitem{achlioptas2013near}
Dimitris Achlioptas, Zohar~S Karnin, and Edo Liberty.
\newblock Near-optimal entrywise sampling for data matrices.
\newblock In {\em NeurIPS}, pages 1565--1573, 2013.

\bibitem{ahmed2016estimation}
Nesreen~K Ahmed, Theodore~L Willke, and Ryan~A Rossi.
\newblock Estimation of local subgraph counts.
\newblock In {\em IEEE Big Data}, pages 586--595. IEEE, 2016.

\bibitem{ahmed2020neurips}
Nesreen~K Ahmed and Nick Duffield.
\newblock Adaptive shrinkage estimation for streaming graphs.
\newblock In {\em Advances in Neural Information Processing Systems}, 2020.

\bibitem{zhao2016link}
Peixiang Zhao, Charu Aggarwal, and Gewen He.
\newblock Link prediction in graph streams.
\newblock In {\em 2016 IEEE 32nd International Conference on Data Engineering
  (ICDE)}, pages 553--564. IEEE, 2016.

\bibitem{ahmed2010time}
Nesreen~K Ahmed, Fredrick Berchmans, Jennifer Neville, and Ramana Kompella.
\newblock Time-based sampling of social network activity graphs.
\newblock In {\em MLG}, pages 1--9, 2010.

\bibitem{dunlavy2011temporal}
Daniel~M Dunlavy, Tamara~G Kolda, and Evrim Acar.
\newblock Temporal link prediction using matrix and tensor factorizations.
\newblock {\em TKDD}, 5(2):10, 2011.

\bibitem{timescore2012}
Lankeshwara Munasinghe and Ryutaro Ichise.
\newblock Time score: A new feature for link prediction in social networks.
\newblock {\em IEICE Transactions on Information and Systems},
  E95.D(3):821--828, 2012.

\bibitem{choudhary2013link}
Pankaj Choudhary, Nishchol Mishra, Sanjeev Sharma, and Ravindra Patel.
\newblock Link score: A novel method for time aware link prediction in social
  network.
\newblock {\em ICDMW}, 2013.

\bibitem{muniz2018combining}
Carlos~Pedro Muniz, Ronaldo Goldschmidt, and Ricardo Choren.
\newblock Combining contextual, temporal and topological information for
  unsupervised link prediction in social networks.
\newblock {\em Knowledge-Based Systems}, 156:129--137, 2018.

\bibitem{chen2018exploiting}
Huiyuan Chen and Jing Li.
\newblock Exploiting structural and temporal evolution in dynamic link
  prediction.
\newblock In {\em Proceedings of the 27th ACM International Conference on
  Information and Knowledge Management}, pages 427--436, 2018.

\bibitem{chi2019link}
Kuo Chi, Guisheng Yin, Yuxin Dong, and Hongbin Dong.
\newblock Link prediction in dynamic networks based on the attraction force
  between nodes.
\newblock {\em Knowledge-Based Systems}, 181:104792, 2019.

\bibitem{FLORENTINO2020106268}
{\'E}rick~S Florentino, Argus~AB Cavalcante, and Ronaldo~R Goldschmidt.
\newblock An edge creation history retrieval based method to predict links in
  social networks.
\newblock {\em Knowledge-Based Systems}, 205:106268, 2020.

\bibitem{wu2020link}
Xiaomin Wu, Jianshe Wu, Yafeng Li, and Qian Zhang.
\newblock Link prediction of time-evolving network based on node ranking.
\newblock {\em Knowledge-Based Systems}, page 105740, 2020.

\bibitem{zhao2019incremental}
Zhongying Zhao, Chao Li, Xuejian Zhang, Francisco Chiclana, and Enrique~Herrera
  Viedma.
\newblock An incremental method to detect communities in dynamic evolving
  social networks.
\newblock {\em Knowledge-Based Systems}, 163:404--415, 2019.

\bibitem{dpr-IM}
David~F. Gleich and Ryan~A. Rossi.
\newblock A dynamical system for pagerank with time-dependent teleportation.
\newblock {\em Internet Mathematics}, 10(1-2):188--217, 2014.

\bibitem{aggarwal2011outlier}
Charu~C Aggarwal, Yuchen Zhao, and S~Yu Philip.
\newblock Outlier detection in graph streams.
\newblock In {\em ICDE}, pages 399--409. IEEE, 2011.

\bibitem{rossi2013modeling}
Ryan~A. Rossi, Brian Gallagher, Jennifer Neville, and Keith Henderson.
\newblock Modeling dynamic behavior in large evolving graphs.
\newblock In {\em WSDM}, pages 667--676, 2013.

\bibitem{Bharadhwaj2018}
Homanga Bharadhwaj and Shruti Joshi.
\newblock Explanations for temporal recommendations.
\newblock {\em KI}, 32(4):267--272, Nov 2018.

\bibitem{grindrod2011communicability}
Peter Grindrod, Mark~C Parsons, Desmond~J Higham, and Ernesto Estrada.
\newblock Communicability across evolving networks.
\newblock {\em Phy. Rev. E}, 83(4):046120, 2011.

\bibitem{taylor2017eigenvector}
Dane Taylor, Sean~A Myers, Aaron Clauset, Mason~A Porter, and Peter~J Mucha.
\newblock Eigenvector-based centrality measures for temporal networks.
\newblock {\em Multiscale Modeling \& Simulation}, 15(1):537--574, 2017.

\bibitem{from-comm-to-structural-role-embeddings}
Ryan~A. Rossi, Di~Jin, Sungchul Kim, Nesreen~K. Ahmed, Danai Koutra, and
  John~Boaz Lee.
\newblock From community to role-based graph embeddings.
\newblock In {\em arXiv:1908.08572}, 2019.

\bibitem{roles2015tkde}
Ryan~A. Rossi and Nesreen~K. Ahmed.
\newblock Role discovery in networks.
\newblock {\em IEEE Transactions on Knowledge and Data Engineering (TKDE)},
  27(4):1112--1131, 2015.

\bibitem{lee2019temporal}
John~Boaz Lee, Giang Nguyen, Ryan~A Rossi, Nesreen~K Ahmed, Eunyee Koh, and
  Sungchul Kim.
\newblock Temporal network representation learning.
\newblock {\em arXiv:1904.06449}, 2019.

\bibitem{Liu2019www}
Xi~Liu, Ping-Chun Hsieh, Nick Duffield, Rui Chen, Muhe Xie, and Xidao Wen.
\newblock Real-time streaming graph embedding through local actions.
\newblock In {\em WWW}, pages 285--293, 2019.

\bibitem{mahdavi2019dynamic}
Sedigheh Mahdavi, Shima Khoshraftar, and Aijun An.
\newblock Dynamic joint variational graph autoencoders.
\newblock {\em arXiv:1910.01963}, 2019.

\bibitem{taheri2019predictive}
Aynaz Taheri and Tanya Berger-Wolf.
\newblock Predictive temporal embedding of dynamic graphs.
\newblock {\em ASONAM}, 2019.

\bibitem{hogun-ijcai19}
Hogun Park and Jennifer Neville.
\newblock Exploiting interaction links for node classification with deep graph
  neural networks.
\newblock In {\em IJCAI}, pages 3223--3230, 7 2019.

\bibitem{10.5555/1287369.1287400}
Gurmeet~Singh Manku and Rajeev Motwani.
\newblock Approximate frequency counts over data streams.
\newblock In {\em Proceedings of the 28th International Conference on Very
  Large Data Bases}, VLDB ’02, page 346–357. VLDB Endowment, 2002.

\bibitem{10.5555/314500.315083}
Phillip~B. Gibbons and Yossi Matias.
\newblock Synopsis data structures for massive data sets.
\newblock In {\em Proceedings of the Tenth Annual ACM-SIAM Symposium on
  Discrete Algorithms}, SODA ’99, page 909–910, USA, 1999. Society for
  Industrial and Applied Mathematics.

\bibitem{10.1145/859716.859719}
Cristian Estan and George Varghese.
\newblock New directions in traffic measurement and accounting: Focusing on the
  elephants, ignoring the mice.
\newblock {\em ACM Trans. Comput. Syst.}, 21(3):270–313, August 2003.

\bibitem{10.1145/2783258.2783279}
Edith Cohen.
\newblock Stream sampling for frequency cap statistics.
\newblock In {\em Proceedings of the 21th ACM SIGKDD International Conference
  on Knowledge Discovery and Data Mining}, KDD ’15, page 159–168, New York,
  NY, USA, 2015. Association for Computing Machinery.

\bibitem{10.1145/1064212.1064223}
Ken Keys, David Moore, and Cristian Estan.
\newblock A robust system for accurate real-time summaries of internet traffic.
\newblock In {\em Proceedings of the 2005 ACM SIGMETRICS International
  Conference on Measurement and Modeling of Computer Systems}, SIGMETRICS
  ’05, page 85–96, New York, NY, USA, 2005. Association for Computing
  Machinery.

\end{thebibliography}

\appendix
\section{Proofs of Theorems}
\label{sec:proofs}

\begin{proof}[Proof of Theorem~\ref{thm:nonu:C}]
Although (i) is special case of (ii), we prove (i) first then extend to (ii). We establish that
\be\label{eq:unb:iter}
\E[\hat C_{e,t}|\hat C_{e,t-1},Q]-C_{e,t}=\hat C_{e,t-1}-C_{e,t-1}\ee
for all members $Q$ of set of disjoint events whose union is identically true. Since $\hat
  C_{e,t_e-1}=C_{e,t_e-1}=0$ we then conclude that $\E[\hat C_{e,t}]=C_{e,t}$.
For $t_e\le s \le s'$, let $A\up 1_e(s)=\{e\notin \hat K_{s-1}\}$ (note
$A\up1_e(t_e)$ is identically true), let $A\up 2_e(s,s')$ denote the event $\{e\in\hat K_s\dots,\hat
  K_{s'}\}$, \ie, that $e$ is in sample at all times in $[s,s']$. Then $A\up1_e(s)A\up2_e(s,t-1)$ is the event that $e$ was sampled at time $s\le t-1$ and has remained in the reservoir up to and including time $t-1$.
  For each $t\ge t_e$ the union of the collection of events formed by
  $\{A\up1_e(s)A\up2_e(s,t-1):\ s\in[t_e,t-1]\}$, and $A\up1_e(t)$ is identically true.

(a) \textsl{Conditioning on $A\up1_e(t)$.}\, On $A\up1_e(t)$, $e_t\ne e$
implies $\hat C_{e,t}=\hat C_{e,t-1} = 0 = C_{e,t}-C_{e,t-1}$. On the
other hand $e_t=e$ implies $t\in \Omega$ since the arriving edge $e$ is not in the current sample. Further conditioning on $z_{e,t}=\min_{j\in
  \hat K_{j,t-1}}r_{j,t-1}$ then (\ref{eq:22}) 
  tells us
\be
\Pr[e\in \hat K_t | A\up1_e(t),z_{e,t}]=\Pr[u_e < w_{e,t}/z_{e,t}]=p_{e,t}
\ee
and hence regardless of $z_{e,t}$ we have
\be \E[\hat C_{e,t} | C_{e,t-1},A\up1_e(t),z_{e,t}]=\hat C_{e,t-1} + C_{e,t}-C_{e,t-1}
\ee 

(b) \textsl{Conditioning on $A\up1_e(s)A\up2_e(s,t-1)$ any
  $s\in[t_e,t-1]$.}\,  Under this condition $e\in \hat K_{t-1}$ and if
furthermore $e_t\in \hat K_{t-1}$ then $t\notin \Omega$ and the first
line in (\ref{eq:22}) holds.
Suppose instead $e_t\notin \hat K_{t-1}$ so that $t\in \Omega$. Let
$\cZ_{e}(s,t)=\{z_{e,s'}: s'\in [s,t]\cap\Omega\}$. Observing that
{\be
\Pr[A\up 2_e(s,t)|A\up1_e(s),\cZ_e(s,t)]=
\Pr[\bigcap_{s'\in [s,t]\cap\Omega}\{u_e< \frac{w_{e,s'}}{z_{e,s'}}\}]=p_{e,t}
\nonumber
\ee
}
then
\be
\Pr[e\in\hat K_t| A\up2_e(t-1,s)A\up1_e(s),\cZ_e(s,t)]
=
\frac{\Pr[A\up 2_e(s,t) | A\up1_e(s),\cZ_e(s,t)]}
{\Pr[A\up 2_e(s,t-1) | A\up1_e(s),\cZ_e(s,t-1)]}
=\frac{p_{e,t}}{p_{e,\omega(t)}} = q_{e,t} 
\ee
and hence
\be\label{eq:cond:AB}
\E[\hat C_{e,t}| \hat C_{e,t-1}, A\up1_e(s),\cZ_e(s,t)] =\hat C_{e,t-1}
\ee
independently of the conditions on the LHS of (\ref{eq:cond:AB}). As
noted above, $e\in \hat K_{t-1}$ when $A\up 2_e(s,t-1)$ is true in which case
$C_{e,t}=C_{e,t-1}$ and we recover (\ref{eq:unb:iter}). 

(ii) The proof employs a conditioning argument that generalizes a the property of Priority Sampling, namely, that inverse probability estimators of item samples are independent when conditioned on the priorities of other items; see \cite{duffield2007priority}. Our generalization establishes that link-product form estimators of subgraph multiplicities are unbiased. Let $z_{J,t}=\min_{j\in \hat K'_t\setminus J_t}r_{j,t}$. Then $J_t\subset \hat K_t$ iff $u_i\le w_{i,t}/z_{J,t}$ for all $i\in J_t$, in which case $z_{J,t}=z_{j,t}$ for all $j\in J_t$. A sufficient condition for Theorem~\ref{thm:nonu:C}(ii) is that
\be\label{eq:cond4} \E[\prod_{j\in J_t}\left(\hat C_{j,t}-C_{j,t}\right)|z_{J,t},J_{t-1}\subset \hat K_{t-1}]=0
\ee Since $ \hat C_{j,t}=\left(\hat C_{j,t-1} + c_{j,t}\right){I(u_j< w_{j,t}/z_{j,t})}/{q_{j,t}}$, then conditioning on $z_{J,t}$ and $\{J_{t-1}\subset \hat K_{t-1}\}$ fixes $\hat C_{j,t-1}, z_{j,t}$ and $q_{j,t}$ for $j\in J_t$. If we can show furthermore that the  $\{u_j: j\in J_t\}$ are independent 
under the same conditioning, then the conditional expectation (\ref{eq:cond4}) will factorize over $j\in J_t$ (the expectation and product may be interchanged) and the result follows from (i). 

We establish conditional independence by an inductive argument.
Denote $\cZ_t=\{z_{J,s}: s\in [t_J,t]\cap \Omega\}$
and assume conditional on $Z_{J,t-1}, J_{t-1}\subset \hat K_t$ that the $u_j: j\in J_{t-1}$ and mutually independent with each uniformly distributed on $(0,p_{j,t-1})$. Note the weights $w_{i,t}: i\in J_t$ determined by $J_{t-1}\subset \hat K_t$ since arrivals are non-random. Further conditioning on $J_t\in \hat K_t$ result in each $i$ being uniform on $(0, \min\{p_{i,t-1}, w_{i,t}/z_{J,t}\}] = (0, p_{i,t}]$ so completing the induction. The property is trivial at the time $t_i$ of first arrival of each edge. The form (iii) then follows inductively on the size of the subgraph $J$ on expanding the product and taking expectations.
\end{proof}

\begin{proof}[Proof of Theorem~\ref{thm:var-c}]
Here we specify $\hat V_{e,t}$ being commutable from the first $t$ arrivals to mean that it is $\cF_{t}$-measurable, where $\cF_{t}$ is set of random variables $\{u_{e_t}: t\in\Omega\}$ generated up to time $t$. 
By the Law of Total Variance

\bea \kern-10pt
\var(\hat C_{e,t})&=& \E[\var(\hat C_{e,t})|\cF_{t-1}]+
\var(\E[\hat C_{e,t}|\cF_{t-1}])\\
&=&
\E[\left(\frac{\hat C_{e,t-1}+c_{e,t}}{q_{e,t}}\right)^2 \var(I(B_e(z_t))| \cF_{t-1}]
+
\var(\hat C_{e,t-1} +c_{e,t}) \\
&=&
\E[\left(\frac{\hat C_{e,t-1}+c_{e,t}}{q_{e,t}}\right)^2 q_t(1-q_t)] + \var(\hat C_{e,t-1}) 
\eea

Since $\tilde V_{e,t} := \left(\frac{\hat C_{e,t-1}+c_{e,t}}{q_{e,t}}\right)^2 q_t(1-q_t)$ is 
$\cF_{t-1}$=measurable, then $\tilde V_{e,t} I(B_e(z_t))/q_{e,t}$ is $\cF_{t}$-measurable, and 
\be\E[\frac{I(B_e(z_t))}{q_{e,t}} \tilde V_{e,t}]=
\E[\E[\frac{I(B_e(z_t))}{q_{e,t}}|\cF_{e,t}] \tilde V_{e,t}]=
\E[\tilde V_{e,t}]
\ee
and similarly by assumption on $\hat V_{e,t-1}$,
\be
\E[\frac{I(B_e(z_t))}{q_{e,t}} \hat V_{e,t-1}]=\E[\hat V_{e,t-1}]=\var(\hat C_{e,t-1})
\ee
\end{proof}

\begin{proof}[Proof of Theorem~\ref{thm:decay}]
(i) follows by linearity of expectation, while (ii) follows by substitution in (\ref{eq:var-c}).
\end{proof}

\end{document}